\def\Box{\leavevmode\vbox{\hrule
     \hbox{\vrule\kern4pt\vbox{\kern4pt}%
           \vrule}\hrule}}
\def\paragraph#1{{\bf #1\ }}
\numberwithin{equation}{section}
\newtheorem{lemma}{Lemma}[section]
\newtheorem{theorem}[lemma]{Theorem}
\newtheorem{definition}[lemma]{Definition}
\newtheorem{proposition}[lemma]{Proposition}
\newtheorem{remark}{Remark}[section]
\newcommand{\lp}{\left(}
\newcommand{\rp}{\right)}
\newcommand{\eps}{\varepsilon}
\newcommand{\R}{\mathbb{R}}
\newcommand{\Id}{\mbox{Id}}
\newcommand{\beqar}{\begin{eqnarray*}}
\newcommand{\eeqar}{\end{eqnarray*}}
\newcommand{\beqarl}{\begin{eqnarray}}
\newcommand{\eeqarl}{\end{eqnarray}}
\newcommand{\be}{\begin{equation}}
\newcommand{\ee}{\end{equation}}
\title{Coupled Self-Organized Hydrodynamics and Stokes models for suspensions of active particles }
\author{Pierre Degond$^1$, Sara Merino-Aceituno$^1$, Fabien Vergnet$^2$ and Hui Yu$^3$ }
\date{\vspace{-5ex}}
\begin{document}
\maketitle
\medskip
{\footnotesize
 \centerline{1. Department of Mathematics, Imperial College London}
   \centerline{London, SW7 2AZ, United Kingdom}
 \centerline{pdegond@imperial.ac.uk, s.merino-aceituno@imperial.ac.uk}
}

\medskip
{\footnotesize
 \centerline{2. Laboratoire de math\'ematiques d'Orsay (LMO), Universit\'e Paris-Sud, CNRS, Université Paris-Saclay}
    \centerline{15 rue Georges Cl\'emenceau, 91405 Orsay Cedex}
 \centerline{fabien.vergnet@math.u-psud.fr}
}

\medskip
{\footnotesize
 \centerline{3. Institut f\"ur Geometrie und Praktische Mathematik, RWTH Aachen University}
   \centerline{Aachen, 52062, Germany}
 \centerline{hyu@igpm.rwth-aachen.de}
}

\begin{abstract}

We derive macroscopic dynamics for self-propelled particles in a fluid. The starting point is a coupled Vicsek-Stokes system. The Vicsek model describes self-propelled agents interacting through alignment. It provides a phenomenological description of hydrodynamic interactions between agents at high density. Stokes equations describe a low Reynolds number fluid. These two dynamics are coupled by the interaction between the agents and the fluid. The fluid contributes to rotating the particles through Jeffery's equation. Particle self-propulsion induces a force dipole on the fluid. After coarse-graining we obtain a coupled Self-Organised Hydrodynamics (SOH)-Stokes system. We perform a linear stability analysis for this system which shows that both pullers and pushers have unstable modes. We conclude by providing extensions of the Vicsek-Stokes model including short-distance repulsion, finite particle inertia and finite Reynolds number fluid regime.

\end{abstract}

\medskip

\medskip
\noindent
{\bf AMS Subject classification: }35L60, 35L65, 35P10, 35Q70, 82C22, 82C70, 82C80, 92D50.

\medskip
\noindent
{\bf Key words: }collective dynamics; self-organization; hydrodynamic limit; alignment interaction; Vicsek model; low Reynolds number; Jeffery's equation; volume exclusion; stability analysis; finite inertia; finite Reynolds number.

\section{Introduction}

Self-organised motion is ubiquitous in nature. It corresponds to the formation of large-scale coherent structures that emerge from the many-interactions between individuals without leader. Well-known examples are bird flocks, fish schools or insect swarms. However, self-organisation also takes place at the microscopic level, for example in bacterial suspensions and sperm dynamics (see e.g. Ref. \cite{creppy2016symmetry, zhang2010collective} and the reviews \cite{elgeti2015physics, koch2011collective, ramaswamy2010mechanics}). In these cases, the environment, typically a viscous fluid, plays a key role in the dynamics. 

In this paper we investigate self-organised motion of self-propelled particles (which we will refer to as `swimmers') in a viscous fluid. The main difficulty in studying these systems comes from the complex mechanical interplay between the swimmers and the fluid. Particularly, highly non-linear interactions occur between neighbouring swimmers through the perturbations that their motions create in the surrounding fluid. 
While these interactions may be treated through far-field expansions in dilute suspensions \cite{happel2012low}, they require a much more complex treatment when the density of swimmers is high.
 Here we assume that, as a result of these swimmer-swimmer interactions, the swimmers align their direction of motion. In view of this, we adopt the Vicsek model for self-propelled particles undergoing local alignment to account for these swimmer-swimmer interactions in a phenomenological way. We then couple this model with the Stokes equation for the surrounding viscous fluid by taking into account the interactions between the swimmers and the fluid. The main goal is the derivation of macroscopic equations for this coupled system in terms of the time-evolution of the velocity of the fluid, on the one hand, and the swimmers' density and mean direction of motion, on the other hand.
 
 The coupling terms considered here coincide with the ones in the kinetic Doi-Saintillan-Shelley model, which models active and passive rod-like dilute suspensions, \cite{chen2013global,saintillan2,saintillan1}. This kinetic equation extends the Doi model \cite{doi,doibook} for liquid crystals (corresponding to passive rod-like or ellipsoidal particle suspensions) to active agents. In Ref. \cite{chen2013global}, the authors prove the existence of global weak entropic solutions for this equation and Ref. \cite{ezhilan2013instabilities} offers a closure approximation to obtain approximate macroscopic equations.  However, the Doi-Saintillan-Shelley model does not include direct swimmer-swimmer interactions as it assumes a regime with a rather low density of swimmers. The Vicsek-Stokes coupling presented here is designed to handle larger densities of active particles. An extension of the Doi-Saintillan-Shelley model for high concentration of agents can be found in Ref. \cite{ezhilan2013instabilities}; the swimmer-swimmer steric interactions are modelled through nematic interactions. Here we consider polar interactions as encompassed in the Vicsek model rather than nematic ones. Indeed, polar interactions seem more appropriate to some types of suspensions such as sperm \cite{creppy2016symmetry}. Additionally,	alignment interactions are not sufficient to prevent clustering in some high density situations. To prevent them, it is necessary  to add short-range repulsion, as we do in  Sec. \ref{sec:repulsion} following the works in Ref. \cite{degond2015macroscopic}. 

The Vicsek model \cite{vicsek2012collective} is a particle system where the position and velocity orientation of each individual particle is followed over time. It describes  self-propelled particles moving at a constant speed and trying to align their direction of motion with their neighbours, up to some noise.  There exists a variety of mathematical models for collective dynamics, see Ref. \cite[Sec. V]{marchetti2013hydrodynamics} and references therein as well as the review \cite{vicsek2012collective}. The Vicsek model is an agent based model and, consequently, a microscopic description. By contrast, Stokes equations form a continuum model for the evolution of the fluid velocity and pressure fields, which are macroscopic quantities. Therefore, the coupled Vicsek-Stokes model presented here is a hybrid microscopic/macroscopic system. This is legitimate in view of the difference in size between water molecules and the swimmers ($10^{-10}$ metres for the former, and of the order of $10^{-5}$ metres for the latter). 

The main goal of this paper is to provide a coarse-grained description of the hybrid Vicsek-Stokes dynamics in the form of a fully macroscopic description in both the fluid and the swimmers. The coarse-graining for the Vicsek model alone leads to the `Self-Organised Hydrodynamics' (SOH) equations derived in Refs. \cite{degond2015continuum,Hydro_limit}. The SOH model is a system of continuum equations for the density and mean velocity orientation of the swimmers. Here, for the first time, we provide the coarse-graining of the hybrid Vicsek-Stokes model, leading to the coupled SOH-Stokes model. The resulting model is a fully coupled model for the agents' continuum density and mean velocity orientation on the one hand and the fluid velocity and pressure fields on the other hand. The coarse-graining methodology  is based on the Generalised Collision Invariant concept introduced in Ref. \cite{degond2015continuum}. This technique has already been successfully applied to a wide range of models inspired by the Vicsek model, see Refs. \cite{degond2015phase,degond2016new,degond2017quaternions,degond2015multi}.

 The rigorous derivation of macroscopic dynamics establishes a clear link between the microscopic and macroscopic scales and, in particular, between the parameters of the two systems. Moreover, microscopic simulations tend to be very costly for a large number of individuals. Macroscopic simulations are much more cost-effective. In kinetic theory, the coarse-graining from particle dynamics to macroscopic dynamics is carried out with an intermediate step called the kinetic equation (or mean-field equation). The kinetic equation gives the distribution of a `typical particle' (if such exists) when the number of particles becomes large. Here we will derive the kinetic equation from the microscopic Vicsek-Stokes model in Sec. \ref{eq:kinetic_eq}. From the kinetic equation, we will derive then the macroscopic coupled SOH-Stokes system (Sec. \ref{sec:macro_limit}). For some general reviews on the mathematical theory of coarse-graining, the reader is referred to Refs. \cite{cercignani2013mathematical,degond2004macroscopic,sone2012kinetic}.

The complexity of the dynamics of self-organised motion in a fluid renders the rigorous macroscopic derivation extremely hard in general.  Some attempts can be found in Refs. \cite{baskaran2009statistical} and \cite[Sec. V]{marchetti2013hydrodynamics}. For the case of suspensions of passive particles, the Doi-Onsager model has been coarse-grained into the Ericksen-Leslie system, see Refs. \cite{han2014microscopic,wang2015small,weinan2006molecular}. In the case of the Cuker-Smale model (a different model for collective dynamics), it has been coupled to a Navier-Stokes equation and coarse-grained in Ref. \cite{carrillo2016analysis}.  Related works couple chemotaxis with fluid equations, see for example Ref. \cite{liu2011coupled}. A coarse-graining has been carried out for the chemotaxis-Navier-Stokes equations in Ref. \cite{zhang2016inviscid}, see also Ref. \cite{bellomo2016multiscale} for a related result. 
 
Another advantage of coarse-grained equations is that their stability analysis is far more manageable than that of microscopic models. To illustrate this effectiveness, we perform the linear stability analysis of the SOH-Stokes model. Obvious stationary solutions of the SOH-Stokes model consist of uniform (space-independent) swimmer density and mean orientation fields as well as uniform fluid velocity and pressure. We linearise the SOH-Stokes system around these stationary solutions, meaning that we consider small perturbation of a uniform state. Note that the SOH model describes aligned states as the swimmer distribution is given by a von Mises distribution. So this analysis gives access to the stability of suspensions in their aligned state only. The investigation of the stability of the isotropic state is deferred to future work. 

The stability analysis reveals that both pusher and puller types of swimmers (explained in Sec. \ref{sec:coupling_terms}) have unstable modes. This is consistent with previous studies \cite{saintillan2} which showed that both pushers and pullers are unstable to perturbations of an aligned state. Note that in \cite{saintillan2} nematic interactions were considered, while we deal with polar interactions. Additionally, the aligned state in \cite{saintillan2} is a Dirac delta in the orientation while ours is a von Mises distribution; we show that instability happens for all values of the angular dispersion around the alignment direction. We also notice that pullers are stable if they are slender rod particles. For both pushers and pullers, the instability only prevails for small $|k|$ modes (or large wavelength). The largest growth rate takes place in the limit when the mode $k\to 0$, which means that patterns induced by the instability will have roughly the same size as the system. 

Alignment interaction is not sufficient to prevent the appearance of large-concentration clusters in general \cite{czirok1996formation}. So, in cases where such clusters are not observed, it is likely that short-range repulsion effects take place in addition to alignment. Following Ref. \cite{degond2015macroscopic}, we will investigate how both the micro and macroscopic models can be extended through the introduction of a short-range repulsion force. Besides, when the particle mass and size are larger, for example for fish, it is not legitimate to neglect the particle inertia and the fluid Reynolds number any longer. Therefore, we will show how to extend the micro and macroscopic models to include such finite size effects. 

The document is structured as follows. In the next section we present the individual based model for the Vicsek-Stokes coupling and discuss the main result corresponding to its hydrodynamic limit. In Sec. \ref{sec:mean_field} we present the mean-field limit, the scaling considered, and the Generalised Collision Invariant concept. In Sec. \ref{sec:macro_limit} we prove the main result.  Sec. \ref{sec:stability} shows the stability analysis. In Sec. \ref{sec:refinements} we extend the model to account for short-range repulsion and finite inertia and Reynolds number. Finally, we conclude in Sec. \ref{sec:conclusion} discussing some perspectives on this problem.

\section{The model and discussion of the main results}
\label{sec:discussion}

\subsection{The Vicsek-Stokes coupled dynamics.} 
 
The dynamics of the viscous fluid follow Stokes equations. We couple these two models by incorporating the interaction mechanisms between the agents and the fluid. The dynamics of $N$ agents are given by the evolution of $(X_i(t), \omega_i(t))_{i\in\{1,\hdots,N\}}$ as a function of time $t\geq 0$, where $X_i(t)\in \R^3$ is the position of the $i$-th agent and $\omega_i(t)\in \mathbb{S}^{2}$ (the 2-dimensional sphere) is a unitary vector giving its direction of motion. We denote by $v=v(x,t)\in \R^3$ the fluid velocity at position $x\in\R^3$ at time $t$ and $p(x,t)\in \R$ its pressure. Here we assume that the fluid density remains constant.  In Sec. \ref{sec:derivationIBM} (see Remark \ref{rem:derivation_VS}) we derive the following Vicsek-Stokes coupled dynamics, where all the quantities are dimensionless and where the stochastic differential equation \eqref{eq:IBM2} must be understood in the Stratonovich sense, where the unknowns are $(X_i(t), \omega_i(t))_{i\in \{1,\hdots,N\}}$, $v(x,t)$, $p(x,t)$:
\begin{subequations}\label{eq:Vicsek_Stokes}
\begin{numcases}{}
			dX_i = u_i dt= v(X_i,t) dt + a\omega_i dt , \label{eq:IBM1}\\
			d\omega_i = P_{\omega_i^{\perp}}\circ \Big[\nu \overline{\omega_i}dt  + \sqrt{2D}\,dB_{t}^{i}\,\, +\big(\lambda S(v) + A(v)\big)\omega_i dt\,\,\Big], \label{eq:IBM2}\\
			\bar\omega_i = \frac{J_i}{|J_i|} \text{ with } J_i = \sum_{k=1}^N  K\lp \frac{|X_i-X_{k}|}{R} \rp\omega_k, \label{eq:IBM2.1}\\
			-\Delta_x v + \nabla_x p =  - \frac{b}{N}\sum_{i=1}^{N}\lp \omega_i\otimes\omega_i-\frac{1}{3}\Id\rp\nabla_x \delta_{X_i(t)},\label{eq:IBM3}\\ 
			\nabla_x \cdot v = 0. \label{eq:IBM4}
		\end{numcases}
\end{subequations}
In this system $a,\nu,D, \lambda,R$ and $b$ are constants. The symbol `$\otimes$' denotes the tensorial product and `$\Id$' the 3$\times$3 identity matrix. The symbol $P_{\omega_i^\perp}=\Id -\omega_i\otimes\omega_i$ gives the orthonormal projection operator onto the sphere $\mathbb{S}^2$ at $\omega_i$; the `$\circ$' symbol following it indicates that the Stochastic Differential Equation \eqref{eq:IBM2} has to be understood in the Stratonovich sense. The terms $(B^i_t)_{t\geq0}$, $i=1,\hdots, N$ are independent Brownian motions in $\R^3$. The terms $S,A$ are matrices that will be defined later. The operators $\Delta_x$, $\nabla_x$, $\nabla_x \cdot$ indicate the Laplacian, the  gradient and the divergence in $\R^3$, respectively. The symbol $\delta_X$ is the delta distribution in $\R^3$ at $X\in \R^3$. Finally, $K=K(r)\geq 0$, $r\geq 0$, is a given sensing function.
		
\medskip		
We explain first the meaning of the equations without the coupling terms. Eqs. \eqref{eq:IBM1}-\eqref{eq:IBM2.1} without the terms involving the velocity of the fluid $v$ correspond to the Vicsek model: each agent $i$ moves at a constant speed $a>0$ in the direction $\omega_i$ while trying to adopt the average direction of motion of its neighbours. This averaged direction is given by $\bar \omega_i$ in Eq. \eqref{eq:IBM2.1}.  The positive kernel $K$ weights the influence of the neighbouring agents and
 the constant $R>0$ gives the typical interaction range between agents. The intensity of alignment is given by $\nu>0$. While trying to align, agents make errors. This is modelled via a noise term $\sqrt{2D}dB^i_t$, where  $D>0$ is the standard deviation of this random motion per unit of time. The presence of projection operator $P_{\omega_{i}^\perp}$ ensures that $|\omega_i(t)|=1$ for all times (since the stochastic differential equations \eqref{eq:IBM1}-\eqref{eq:IBM2} are interpreted in the Stratonovich sense, see Ref. \cite{hsu2002stochastic}).

Eqs. \eqref{eq:IBM3} and \eqref{eq:IBM4} are the Stokes equation for the velocity of the fluid $v$ with a time-dependent force term at the right-hand  side of Eq. \eqref{eq:IBM3} whose meaning will be explained in the following section. The  hydrostatic pressure of the fluid $p=p(x,t)$ is the Lagrange multiplier of the incompressibility constraint \eqref{eq:IBM4}.

\bigskip

\noindent \paragraph{The coupling terms.} 
\label{sec:coupling_terms}

\begin{enumerate}[leftmargin=*]
\item \textbf{Effect of the fluid on the agents.}
	\begin{enumerate}[leftmargin=10pt]
		\item[i)] \textit{Effect on the agents' velocity:} In the limit of zero particle inertia there is an instantaneous relaxation of the passive part  of the particles' velocity (i.e., the particle velocity minus the self-propulsion velocity) to the velocity of the fluid (see Appendix \ref{sec:derivationIBM}). As a consequence, the term $u_i$ in Eq. \eqref{eq:IBM1}, giving the total velocity of agent~$i$, is the sum of the fluid velocity $v$ and the agent's self-propelled velocity $a\omega_i$.
		
		\item[ii)] \textit{Effect on the agents' orientation:} This is expressed by the term $\lp \lambda S(v) +A(v)\rp \omega_i $ in Eq. \eqref{eq:IBM2}, where the matrices $A$ and $S$ are  the antisymmetric and symmetric parts of the linear flow $\nabla_x v$ (which is a matrix with components $(\nabla_x v)_{ij}=\partial_{x_i}v_j$, $i,j=1,2,3$), respectively:
\begin{align}
&A(v)= \frac{1}{2}\lp\nabla_x v - (\nabla_x v)^T \rp, \label{eq:A}\\
&S(v) = \frac{1}{2}\lp\nabla_x v+ (\nabla_x v)^T \rp, \label{eq:S}
\end{align}
where the exponent `$T$' indicates the transpose of the matrix. This term encompasses Jeffery's equation, which describes the effect of a viscous fluid on a spheroidal passive particle. In a spatially homogeneous flow where $\nabla_x v$ is constant, these equations give the motion of the principal axis of spheroidal particles, as follows:
\begin{equation}\label{eq:Jeffery}
\frac{d\omega_i}{dt}= P_{\omega_i^\perp} \lp\lambda S(v)+A(v)  \rp\omega_i = \nabla_\omega\left[\lambda \frac{1}{2}\omega \cdot S\omega \right] + \frac{1}{2}(\nabla_x \times v   ) \times \omega, 
\end{equation}		
where $\nabla_\omega$ is the gradient on the sphere $\mathbb{S}^2$; $\nabla_x\times$ denotes the curl; and the symbols `$\cdot$', `$\times$' denote the inner product and the cross product in $\R^3$, respectively. The matrix $S$ describes straining forces in the fluid which forces a passive particle  to orient in a  preferred direction called `local extensional axis', given by the eigenvector of maximal eigenvalue of $S$. The matrix $A$ describes shear effects in the fluid that have the effect of rotating the suspended particle around an axis parallel to the vorticity $\nabla_x \times v$. The parameter $\lambda\in[-1,1]$ is a shape parameter: for a spheroidal particle with aspect ratio $\chi$, we have $\lambda = (\chi^2-1)/(\chi^2+1)$. The limit $\lambda \to 1$ corresponds to a slender rod-like particle, the limit $\lambda \to -1$ corresponds to a thin disk and the case $\lambda = 0$ corresponds to a sphere. For an  explanation of Jeffery's equation see e.g Refs. \cite{chen2013global,jeffery}.

	\end{enumerate}

\item \textbf{Effect of the agents on the fluid.} We consider two forces produced by agents that act on the fluid: 
		\begin{enumerate}[leftmargin=10pt]
			\item[i)] \textit{Drag force exerted on the fluid by the motion of the agents:} The motion of the agents creates a drag force on the fluid. However, in the limit of zero particle inertia this force vanishes, as detailed in Sec. \ref{sec:derivationIBM}. So we do not take it into account here. There is a symmetric effect of the fluid on the particles which in this limit produces instantaneous relaxation of the passive part of the velocity (the particle velocity minus the self-propulsion velocity) to the fluid velocity hence justifying Eq. \eqref{eq:IBM1}, see above.
			\item[ii)] \textit{The self-propulsion force:} The source term that appears on the right-hand side  of the equation for $v$ \eqref{eq:IBM3} describes the influence of the self-propulsion force of the agents on the fluid. The term $\nabla_x \delta_{X_i(t)}$ denotes the gradient of the Dirac delta $\delta_{X_i(t)}$ and 	it is defined in  weak form for any vector test function $\vec{\varphi}$ by:
			\begin{eqnarray*}
			\langle \nabla_x \delta_{X_i(t)}, \vec{\varphi} \rangle= - \langle \delta_{X_i(t)}, \nabla_x \cdot \vec{\varphi}\rangle = - \nabla_x \cdot \vec{\varphi}(X_i(t)),
			\end{eqnarray*}
where $\langle T ,\varphi\rangle$ denotes the duality bracket between a distribution $T$ and a test function~$\varphi$.
			Suppose that an agent swims by  pushing with its tail with a force $\vec{F}$ in the opposite direction of motion. Then the head also exerts a force $-\vec{F}$ on the fluid. This type of swimmer is called a `pusher'. If the centre of the swimmer is in location $X_i$ and it has a length $\ell$, then the pushing force is applied at location $X_i-\frac{\ell}{2}\omega_i$ while the force of the head is at $X_i+\frac{\ell}{2}\omega_i$, see Fig. \ref{Fig:pusher}. Since $\vec{F}$ and $-\vec{F}$ are applied at different points they do not cancel, this is referred as a `force dipole' and in this case
			$$\vec{F}= |\vec{F}|\omega_i \left[\delta_{X_i+\frac{\ell}{2}\omega_i}-\delta_{X_i-\frac{\ell}{2}\omega_i}\right].$$
			From a Taylor expansion for small $\ell$, this leads to the right-hand side term in Eq. \eqref{eq:IBM3}, with $b>0$ (after dividing by the fluid viscosity). We do not give here the details of this derivation but refer the reader to Refs. \cite{chen2013global,saintillan2,saintillan1} and references therein. Another common way of swimming is by using the arms. Then the swimmer is called `puller' and in this case $b<0$ in Eq.\eqref{eq:IBM3}. For a drawing of a puller see Ref. \cite[Fig. 3]{chen2013global}.
			
\begin{figure}
\centering
\includegraphics[scale=0.5]{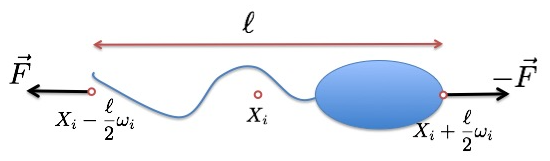}
\caption{Shape of a pusher. The tail exerts a force $\vec{F}$ on the fluid in the direction of the motion $\omega_i$ while the head exerts a force $-\vec{F}$. Since these two forces are not applied on the some point, they do not cancel.}
\label{Fig:pusher}
\end{figure}

		\end{enumerate}

\item \textbf{Other interactions and effects.} The model presented here is a simplification of the actual dynamics. Therefore, it could be enriched by taking into consideration other mechanical effects, like noise in the spatial variable (see the Doi-Saintillan-Shelley model \cite{chen2013global}) and extra forces acting on the fluid due to the inextensibility of the particles (resistance to stretching and compression) when the particle sizes are not supposed infinitesimally small, see Ref. \cite{ezhilan2013instabilities}. We will consider refined versions of this model in Sec. \ref{sec:repulsion} where we add volume exclusion between agents, or in Sec. \ref{sec:derivationIBM} where  we include both fluid and particle inertia.
\end{enumerate}

As a by-product, Sec. \ref{sec:derivationIBM} provides a derivation of the Vicsek-Stokes system \eqref{eq:IBM1}-\eqref{eq:IBM4} from a Vicsek-Navier-Stokes coupling.

\subsection{Macroscopic coupled dynamics: the SOH-Stokes model.}
In Sec. \ref{sec:macro_limit} (Th. \ref{th:hydro_limit}), from the Vicsek-Stokes dynamics \eqref{eq:IBM1}-\eqref{eq:IBM4}, we derive the following macroscopic system , that we refer to as the `Self-Organised Hydrodynamics-Stokes model' (SOH-Stokes). It gives  the time-evolution of the spatial density of agents $\rho=\rho(x,t)$,  the mean direction of motion $\Omega=\Omega(x,t)$, the velocity of the fluid $v=v(x,t)$ and the pressure~$p=p(x,t)$:
\begin{subequations}\label{eq:macro_SOH_Stokes}
\begin{numcases}{}
		 \partial_t\rho + \nabla_x\cdot(\rho U)  = 0, \label{eq:continuity_equation} \\
		 \rho\partial_{t}\Omega + \rho (V\cdot\nabla_x)\Omega +\frac{a}{\kappa} P_{\Omega^{\perp}}\nabla_x\rho= \gamma P_{\Omega^{\perp}}\Delta_x(\rho\Omega) +\rho P_{\Omega^{\perp}} \lp\tilde\lambda S(v) +A(v)\rp \Omega ,\qquad \label{eq:hydro2} \\
		-\Delta_x v + \nabla_x p =  - b\nabla_x\cdot\lp \rho \mathcal{Q}(\Omega) \rp, \label{eq:hydro3}\\ 
		\nabla_x\cdot v=0 \label{eq:hydro4}, 
	\end{numcases}
\end{subequations}
	where 
	$$
		U = a c_1 \Omega +v, \quad
		V = a c_2 \Omega + v, \quad
		\mathcal{Q}(\Omega) = c_4\lp \Omega \otimes \Omega-\frac{1}{3}\Id \rp, 
		$$
		\begin{equation}
				\gamma = k_0\nu \lp c_2 + \frac{2}{\kappa} \rp, \quad
		\tilde \lambda = \lambda \lambda_0, \quad \lambda_0= \frac{6}{\kappa}c_2 + c_3-1 , \label{eq:lambda0}
	\end{equation}
	and where the constants $c_1,\hdots, c_4$ and $k_0$ are given by Eqs. \eqref{eq:c1_th}--\eqref{eq:c4}, \eqref{eq:def_k0}, and where $\kappa=\nu/D$.

	The first two Eqs. \eqref{eq:continuity_equation}-\eqref{eq:hydro2} provide the time-evolution of the density of the agents $\rho=\rho(x,t)$ and their mean direction of motion $\Omega=\Omega(x,t)$, respectively. Without the terms involving the velocity of the fluid $v$, Eqs. \eqref{eq:continuity_equation}-\eqref{eq:hydro2} correspond to the Self-Organised Hydrodynamics (SOH) macroscopic equations for the Vicsek model, obtained in Refs. \cite{degond2015phase,degond2008continuum,Hydro_limit} (the diffusive term is derived in Ref. \cite{degond2015phase}). The SOH system resembles a fluid dynamics equation, particularly, a compressible Navier-Stokes equation, with the differences that $c_1\neq c_2$ and the `velocity' $\Omega$ is constrained to be of norm one with the presence of the projection operator $P_{\Omega^\perp}$ in the pressure $\nabla_x\rho$ and the diffusion $\Delta_x(\rho\Omega)$. This projection operator precisely ensures that $|\Omega|=1$ at all times (provided that $|\Omega|_{t=0}=~1$), but, as a consequence, the equation is not conservative, meaning that the terms involving spatial derivatives cannot be written as the spatial divergence of a flux function.
Eqs. \eqref{eq:hydro3}--\eqref{eq:hydro4} without the right-hand side in \eqref{eq:hydro3} correspond to the Stokes equation.

\paragraph{The coupling terms.} 		
		The terms involving the velocity of the fluid $v$ in Eqs.  \eqref{eq:continuity_equation}-\eqref{eq:hydro2} express the effect of the fluid on the particles. As expected, the continuity equation \eqref{eq:continuity_equation} for the density $\rho$ has a velocity $U$ which is the sum of the local average self-propulsion velocity ($cc_1\Omega$) and the velocity of the fluid $v$. Also the convective velocity in Eq. \eqref{eq:hydro2} for~$\Omega$ is a weighted sum of $\Omega$ and the velocity of the fluid. The last term in Eq. \eqref{eq:hydro2} reflects how Jeffery's equation on individual swimmers \eqref{eq:Jeffery} translate to the population level. The terms resulting from Jeffery's equation describe the propensity of $\Omega$  to align in the direction of the so-called local extensional axis, given by the eigenvector of maximal eigenvalue of $S$, as well as to rotate about the vorticity axis, parallel to $\nabla_x \times v$. Notice, though, that the shape parameter $\tilde\lambda \neq \lambda$. Numerically, (see Fig. \ref{Fig:lambda}) we observe that $\lambda_0\in[0,1]$. This implies that $\tilde\lambda\in[-1,1]$; $\tilde \lambda$ has the same sign as $\lambda$; and $|\tilde \lambda|\leq |\lambda|$. Therefore, Jeffery's equation for the agents' individual orientations is coarse-grained into another Jeffery's equation  for the local mean  orientation, but the `mean particle shape' associated with $\tilde \lambda$ is different from the individuals' shapes associated with $\lambda$. Particularly, when $\kappa\to 0$ (large noise regime), we get $\tilde\lambda=0$, which corresponds to the shape of a sphere, and when $\kappa\to \infty$ (low noise regime), we get $\tilde\lambda=\lambda$, and we recover the original shape parameter.
		
Finally, the right-hand side in Eq. \eqref{eq:hydro3} gives the influence of the agents on the fluid. It involves the divergence of the deviatoric stress tensor $\mathcal{Q}(\Omega)$, i.e., the contribution of the swimmers to the extra-stress, which provides its non-Newtonian character to the fluid. This term results from  the coarse-graining of the right hand side of Eq. \eqref{eq:IBM3}. Numerically we observe (Fig. \ref{Fig:c4}) that $c_4>0$, which shows that the coarse-graining of a population of pushers preserves the `pusher' behaviour, as it should.

The reader is referred to Sec. \ref{sec:refinements} for some extensions of this model.

\begin{figure*}[t!]
\begin{subfigure}[b]{0.5\textwidth}
\includegraphics[width=\textwidth]{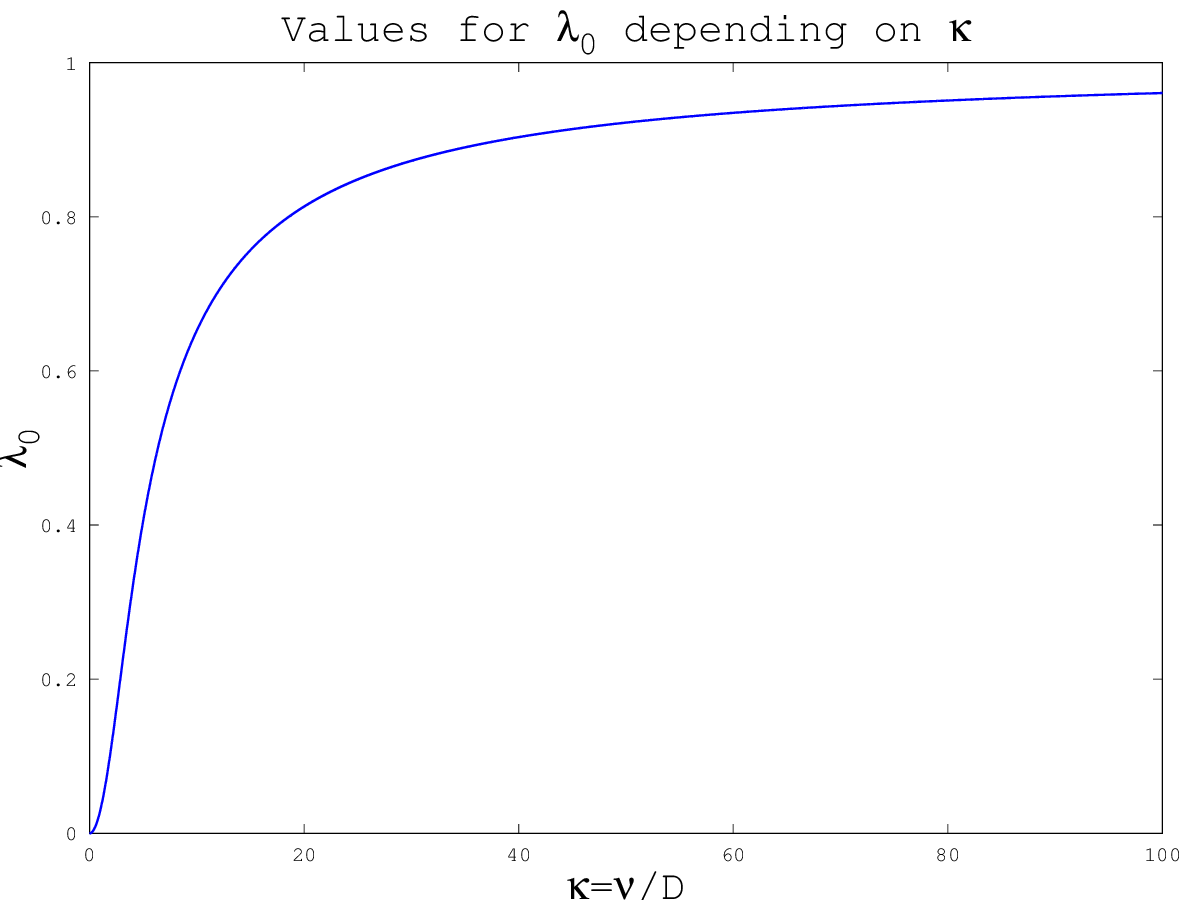}
\caption{ }
\label{Fig:lambda}
\end{subfigure}
\hfill
\begin{subfigure}[b]{0.5\textwidth}
\includegraphics[width=\textwidth]{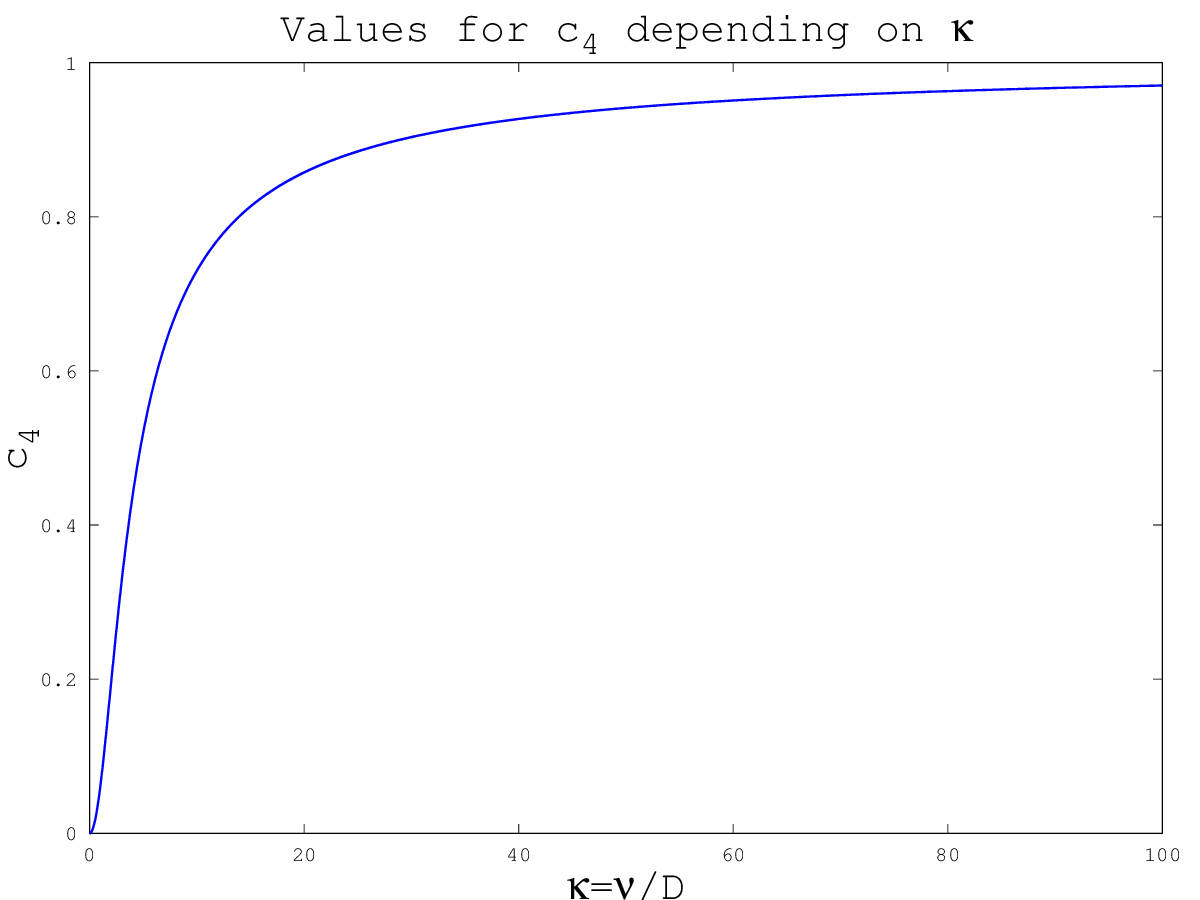}
\caption{ }
\label{Fig:c4}
\end{subfigure}
\caption{\ref{Fig:lambda}: Plot of the values of $\lambda_0$ in Eq. \eqref{eq:lambda0} as a function of $\kappa$ (done in dimension 2, where the generalised collision invariant has an explicit form \cite{frouvelle2012continuum}, corresponding to $\lambda_0=\frac{4}{\kappa}c_2+c_3-1$). \ref{Fig:c4}. Plot of the values of $c_4$ in Eq. \eqref{eq:c4}.}
\end{figure*}

\section{The mean-field limit equation}
 \label{sec:mean_field}
  
 As explained in the introduction, the derivation of the macroscopic equations is carried out with an intermediate step: the kinetic or mean-field equations.
The mean-field limit of System \eqref{eq:IBM1}-\eqref{eq:IBM4} provides the time-evolution of the distribution function  $f=f(x,\omega,t)$ in space and orientation  of a typical agent.  From the equation on $f$, we will derive the macroscopic equations in Sec. \ref{sec:macro_limit}.
For the case of the Vicsek model alone, a rigorous proof of the mean-field limit has been obtained in Ref. \cite{bolley2012mean} when there is no normalisation of $\bar \omega_i$ in Eq. \eqref{eq:IBM2.1}, i.e., when $\bar \omega_i=J_i$.	Following the proof in \cite{bolley2012mean} formally we have the:
\begin{proposition}[(Formal) Mean-field limit]
\label{prop:mean-field}
Consider the empirical distribution associated to the dynamics of the agents in Eqs. \eqref{eq:IBM1}-\eqref{eq:IBM4}, i.e.:
\begin{equation} \label{eq:empirical_distribution}
			f^{N}(x,\omega,t)=\frac{1}{N}\sum_{i=1}^{N}\delta_{x_i(t)}(x)\otimes\delta_{\omega_i(t)}(\omega),
		\end{equation}
		where $\delta_{x_i(t)}(x)$ and $\delta_{\omega_i(t)}(\omega)$ denote the Dirac delta at $x_i(t)$ and $\omega_i(t)$ on $\R^3$  and $\mathbb{S}^2$, respectively. Assume that  $f^N$ converges weakly to $f=f(x,\omega,t)$ as the number of agents $N\to \infty$. Then, the limit $f$ satisfies the following system:
		\begin{equation}
			\begin{cases}
			&\partial_t f + \nabla_x\cdot(u_{(f,v)}f) + \nabla_{\omega}\cdot\Big( \big[P_{\omega^{\perp}} \left\{\nu \overline{\omega}_{f} +\lp\lambda S(v)+A(v)\rp\omega\right\}\big]f\Big) = D\Delta_{\omega}f, \label{eq:kinetic_eq}\\
			&u_{(f,v)}(x,\omega,t)=v(x,t) + a\omega, \\
			&-\Delta_x v + \nabla_x p =  - b\nabla_x\cdot Q_{f},\\
			&\nabla_x \cdot v =0,
			\end{cases}
		\end{equation}
		where  $\nabla_{\omega}\cdot$ and $\Delta_{\omega}$ stand for the divergence  and the laplacian in $\mathbb{S}^2$, respectively; and where
		\begin{align}
			&\rho_{f}(x,t)=\int_{\mathbb{S}}{f(x,\omega,t)d\omega},\label{eq:density_local}\\
			&\overline{\omega}_{f}(x,t)=\frac{J_{f}(x,t)}{|J_{f}(x,t)|},\label{eq:bar-omega-f}\\
			&J_{f}(x,t)= 	\int_{\mathbb{S}^2\times\mathbb{R}^{3}}{K\left(\frac{|x-y|}{R}\right)\omega f(y,\omega,t) \,dyd\omega}, \label{eq:J_kinetic}\\
			&Q_{f}(x,t)=\int_{\mathbb{S}^{2}}{\lp \omega\otimes\omega-\frac{1}{3}\mbox{\emph{Id}}\rp f(x,\omega,t)\,d\omega}.
		\end{align}

\end{proposition}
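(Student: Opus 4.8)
The plan is the standard martingale / weak-formulation route for mean-field limits, with the fluid treated as an auxiliary linear solve slaved to the empirical measure. First I would note that the particle system \eqref{eq:IBM1}--\eqref{eq:IBM4} is already closed in terms of $f^N$: the interaction current satisfies $J_i = N\int_{\mathbb{S}^2\times\R^3} K(|X_i-y|/R)\,\omega\, f^N(y,\omega,t)\,dy\,d\omega = N\,J_{f^N}(X_i,t)$, so $\bar\omega_i = \bar\omega_{f^N}(X_i,t)$; and since $Q_{f^N}(x,t) = \frac1N\sum_i(\omega_i\otimes\omega_i-\frac13\Id)\,\delta_{X_i}(x)$, the right-hand side of \eqref{eq:IBM3} equals $-b\,\nabla_x\cdot Q_{f^N}$. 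Hence the fluid velocity driving the particles is the unique (up to constants) solution $v^N=v[f^N]$ of the Stokes problem with datum $-b\,\nabla_x\cdot Q_{f^N}$, a fixed linear functional of $f^N$.

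Next, for a test function $\varphi\in C^\infty_c(\R^3\times\mathbb{S}^2)$ I would apply It\^o's formula to $\varphi(X_i(t),\omega_i(t))$. Converting the Stratonovich equation \eqref{eq:IBM2} to It\^o form replaces the surface noise by $\sqrt{2D}\,P_{\omega_i^\perp}dB^i_t$ plus the correction drift $-2D\,\omega_i\,dt$, and the generator of the $\omega$-diffusion is precisely $D\Delta_\omega$ on $\mathbb{S}^2$. Summing over $i$ and dividing by $N$ yields
\begin{align*}
 \frac{d}{dt}\langle f^N,\varphi\rangle
 &= \big\langle f^N,\ (v^N+a\omega)\cdot\nabla_x\varphi \\
 &\quad + P_{\omega^\perp}\big(\nu\,\bar\omega_{f^N}+\lambda S(v^N)\omega+A(v^N)\omega\big)\cdot\nabla_\omega\varphi + D\Delta_\omega\varphi\big\rangle + M^N_t,
\end{align*}
where $M^N_t$ is a martingale whose bracket is of order $1/N$, hence negligible in the limit. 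This is exactly the weak form of the first equation of \eqref{eq:kinetic_eq} with $v$ replaced by $v^N$.

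Then, assuming $f^N\rightharpoonup f$ weakly (in the sense of measures, locally uniformly in $t$), I would pass to the limit term by term. The transport, Jeffery and Laplace--Beltrami contributions are linear in $f^N$ tested against smooth objects, once one controls $v^N$ and $\nabla_x v^N$: here one invokes continuity of the Stokes solution operator, so that $v^N\to v[f]$ and $\nabla_x v^N\to\nabla_x v[f]$ in a topology strong enough to pair against $f$. The current term needs $J_{f^N}\to J_f$, which holds because $K(|\cdot|/R)$ is a fixed continuous kernel, and then $\bar\omega_{f^N}=J_{f^N}/|J_{f^N}|\to J_f/|J_f|$. Collecting the limits identifies $f$ as a weak solution of \eqref{eq:kinetic_eq} with the stated $\rho_f$, $\bar\omega_f$, $J_f$, $Q_f$ and with $v=v[f]$ solving the Stokes system, which is the claim.

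The main obstacle --- and the reason the statement is only ``(formal)'' --- is twofold. First, the normalisation $J\mapsto J/|J|$ is not Lipschitz where $J$ vanishes, so closure of this nonlinearity under weak convergence (a fortiori a rigorous propagation-of-chaos estimate) fails without extra non-degeneracy hypotheses; this is precisely the restriction in Ref. \cite{bolley2012mean}, where rigour is obtained only for the unnormalised model $\bar\omega_i=J_i$. Secondly, the Jeffery term requires $\nabla_x v^N$ paired with the empirical measure, i.e.\ pointwise-in-$x$ control of the gradient of the Stokes velocity generated by a sum of dipoles $\nabla_x\delta_{X_i}$, which is borderline in terms of regularity and in a rigorous treatment would need either a mollification of the dipole source or additional a priori bounds on $f^N$. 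Away from these two points the argument is the routine mean-field computation sketched above, and following \cite{bolley2012mean} formally gives the result.
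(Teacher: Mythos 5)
Your proposal is correct and follows essentially the same route the paper relies on: the paper gives no proof of Prop.~\ref{prop:mean-field} beyond citing the formal argument of Ref.~\cite{bolley2012mean}, and your derivation (closing the system in terms of $f^N$ via $J_i=N J_{f^N}(X_i,t)$ and $Q_{f^N}$, applying It\^o's formula with the Stratonovich correction $-2D\omega_i\,dt$ so the noise generates $D\Delta_\omega$, and passing to the weak limit with a vanishing martingale remainder) is exactly that standard computation. Your identification of the two genuinely non-rigorous points --- the non-Lipschitz normalisation $J\mapsto J/|J|$ and the pairing of $\nabla_x v^N$ generated by dipole sources against the empirical measure --- accurately explains why the statement is only formal.
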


\subsection{Scaling and expansion}
 
We scale the alignment intensity and the variance of the noise  by setting $\nu=\tilde\nu/\eps$, $D=\tilde D/\eps$, where $\tilde \nu$, $\tilde D$ are given fixed quantities. Considering the classical Vicsek model (without the coupling terms), this rescaling corresponds to
$$d\omega_i=  P_{\omega_i^\perp}\lp \frac{\nu}{\eps} \bar \omega_i dt+ \sqrt{2\frac{D}{\eps}}dB^i_t \rp=P_{\omega_i^\perp}\lp \nu \bar \omega_i d(t/\eps)+ \sqrt{2D}dB^i_{t/\eps} \rp,$$
i.e., it corresponds to a time-rescaling $t'=t/\eps$ which, as $\eps\to 0$, gives the long-time dynamics for $\omega_i$. In order words, with this rescaling we express the fact that the self-propulsion velocity of the agents $\omega$ is a fast-varying variable while the velocity of the fluid $v$ is a slow-varying variable. Notice, however, the invariance of the quotient
\begin{equation} \label{eq:kappa}
\kappa:=\frac{\nu}{D} = \frac{\tilde \nu}{\tilde D},
\end{equation}
that we denote by $\kappa$.
We also scale the radius of influence $R$ in Eq. \eqref{eq:J_kinetic} by setting $R=\sqrt{\eps} \tilde R$, which is the rescaling considered in Ref. \cite{degond2015phase}. This rescaling expresses that the interactions between agents become localized in space as $\eps \to 0$. 
After rescaling the kinetic equation \eqref{eq:kinetic_eq} in this way, we obtain (after skipping the tildes):
		\begin{equation}
			\begin{cases} \label{eq:syst1}
			&\varepsilon\left[\partial_t f^{\varepsilon} + \nabla_x\cdot(u_{(f^{\varepsilon},v^\eps)}f^{\varepsilon})\right] + \nabla_{\omega}\cdot\Big( \big[P_{\omega^{\perp}} \{ \nu\overline{\omega}_{f^{\varepsilon}}+\varepsilon\lp\lambda S(v^\eps)+A(v^\eps)\rp\omega\}\big]f^{\varepsilon}\Big) = D\Delta_{\omega}f^{\varepsilon}, \\
			&u_{(f^{\varepsilon},v^\eps)}(x,\omega,t)=v^\eps(x,t) + a\omega, \\
			& \bar \omega^\eps_{f}= \frac{J^\eps_{f}}{|J^\eps_{f}|}, \quad J^\eps_{f} = \int_{\mathbb{S}^2 \times \R^3} \omega K\lp\frac{|x-y|}{\sqrt{\eps}R} \rp f\, d\omega dy, \\
			&-\Delta_x v^\eps + \nabla_x p^\eps =  - b\nabla_x\cdot G_{f^{\varepsilon}},\\
			&\nabla_x\cdot v^\eps=0
			\end{cases}
		\end{equation}
We simplify this system by considering the following expansion:
\begin{lemma}
\label{lem:expansion}
It holds that
\begin{equation} \label{eq:expansion-bar-omega}
\bar \omega^\eps_{f}= \Omega_{f} + \eps\frac{k_0}{|j_{f}|} {P_{\Omega_{f}^\perp}}\Delta_x j_{f}+ \mathcal{O}(\eps^2),
\end{equation}
where
	\begin{equation} \label{eq:def_k0}
			k_{0}=\frac{R^{2}}{6}\int_{\mathbb{R}^{3}}{K(|x|)|x|^{2}\,dx} \lp\int_{\R^3}K(|x|) \, dx \rp^{-1}, 
		\end{equation}
and
	\begin{align}
		j_{f}(x,t)&=\int_{\mathbb{S}^2}\omega{f(x,\omega,t) \,d\omega} \qquad \mbox{(local current density)},\label{eq:localcurrent}\\
			\Omega_{f}(x,t)&=\frac{j_{f}(x,t)}{|j_{f}(x,t)|}\label{localaverageorient} \qquad \mbox{(local average orientation)}.
		\end{align}
\end{lemma}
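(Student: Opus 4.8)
The plan is to make the $\eps$-dependence of $J^\eps_{f}$ explicit by rescaling the convolution variable, Taylor-expand the local current $j_{f}$, use the isotropy of the radial kernel $K$ to identify the leading correction with a Laplacian, and finally linearise the normalisation map $w\mapsto w/|w|$ around $j_f$.

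First I would substitute $y=x+\sqrt{\eps}\,R\,z$ in the integral defining $J^\eps_{f}$ (third line of \eqref{eq:syst1}) and carry out the $\omega$-integration using \eqref{eq:localcurrent}, which gives
\[
J^\eps_{f}(x,t)=(\sqrt{\eps}\,R)^{3}\int_{\R^{3}}K(|z|)\,j_{f}(x+\sqrt{\eps}\,R\,z,t)\,dz .
\]
Taylor-expanding $j_{f}(x+\sqrt{\eps}\,R\,z,t)$ about $x$, the first-order term is odd in $z$ and vanishes upon integration against $K(|z|)$; the second-order term produces $\tfrac{\eps R^{2}}{2}\sum_{i,j}\big(\int_{\R^3}K(|z|)z_{i}z_{j}\,dz\big)\partial_{x_i}\partial_{x_j}j_{f}$, and isotropy gives $\int_{\R^3}K(|z|)z_{i}z_{j}\,dz=\tfrac13\delta_{ij}\int_{\R^3}K(|z|)|z|^{2}\,dz$, so this collapses to $\tfrac{\eps R^{2}}{6}\big(\int_{\R^3}K(|z|)|z|^{2}\,dz\big)\Delta_{x}j_{f}$; the third-order term is again odd in $z$ and is therefore $\mathcal{O}(\eps^{2})$ after integration. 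Factoring out the positive constant $(\sqrt{\eps}\,R)^{3}\int_{\R^3}K(|z|)\,dz$ and recalling \eqref{eq:def_k0} yields
\[
J^\eps_{f}(x,t)=(\sqrt{\eps}\,R)^{3}\Big(\int_{\R^{3}}K(|z|)\,dz\Big)\Big[\,j_{f}+\eps\,k_{0}\,\Delta_{x}j_{f}+\mathcal{O}(\eps^{2})\,\Big].
\]

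Since the prefactor is a positive scalar it drops out of the quotient $\bar\omega^\eps_{f}=J^\eps_{f}/|J^\eps_{f}|$, so it remains to expand $w/|w|$ with $w=j_{f}+\eps\,k_{0}\Delta_{x}j_{f}+\mathcal{O}(\eps^{2})$. Using $|w|=|j_{f}|\big(1+\eps\,k_{0}\,(j_{f}\cdot\Delta_{x}j_{f})/|j_{f}|^{2}+\mathcal{O}(\eps^{2})\big)$ and multiplying out, the $\mathcal{O}(\eps)$ coefficient is
\[
\frac{k_{0}}{|j_{f}|}\Big(\Delta_{x}j_{f}-\frac{(j_{f}\cdot\Delta_{x}j_{f})}{|j_{f}|^{2}}\,j_{f}\Big)=\frac{k_{0}}{|j_{f}|}\,P_{\Omega_{f}^{\perp}}\Delta_{x}j_{f},
\]
where I have used $\Omega_{f}=j_{f}/|j_{f}|$ from \eqref{localaverageorient} together with $P_{\Omega_{f}^{\perp}}=\Id-\Omega_{f}\otimes\Omega_{f}$. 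This is precisely \eqref{eq:expansion-bar-omega}.

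The computation itself is elementary; the two points requiring care are (i) that the half-integer powers of $\eps$ introduced by $\sqrt{\eps}$ all cancel, since the corresponding Taylor coefficients are odd moments of a radial kernel and hence vanish, so the expansion is genuinely in integer powers of $\eps$; and (ii) the control of the remainder, which requires enough regularity of $j_{f}$ (hence of $f$) and enough decay of $K$ (a finite fourth moment suffices) to bound the $\mathcal{O}(\eps^{2})$ term uniformly, together with the standing non-degeneracy assumption $j_{f}\neq 0$ that makes the normalisation well defined and smooth. This last point is the only genuine obstacle and is in any case consistent with the formal status of the surrounding derivation.
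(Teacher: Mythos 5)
Your proposal is correct and follows essentially the same route as the paper's proof: the change of variables $z=(x-y)/(\sqrt{\eps}R)$, the Taylor expansion of $j_f$ with the odd moments of the radial kernel vanishing, and the linearisation of the normalisation $w\mapsto w/|w|$ producing the projection $P_{\Omega_f^\perp}$. You have merely spelled out the intermediate steps (isotropy of $K$, cancellation of half-integer powers of $\eps$, regularity needed for the remainder) that the paper leaves implicit.
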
	

\begin{proof}
The result is a direct consequence of the Taylor expansion for 
$$J^\eps_{f}(x,t)=  \int_{\mathbb{S}^2\times \mathbb{R}^3} \omega \,K\lp\frac{|x-y|}{\sqrt{\eps} R} \rp f(y,\omega, t)   \, d\omega dy,$$
after performing the change of variables $z=(x-y)/(\sqrt{\eps}R)$,
which gives:
\begin{eqnarray*}
 J^\eps_{f}&=&(\sqrt{\eps} R)^3 \int_{\R^3}K(|x|)\,dx \, \big( j_{f}+\eps k_0 \Delta_x j_{f} + \mathcal{O}(\eps) \big),\\
 |J^\eps_{f}|^{-1}&=&\left[(\sqrt{\eps} R)^3 \int_{\R^3}K(|x|)\,dx\right]^{-1} \, |j_{f}|^{-1}\lp 1- \eps k_0 (j_{f}\cdot \Delta_x j_{f})|j_{f}|^{-2} \rp+\mathcal{O}(\eps^2).
 \end{eqnarray*}
\end{proof}	
		
Thanks to the previous Lemma \ref{lem:expansion}, we can rewrite the rescaled system \eqref{eq:syst1} as follows: 		
\begin{empheq}[left=\empheqlbrace]{align}\label{eq:scaledsys}
			&\varepsilon\big[\partial_tf^{\varepsilon} + \nabla_x\cdot(u_{(f^{\varepsilon},v^\eps)}f^{\varepsilon}) + \nabla_{\omega}\cdot(\mathcal{F}_{(f^{\varepsilon},v^\eps)}f^{\varepsilon})\big] = Q(f^{\varepsilon}) + \mathcal{O}(\eps^2),\\
			&u_{(f^{\varepsilon},v^\eps)}=v^\eps(x,t) + a\omega, \\
			&-\Delta_x v^\eps + \nabla_x p^\eps =  - b\nabla_x\cdot G_{f^{\varepsilon}}, \label{eq:scaledsys_v}\\
			&\nabla_x\cdot v^\eps=0, \label{eq:scaledsys_end}
		\end{empheq}
with
				\begin{align}
			Q(f)&=-\nabla_{\omega}\cdot\left[\nu P_{\omega^\perp}(\Omega_{f}) f\right] + D\Delta_{\omega}f, \label{eq:Q}\\
			\mathcal{F}_{(f,v)}&=P_{\omega^{\perp}}\left[\nu\frac{k_0}{|j_{f}|} P_{\Omega_{f}^\perp}\Delta_x  j_{f}  + \lp \lambda S(v)+A(v) \rp\omega\right],
		\end{align}
		where $j_{f}, \Omega_{f}$ are given in Eqs. \eqref{eq:localcurrent}-\eqref{localaverageorient} .

\subsection{Equilibria and Generalised Collision Invariants}
								
In Ref. \cite{degond2008continuum} the authors studied the operator $Q$ given in Eq. \eqref{eq:Q}. They proved that it can be recast into  a Fokker-Planck form:
$$Q(f)=D \nabla_\omega \cdot \left[ M_{\Omega_f}(\omega) \nabla_\omega \lp\frac{f}{M_{\Omega_f}(\omega)} \rp\right],$$
where the density on the sphere
$$M_{\Omega}(\omega)= \frac{1}{Z} \exp\left(\kappa(\Omega \cdot \omega)\right), \quad \int_{\mathbb{S}^2} M_{\Omega}(\omega)\, d\omega =1,$$
is the so-called von Mises distribution ($Z$ is a normalizing constant).
The equilibria of $Q$ as a function of $\omega$ are given by the set of functions
\begin{equation} \label{eq:kernelQ}
\mbox{Ker }{Q}=\{\rho M_\Omega(\omega),\, \rho\geq 0, \, \Omega \in \mathbb{S}^2 \}.
\end{equation}

Moreover, in Ref. \cite{degond2008continuum} it is proven that
\begin{equation} \label{eq:consistency relationship}
\int_{\mathbb{S}^2}\omega M_{\Omega}(\omega) \, d\omega = c_1 \Omega,
\end{equation}
for
	\begin{equation} \label{eq:def_c1}
		c_{1}:=\int_{\mathbb{S}^2} (\omega\cdot \Omega) M_{\Omega}(\omega)\, d\omega =\frac{\int^\pi_0 \cos\theta\, \exp(\kappa\cos\theta)\, \sin \theta\, d\theta}{\int^\pi_0  \exp(\kappa\cos\theta)\, \sin \theta\, d\theta} \in [0,1],
	\end{equation}
	showing the consistency relationship 
	$$\Omega_{M_\Omega} = \frac{c_1 \Omega}{|c_1\Omega|}=\Omega.$$ 
Details can be found in Eq. \eqref{eq:change of variables}.

\medskip
Collision invariants are fundamental in the derivation of macroscopic equations. They are defined as the scalar  functions $\psi$ such that
\begin{equation} 
\label{eq:collision_invariant}
\int_{\mathbb{S}^2} Q(f)(\omega)\, \psi(\omega)\, d\omega=0.
\end{equation}
In the present case, $\psi=$constant clearly satisfies this relation. This is a consequence of the conservation of mass during the interactions between agents. It can be shown that there are no other conserved quantities. This implies, particularly, that the dimension of the space of collision invariants is smaller than the dimension of the kernel $Q$ in \eqref{eq:kernelQ}, which  is 3-dimensional. Classical methods require the dimension of the two spaces to be the same in order to derive a full system of macroscopic equations. The collision invariant corresponding to the constants will allow us to derive the equation for the spatial density $\rho=\int f d\omega$ (as we will see in the next section), but it will not be enough to determine the equation for the mean orientation $\Omega$. To sort out this problem, the authors in Ref. \cite{degond2008continuum} introduce the concept of Generalised Collision Invariant (GCI) defined as follows:
\begin{definition}
A function $\psi : \mathbb{S}\to \R$ is called `Generalised Collision Invariant' associated to $\Omega_0\in\mathbb{S}^2$ if and only if
\begin{equation}
\int_{\mathbb{S}^2}\mathcal{Q}(f, \Omega_0) \psi\, d\omega = 0,\quad \mbox{for all } f \mbox{ such that } P_{\Omega_0^\perp}\lp\int_{\mathbb{S}^2}\omega f\, d\omega \rp=0,
\label{eq:GCI_def}
\end{equation}
where
$$\mathcal{Q}(f,\Omega_0) = \nabla_\omega \cdot \left[ M_{\Omega_0}(\omega) \nabla_\omega\lp\frac{f}{M_{\Omega_0}(\omega)} \rp\right].$$
\end{definition}
Notice that with this definition
$$\mathcal{Q}(f,\Omega_f)=Q(f).$$

It has been proven that the GCI has the following properties:
\begin{proposition}[Generalised Collision Invariant, from Ref. \cite{degond2008continuum}] 
\label{lem:GCI}
(i) Given $\Omega_0\in \mathbb{S}^2$, the \emph{vector GCI} defined by:
$$\vec{\psi}_{\Omega_0}(\omega)=(\Omega_0\times\omega) h(\omega\cdot \Omega_0),$$ 
satisfies (\ref{eq:GCI_def} (componentwise),  where the function $h: \R \to \R$ satisfies $h(\mu)=(1-\mu^2)^{-1/2}g \geq 0$ for $g$  the unique solution  in the weighted $H^1$ space $V$ given by
$$V= \left\{g\, |\, (1-\mu^2)^{-1/2}g\in L^2(-1,1), \quad (1-\mu^2)^{1/2}\partial_\mu g\in L^2(-1,1) \right\},$$
of the differential  equation
$$-(1-\mu^2) \partial_\mu\lp e^{\kappa\mu}(1-\mu^2)\partial_\mu g\rp+e^{\kappa\mu}g=-(1-\mu^2)^{3/2}e^{\kappa\mu}.$$
(ii) The set of GCIs associated to $\Omega_0$ consists of all functions $\psi$ such that there exist $B\in\R^3$, $B\cdot \Omega_0=0$ and $C\in \R$ such that  $\psi(\omega)=B \cdot \vec\psi_{\Omega_0}+C$.\\
(iii) For a given function $f: \mathbb{S}^2 \to \R$, we consider the associated $\Omega_f$ given by 
$$\Omega_f=\frac{j_f}{|j_f|},$$
and consider 
\begin{equation} \label{eq:GCI}
\vec\psi_{\Omega_f}(\omega)= (\Omega_f\times \omega) h(\Omega_f\cdot \omega).
\end{equation}
Then $\vec \psi_{\Omega_f}$ satisfies
\begin{equation}
\label{eq:keyCGI}
\int_{\mathbb{S}^2} Q(f)\,  \vec \psi_{\Omega_f} \, d\omega =0.
\end{equation}
\end{proposition}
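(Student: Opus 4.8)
The plan is to prove (i) and (ii) together by recasting the defining condition \eqref{eq:GCI_def} as an adjoint problem, and then to obtain (iii) as an immediate corollary. First I would integrate by parts twice on the closed sphere. Using the Fokker--Planck form of $\mathcal{Q}(\cdot,\Omega_0)$ and setting $\tilde f=f/M_{\Omega_0}$,
\[
\int_{\mathbb{S}^2}\mathcal{Q}(f,\Omega_0)\,\psi\,d\omega
= -\int_{\mathbb{S}^2} M_{\Omega_0}\,\nabla_\omega\!\Big(\tfrac{f}{M_{\Omega_0}}\Big)\cdot\nabla_\omega\psi\,d\omega
= \int_{\mathbb{S}^2}\tilde f\,\nabla_\omega\cdot\big(M_{\Omega_0}\nabla_\omega\psi\big)\,d\omega .
\]
Since $P_{\Omega_0^\perp}\int_{\mathbb{S}^2}\omega f\,d\omega=0$ is equivalent to $\int_{\mathbb{S}^2}(\Omega_0\times\omega)\,\tilde f\,M_{\Omega_0}\,d\omega=0$ (because $\Omega_0\times\int\omega f\,d\omega=\int(\Omega_0\times\omega)f\,d\omega$), the requirement ``\eqref{eq:GCI_def} for all admissible $f$'' says exactly that $\nabla_\omega\cdot(M_{\Omega_0}\nabla_\omega\psi)$ lies in the orthogonal complement, in $L^2(\mathbb{S}^2)$, of the orthogonal complement of the two-dimensional space $\mathrm{span}\{((\Omega_0\times\omega)\cdot e_j)M_{\Omega_0}:j=1,2\}$, where $(e_1,e_2)$ is an orthonormal basis of $\Omega_0^\perp$. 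Equivalently,
\[
\nabla_\omega\cdot\big(M_{\Omega_0}\nabla_\omega\psi\big)=\big((\Omega_0\times\omega)\cdot B\big)\,M_{\Omega_0}(\omega)\qquad\text{for some }B\in\Omega_0^\perp .
\]
The Fredholm solvability condition for this equation, $\int_{\mathbb{S}^2}(\Omega_0\times\omega)M_{\Omega_0}\,d\omega=0$, holds because $\int_{\mathbb{S}^2}\omega M_{\Omega_0}\,d\omega=c_1\Omega_0$ is parallel to $\Omega_0$ by \eqref{eq:consistency relationship}.

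Next I would exploit the invariance of $M_{\Omega_0}$ under rotations fixing $\Omega_0$. By linearity in $B$ and this symmetry it suffices to treat one fixed $B$, and the solution is then forced into the form $\psi(\omega)=B\cdot(\Omega_0\times\omega)\,h(\Omega_0\cdot\omega)$ for a scalar function $h$. Passing to spherical coordinates $(\mu,\varphi)$ with pole $\Omega_0$ and $\mu=\Omega_0\cdot\omega$, using $\nabla_\omega\cdot(M_{\Omega_0}\nabla_\omega F)=\partial_\mu\big((1-\mu^2)M_{\Omega_0}\partial_\mu F\big)+\tfrac{M_{\Omega_0}}{1-\mu^2}\partial_\varphi^2F$, and noting that $\partial_\varphi^2$ acts as minus the identity on the $\cos\varphi$, $\sin\varphi$ modes carried by both sides, the PDE collapses — after setting $g(\mu)=(1-\mu^2)^{1/2}h(\mu)$ and multiplying through by $1-\mu^2$ — to
\[
-(1-\mu^2)\,\partial_\mu\!\big(e^{\kappa\mu}(1-\mu^2)\,\partial_\mu g\big)+e^{\kappa\mu}\,g=-(1-\mu^2)^{3/2}e^{\kappa\mu},
\]
which is the ODE in the statement. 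Carrying out this coordinate computation and, in particular, verifying that the boundary contributions at $\mu=\pm1$ vanish for functions in the relevant degenerate-weighted class is the step I expect to be the main obstacle; it is essentially bookkeeping, but the weights $(1-\mu^2)$ must be tracked carefully.

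For existence and uniqueness of $g$ I would write the ODE in weak form over $V$. The bilinear form $a(g,\tilde g)=\int_{-1}^{1}\big(e^{\kappa\mu}(1-\mu^2)\,g'\tilde g'+\tfrac{e^{\kappa\mu}}{1-\mu^2}\,g\tilde g\big)\,d\mu$ is continuous and coercive on $V$: coercivity is immediate since $e^{-\kappa}\le e^{\kappa\mu}\le e^{\kappa}$ on $[-1,1]$ and the two terms of $a(g,g)$ are precisely the squared weighted norms defining $V$. The right-hand side $\tilde g\mapsto-\int_{-1}^{1}(1-\mu^2)^{3/2}e^{\kappa\mu}\tilde g\,d\mu$ is a bounded linear functional on $V$, since $(1-\mu^2)^{3/2}|\tilde g|=(1-\mu^2)^2\cdot(1-\mu^2)^{-1/2}|\tilde g|$ and Cauchy--Schwarz bounds its integral by $C\,\|(1-\mu^2)^{-1/2}\tilde g\|_{L^2(-1,1)}$. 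Lax--Milgram then gives a unique $g\in V$; setting $h=(1-\mu^2)^{-1/2}g$ produces a $\psi$ of the claimed form, and a maximum-principle argument for the operator $-\partial_\mu\big(e^{\kappa\mu}(1-\mu^2)\partial_\mu\cdot\big)+\tfrac{e^{\kappa\mu}}{1-\mu^2}\cdot$ (with nonnegative zeroth-order coefficient) against the one-signed right-hand side fixes the sign of $h$. This establishes (i). Part (ii) follows by a dimension count: the GCI space is the preimage, under the self-adjoint map $\psi\mapsto\nabla_\omega\cdot(M_{\Omega_0}\nabla_\omega\psi)$, of the two-dimensional space of right-hand sides above; its kernel is the constants, so the solution set is $\{B\cdot\vec\psi_{\Omega_0}:B\in\Omega_0^\perp\}+\R$, that is, exactly $\psi=B\cdot\vec\psi_{\Omega_0}+C$ with $B\cdot\Omega_0=0$, $C\in\R$.

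Finally, (iii) is immediate from (i): for a given $f$ take $\Omega_0=\Omega_f=j_f/|j_f|$; then $\int_{\mathbb{S}^2}\omega f\,d\omega=j_f=|j_f|\Omega_f$ is parallel to $\Omega_0$, so the admissibility constraint $P_{\Omega_0^\perp}\int_{\mathbb{S}^2}\omega f\,d\omega=0$ in \eqref{eq:GCI_def} holds for this very $f$. Since $\mathcal{Q}(f,\Omega_f)=Q(f)$, applying (i) with $\Omega_0=\Omega_f$ yields $\int_{\mathbb{S}^2}Q(f)\,\vec\psi_{\Omega_f}\,d\omega=0$, which is \eqref{eq:keyCGI}.
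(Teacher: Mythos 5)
The paper itself offers no proof of this proposition --- it is imported verbatim from Ref.~\cite{degond2008continuum} --- and your argument is a correct reconstruction of the standard proof given there: double integration by parts over the boundaryless sphere to convert the GCI condition into the adjoint problem $\nabla_\omega\cdot\big(M_{\Omega_0}\nabla_\omega\psi\big)=\big(B\cdot(\Omega_0\times\omega)\big)M_{\Omega_0}$ with the Fredholm condition supplied by \eqref{eq:consistency relationship}, separation of variables in the azimuthal first harmonic about $\Omega_0$ to reduce to the stated ODE for $g=(1-\mu^2)^{1/2}h$, Lax--Milgram in the weighted space $V$, and for (iii) the observation that $\Omega_0=\Omega_f$ makes $f$ itself admissible together with $\mathcal{Q}(f,\Omega_f)=Q(f)$. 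One bookkeeping slip worth fixing: the bilinear form $a$ you write down corresponds to the ODE \emph{after} division by $(1-\mu^2)$, so the matching linear functional must carry the weight $(1-\mu^2)^{1/2}e^{\kappa\mu}$ rather than $(1-\mu^2)^{3/2}e^{\kappa\mu}$; both functionals are bounded on $V$, so existence and uniqueness are unaffected, but only the former yields a weak solution of the stated equation (and note that with the right-hand side being nonpositive, your maximum-principle/test-function argument pins the sign of $g$, hence of $h$, to the one consistent with the convention of Ref.~\cite{degond2008continuum}).
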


\section{Macroscopic limit: the SOH-Stokes system}
\label{sec:macro_limit}

In this section we investigate the hydrodynamic limit as $\varepsilon \to 0$ for the system \eqref{eq:scaledsys}--\eqref{eq:scaledsys_end}. 
We will use the following change of variables: for $\Omega \in \mathbb{S}^2$ fixed, we decompose any given vector $\omega\in \mathbb{S}^2$ uniquely as
	\begin{equation}	\label{eq:change of variables}
	\omega = P_{\Omega}(\omega) + P_{\Omega^\perp}(\omega)= \cos\theta\, \Omega + \sin\theta \, w, \mbox{ for } w\in \mathcal{S}:= (\mathbb{S}^2 \cap\Omega^\perp) \cong \mathbb{S}^1\mbox{ and } \theta \in[0,\pi].
	\end{equation}
We take the convention $\int_{\mathbb{S}^2}d\omega=\int_{\mathcal{S}}{dw}=1$. One can check that (see Ref. \cite[Ap. A2]{frouvelle2012continuum}) for any measurable function $a(\omega)=\bar a(\theta,w)$:
	\begin{equation} \label{eq:decomposition_integral}
		\int_{\mathbb{S}^2}{a(\omega)\, d\omega}=\frac{1}{2}\int_{0}^{\pi}{\int_{\mathcal{S}}{\bar a(\theta,w)\sin\theta \,dwd\theta}},
	\end{equation}
and
	\begin{equation} \label{eq:tensorial_w}
		\int_{\mathcal{S}}{w\, dw}=0 \textrm{,   ~~\mbox{and}~~   } \int_{\mathcal{S}}{w\otimes w\,dw}=\frac{1}{2}(\mbox{Id}-\Omega\otimes\Omega)= \frac{1}{2}P_{\Omega^\perp}.
	\end{equation}
 We will also use the notations:
\begin{equation} \label{eq:notation_theta}
\tilde h(\theta):= h(\cos\theta)=h(\omega \cdot \Omega), \quad \tilde M(\theta):=M_{\Omega}(\cos \theta)=M_\Omega(\omega\cdot \Omega),
\end{equation} 
where $h$ is the function appearing in Eq. \eqref{eq:GCI}.
\begin{theorem}[(Formal) macroscopic limit] 
\label{th:hydro_limit}
Consider the rescaled system \eqref{eq:scaledsys}--\eqref{eq:scaledsys_end}. When $\eps \to 0$, it holds (formally) that
$$(f^\eps, v^\eps, p^\eps)\to ( f = \rho M_\Omega, v, p),$$
where $\rho=\rho(x,t)\geq 0$ and $\Omega = \Omega (x,t)\in \mathbb{S}^{2}$ are the limits of the local density $\rho^\eps=\int_{\mathbb{S}^2}f^\eps\, d\omega$ and 
the local mean orientation $\Omega_{f^\eps}$ in Eq. \eqref{localaverageorient}, respectively.
Moreover, if the convergence is strong enough and $\Omega$, $\rho$, $v$ and $p$ are smooth enough, they satisfy the coupled system \eqref{eq:continuity_equation}-\eqref{eq:hydro4} with explicit constants
	\begin{eqnarray}
		c_1 &=& \langle \cos\theta\,\rangle_{\sin\theta \tilde M(\theta)}, \label{eq:c1_th}\\
		c_2 &=& \langle  \cos\theta \rangle_{\sin^3 \theta \tilde M(\theta) \tilde h(\theta) }, \label{eq:c2}\\
		c_3 &=& 2\langle \cos^{2}\theta \rangle_{\sin^3 \theta \tilde M(\theta) \tilde h(\theta)},\label{eq:c3}\\
		c_4&=&1-\frac{3}{2}\left\langle  \sin^2(\theta) \right\rangle_{\sin\theta \tilde M(\theta)}  , \label{eq:c4}
	\end{eqnarray}
	where we used the following notation: for any functions $g,\ell:[0,\pi]\to \R$ define
	$$\langle g \rangle_\ell :=\int^\pi_0 g(\theta) \frac{ \ell(\theta)}{\int^\pi_0 \ell(\theta')\, d\theta'} \, d\theta.$$
	The constants $a$, $b$, $\kappa=D/\nu$ correspond to the ones in the individual based model \eqref{eq:IBM1}-\eqref{eq:IBM4} and the value of $k_0$ is given in Eq. \eqref{eq:def_k0}. 
\end{theorem}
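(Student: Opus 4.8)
The plan is to run the Generalised Collision Invariant (GCI) program of Refs.~\cite{degond2008continuum,degond2015phase}, now in the presence of the fluid coupling, in three movements. \emph{Leading order, Stokes, continuity.} Letting $\eps\to0$ in the kinetic equation of \eqref{eq:scaledsys} forces $Q(f^0)=0$ on the limit $f^0=\lim f^\eps$, so by \eqref{eq:kernelQ} one has $f^0=\rho M_\Omega$ for some $\rho=\rho(x,t)\ge0$, $\Omega=\Omega(x,t)\in\mathbb S^2$, and by the consistency relation $\Omega_{M_\Omega}=\Omega$ these are the limits of $\rho^\eps$ and of $\Omega_{f^\eps}$. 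The Stokes pair \eqref{eq:scaledsys_v}--\eqref{eq:scaledsys_end} passes to the limit directly, since $G_{f^\eps}\to Q_{f^0}=\rho\int_{\mathbb S^2}(\omega\otimes\omega-\tfrac13\Id)M_\Omega\,d\omega$; because the second moment of the von Mises density is axisymmetric about $\Omega$, the change of variables \eqref{eq:decomposition_integral}--\eqref{eq:tensorial_w} turns this into $\rho\,\mathcal Q(\Omega)$ with $c_4$ as in \eqref{eq:c4}, giving \eqref{eq:hydro3}--\eqref{eq:hydro4}. Integrating the kinetic equation of \eqref{eq:scaledsys} over $\omega\in\mathbb S^2$ annihilates $Q(f^\eps)$ (mass is a collision invariant) and the $\nabla_\omega\cdot$ flux (divergence theorem on $\mathbb S^2$); dividing by $\eps$, sending $\eps\to0$, and using $\int\omega M_\Omega\,d\omega=c_1\Omega$ from \eqref{eq:consistency relationship} gives the continuity equation \eqref{eq:continuity_equation} with $U=ac_1\Omega+v$ and $c_1$ as in \eqref{eq:c1_th}.

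\emph{The orientation equation.} Test the kinetic equation of \eqref{eq:scaledsys} against the vector GCI $\vec\psi_{\Omega_{f^\eps}}(\omega)=(\Omega_{f^\eps}\times\omega)\,h(\Omega_{f^\eps}\cdot\omega)$ of \eqref{eq:GCI} and integrate over $\omega$. By the key identity \eqref{eq:keyCGI} the collision term vanishes \emph{exactly}, so after division by $\eps$ one is left, up to $\mathcal O(\eps)$, with
\[
\int_{\mathbb S^2}\Big[\partial_t f^\eps+\nabla_x\cdot\big(u_{(f^\eps,v^\eps)}f^\eps\big)+\nabla_\omega\cdot\big(\mathcal F_{(f^\eps,v^\eps)}f^\eps\big)\Big]\,\vec\psi_{\Omega_{f^\eps}}\,d\omega=\mathcal O(\eps).
\]
Then pass to the limit with $f^\eps\to\rho M_\Omega$, $\Omega_{f^\eps}\to\Omega$, $j_{f^\eps}\to\rho c_1\Omega$ and $v^\eps\to v$; the factor $c_1$ cancels inside $\mathcal F$, whose limit is $\mathcal F_0=P_{\omega^\perp}\big[\tfrac{\nu k_0}{\rho}P_{\Omega^\perp}\Delta_x(\rho\Omega)+(\lambda S(v)+A(v))\omega\big]$. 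The problem then reduces to four spherical moments against $\vec\psi_\Omega$, all handled via $\omega=\cos\theta\,\Omega+\sin\theta\,w$ and \eqref{eq:decomposition_integral}--\eqref{eq:tensorial_w}: (i) the time term, using $\partial_t M_\Omega=\kappa(\partial_t\Omega\cdot\omega)M_\Omega$ and $\partial_t\Omega\perp\Omega$, produces a multiple of $\Omega\times\partial_t\Omega$ with weight $\tfrac{\kappa}{4}\int_0^\pi\sin^3\theta\,\tilde M\tilde h\,d\theta$; (ii) the transport term, using $\partial_{x_i}M_\Omega=\kappa M_\Omega(\partial_{x_i}\Omega\cdot\omega)$ and the vanishing of odd $w$-moments on $\mathcal S$, produces the convection $\Omega\times\big((v+ac_2\Omega)\cdot\nabla_x\big)\Omega$ (the cubic-in-$w$ piece dropping) together with the pressure contribution $\propto\Omega\times\nabla_x\rho$, fixing $V=ac_2\Omega+v$ with $c_2$ as in \eqref{eq:c2} and the pressure coefficient $a/\kappa$; (iii) the diffusion part of the $\nabla_\omega\cdot$ term, after one integration by parts on $\mathbb S^2$ using $\nabla_\omega M_\Omega=\kappa M_\Omega P_{\omega^\perp}\Omega$ and the differential equation satisfied by $h$ in Proposition~\ref{lem:GCI}, produces $\propto\Omega\times\Delta_x(\rho\Omega)$, giving $\gamma=k_0\nu(c_2+2/\kappa)$; (iv) the Jeffery part, by the same manipulations, produces $\propto\Omega\times P_{\Omega^\perp}\big(\tilde\lambda S(v)+A(v)\big)\Omega$ with $\tilde\lambda=\lambda\lambda_0$, $\lambda_0=\tfrac6\kappa c_2+c_3-1$ and $c_3$ as in \eqref{eq:c3}. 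Summing the four moments, dividing by the common weight $\tfrac{\kappa}{4}\int_0^\pi\sin^3\theta\,\tilde M\tilde h\,d\theta$, and applying $-\Omega\times(\cdot)$ — which inverts $\Omega\times$ on vectors tangent to $\Omega^\perp$ and turns $\Omega\times\nabla_x\rho$ into $P_{\Omega^\perp}\nabla_x\rho$ — reproduces \eqref{eq:hydro2}. Finally $\rho\ge0$ follows from \eqref{eq:continuity_equation}, and $|\Omega|=1$ is automatic because every term of \eqref{eq:hydro2} is valued in $\Omega^\perp$, using $(\Omega\cdot\nabla_x)\Omega\cdot\Omega=0$ together with the explicit projectors.

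\emph{Main obstacle.} Movements (iii)--(iv) are the crux: one must track the iterated integrations by parts on the sphere so that the vorticity contribution $A(v)$ comes out with coefficient exactly $1$ while the straining contribution $S(v)$ acquires the nontrivial factor $\lambda_0=\tfrac6\kappa c_2+c_3-1$, and so that the extra $2/\kappa$ in $\gamma$ emerges from the $x$-derivative of $M_\Omega$; closing this computation genuinely requires the defining ODE for $h$ from Proposition~\ref{lem:GCI}. A secondary point is the order bookkeeping: the Taylor expansion of Lemma~\ref{lem:expansion} must be retained to order $\eps$, because that $k_0$-correction to the alignment flux — formally one power of $\eps$ below the other terms — is precisely what survives at $\mathcal O(1)$ after division by $\eps$ and generates the diffusion $\gamma P_{\Omega^\perp}\Delta_x(\rho\Omega)$. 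Everything past the identification of $f^0$ and of the Stokes pair is a passage to the limit in the $\mathcal O(\eps)$ remainders, hence only formal, as the statement makes explicit.
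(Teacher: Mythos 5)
Your proposal follows essentially the same route as the paper's proof: identification of the equilibrium $\rho M_\Omega$, integration against the constant collision invariant for the continuity equation, testing against the vector GCI $\vec\psi_{\Omega_{f^\eps}}$ and evaluating the resulting spherical moments via the $(\theta,w)$ decomposition for the $\Omega$-equation, and a direct computation of the von Mises $Q$-tensor for the Stokes forcing. The only differences are cosmetic (your grouping of the transport terms versus the paper's $X_1,\dots,X_4$) plus one small mis-attribution: the defining ODE for $h$ is needed only to guarantee the GCI property, i.e.\ the exact vanishing of the collision term, not to close the moment computations in your steps (iii)--(iv), whose coefficients $c_2,c_3$ remain as unevaluated integrals of $h$.
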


\begin{proof}

	Suppose that $f^{\varepsilon}$  converges to $f$ as $\varepsilon\to 0$. Then, from Eq. \eqref{eq:scaledsys}, $f$ belongs to the kernel of $Q$, i.e., $Q(f)=0$. Therefore, $f=\rho M_{\Omega}$ by Eq. \eqref{eq:kernelQ}, with $\rho= \rho(x,t)\ge 0$ and $\Omega=\Omega(x,t)\in\mathbb{S}^2$. We start by computing the equations for these two macroscopic quantities.

\medskip	
	
  We obtain the continuity equation \eqref{eq:continuity_equation} for $\rho$  by integrating the kinetic equation \eqref{eq:scaledsys} with respect to $\omega$; dividing by $\eps$; taking the limit $\varepsilon \to 0$; and using the consistency relationship in Eq. \eqref{eq:consistency relationship}. Notice that the integral of the right hand side of the kinetic equation \eqref{eq:scaledsys} vanishes since $\psi=1$ is a collision invariant in Eq. \eqref{eq:collision_invariant} , i.e.,
  $$\int_{\mathbb{S}^2} Q(f^\eps)\, d\omega=0.$$

\medskip

	We compute next Eq. \eqref{eq:hydro2} for the mean direction of the agents $\Omega$. We multiply the kinetic equation \eqref{eq:scaledsys} by $\frac{1}{\eps}\psi_{f^\eps}$, where $\psi_{f^\eps}=h(\omega\cdot\Omega_{f^{\varepsilon}})(\Omega_{f^{\varepsilon}}\times\omega)$ is the Generalised Collision Invariant given by Prop. \ref{lem:GCI}, and integrate with respect to $\omega$:
$$
	\int_{\mathbb{S}^2} \big[\partial_t f^\eps + \nabla_x \cdot (u_{(f^\eps,v^\eps)} f^\eps)  + \nabla_\omega \cdot \lp \mathcal{F}_{(f^\eps,v^\eps)} f^\eps\rp\big]\, h(\omega\cdot\Omega_{f^{\varepsilon}})\,(\Omega_{f^{\varepsilon}}\times\omega) \, d\omega
	  =\mathcal{O}(\eps).
$$
	 Notice that the term involving $Q$ vanishes thanks to Eq. \eqref{eq:keyCGI}. Taking the limit $\varepsilon\to 0$ on the previous expression, we obtain:
$$
		\Omega\times X=0,\quad X:=\int_{\mathbb{S}^2}{\big[\partial_t (\rho M_{\Omega}) + \nabla_x\cdot(u_{(\rho M_{\Omega}, v)}\rho M_{\Omega}) + \nabla_{\omega}\cdot(\mathcal{F}_{(\rho M_\Omega,v)}\rho M_{\Omega})\big]\, h(\omega\cdot \Omega)\, \omega\, d\omega},
$$
	or, equivalently,
		\begin{equation} \label{eq:decomposition_X}
	P_{\Omega^{\perp}}X= 0.
	\end{equation}

To compute this last expression we decompose $X$ into $X=X_1+X_2+X_3+X_4$ for
\begin{align*}
X_1 &=\int_{\mathbb{S}^2} \left[ \partial_t (\rho M_\Omega) + a\omega \cdot \nabla_x (\rho M_\Omega) \right]\, h(\omega\cdot \Omega) \omega\, d\omega,\\
X_2 &= \int_{\mathbb{S}^2} \nabla_\omega \cdot \lp \nu P_{\omega^\perp} \frac{k_0}{|j_{\rho M_\Omega}|} P_{\Omega^\perp}( \Delta_x j_{\rho M_\Omega}) \, \rho M_{\Omega}\rp\, h(\omega\cdot \Omega)\, \omega\, d\omega,\\
X_3 &= \int_{\mathbb{S}^2}  v\cdot\nabla_x ( \rho M_\Omega)\, h(\omega \cdot \Omega) \, \omega\, d\omega,  \\
X_4 &= \int_{\mathbb{S}^2} \nabla_\omega \cdot \lp P_{\omega^\perp}\left[(\lambda S(v)+A(v))\omega \right]\, \rho M_{\Omega}\rp \, h(\omega \cdot \Omega) \, \omega\, d\omega.
\end{align*}
Notice that in $X_3$ we used the incompressibility condition $\nabla_x\cdot v=0$ to express $\nabla_x~\cdot~ ~(~v\rho M_\Omega)=v\cdot \nabla_x (\rho M_\Omega)$.

The term $P_{\Omega^\perp}X_1$ has been computed in Ref. \cite{degond2008continuum}:
$$P_{\Omega^\perp}X_1= \rho(C_0\kappa\,\partial_t \Omega + a\kappa C_2 (\Omega \cdot \nabla_x) \Omega) + aC_0 P_{\Omega^\perp} \nabla_x\rho ,$$
for
$$C_0 =\frac{1}{4}\int^\pi_0 \sin^3\theta\, \tilde h(\theta)\, \tilde M(\theta) \, d\theta, \quad C_2 = \frac{1}{4}\int^\pi_0 \cos\theta\sin^3\theta\, \tilde h(\theta)\,\tilde M(\theta)\, d\theta,$$
with the notations in Eq. \eqref{eq:notation_theta}.
The term $P_{\Omega^\perp}X_2$ has been studied in Ref. \cite{degond2015phase}. One observes that in the limit $j_{\rho M_\Omega}=c_1 \rho \Omega$ using Eq. \eqref{eq:consistency relationship} and, therefore, 
\begin{eqnarray*}
&&\hspace{-0.5cm}\nu\nabla_\omega \cdot \lp P_{\omega^\perp} \frac{k_0}{|j_{\rho M_\Omega}|} P_{\Omega^\perp}( \Delta_x j_{\rho M_\Omega}) \, \rho M_{\Omega}\rp \\
&&= \nu k_0 \nabla_\omega \cdot \lp P_{\omega^\perp}P_{\Omega^\perp}(\Delta_x(\rho\Omega)) \, M_{\Omega} \rp\\
&&= \nu k_0 \nabla_\omega \cdot \lp P_{\omega^\perp}P_{\Omega^\perp}\Delta_x(\rho \Omega) \rp\, M_{\Omega}+\,\nu k_0 \kappa\, P_{\omega^\perp}P_{\Omega^\perp}\Delta_x(\rho \Omega) \cdot  P_{\omega^\perp}\Omega\, M_\Omega\\
&&= -2 \nu k_0 (\omega\cdot P_{\Omega^\perp} (\Delta_x (\rho \Omega) )\, M_{\Omega}-\,\nu k_0\kappa (\omega \cdot P_{\Omega^\perp}(\Delta_x (\rho \Omega)) (\omega \cdot \Omega) \, M_{\Omega},
\end{eqnarray*}
where we used that $\nabla_\omega \cdot \lp P_{\omega^\perp} A \rp=-2 A\cdot \omega$, $\nabla_\omega (\omega \cdot A)=P_{\omega^\perp}A$ and $P_{\omega^\perp}A \cdot P_{\omega^\perp}B= A\cdot B - (\omega\cdot A) (\omega \cdot B)$ for any vectors $A,B\in \mathbb{R}^3$  (see Ref. \cite{frouvelle2012continuum}). With this expression we have that
\begin{align*}
P_{\Omega^\perp}X_2 =& -2\nu k_0 P_{\Omega^\perp}\lp\int_{\mathbb{S}^2} (\omega \otimes \omega) \,M_{\Omega}\, h \, d\omega\rp P_{\Omega^\perp} \Delta_x(\rho\Omega) \\
&-\nu k_0 \kappa\int_{\mathbb{S}^2} (\omega \cdot P_{\Omega^\perp}(\Delta_x (\rho \Omega)) (\omega \cdot \Omega) \, M_{\Omega}\, h\, P_{\Omega^\perp}(\omega)\,d\omega\\
=& -2 \nu k_0  C_0 \,P_{\Omega^\perp} \Delta_x (\rho \Omega)\\
&-\frac{\nu k_0 \kappa}{2}\int_{\mathcal{S}}\int^\pi_0
\big[(\cos\theta\, \Omega+ \sin\theta\, w) \cdot P_{\Omega^\perp}\Delta_x(\rho \Omega)\big] \cos\theta\, \tilde M(\theta)\, \tilde h(\theta)\,  \sin\theta\, w \, \sin \theta\, d\theta dw\\
=& -2\nu k_0  C_0\, P_{\Omega^\perp} \Delta_x (\rho \Omega)\\
&-\frac{\nu k_0\kappa}{2} \lp \int^\pi_0 \sin^3\theta \cos\theta \tilde M(\theta) \, \tilde h(\theta) \, d\theta\rp \lp \int_{\mathcal{S}} w\otimes w\, dw \rp P_{\Omega^\perp}\Delta_x(\rho\Omega) \\
=& -\nu k_0 \lp C_2\kappa+2C_0  \rp P_{\Omega^\perp} \Delta_x (\rho \Omega),
\end{align*}
where in the second equality we used the change of variable \eqref{eq:change of variables}, as well as, 
\begin{equation} \label{eq:usual_integral}
P_{\Omega^\perp}\int_{\mathbb{S}^2} (\omega \otimes \omega) \, h\, M_{\Omega}\, d\omega= \frac{1}{4}\int^\pi_0 \sin^3\theta\, \tilde h(\theta)\, \tilde M(\theta) \,   d\theta P_{\Omega^\perp}=: C_0 P_{\Omega^\perp},
\end{equation}
(this formula is a consequence of Eqs. \eqref{eq:decomposition_integral}-\eqref{eq:tensorial_w}); in the third equality, the odd integrands in $w$ vanish; and in the last equality we used Eq. \eqref{eq:tensorial_w}.

\medskip
Now, the terms $X_3$ and $X_4$ correspond to the coupling terms.
Firstly, for $X_3$ we have that
	\begin{align*}
		P_{\Omega^\perp}X_3
		 =& P_{\Omega^\perp}\int_{\mathbb{S}^2} \left[(v\cdot \nabla_x) \rho + \kappa \rho\, \omega \cdot ((v \cdot\nabla_x)  \Omega) \right] h\,M_\Omega\,\omega\, d\omega\\
		=&\kappa\rho\, P_{\Omega^\perp} \lp \int_{\mathbb{S}^2} (\omega\otimes \omega)\, h\, M_\Omega\, d\omega \rp (v\cdot\nabla_x) \Omega\\
		=& \kappa\, \rho \, C_0\, (v\cdot\nabla_x)\Omega,
	\end{align*}
where in the second equality the term $(v\cdot \nabla_x) \rho$ vanishes since 
$$P_{\Omega^\perp}\int_{\mathbb{S}^2}h \, M_{\Omega}\, \omega\, d\omega =\frac{1}{2}\int^\pi_0 \tilde h(\theta)\, \tilde M(\theta)\sin\theta \, d\theta\, \int_{\mathcal{S}}w\, dw =0, $$
and in the last equality  we used that $P_{\Omega^\perp} (v\cdot \nabla_x) \Omega= (v\cdot \nabla_x) \Omega$, as well as, Eq. \eqref{eq:usual_integral}.

Finally, to compute $X_4$ we denote by $B:= \lambda S(v) +A(v)$. Then we have that
\begin{align*}
\nabla_\omega \cdot (P_{\omega^\perp}(B\omega)\,  \rho M_\Omega)&= \nabla_\omega \cdot\lp P_{\omega^\perp}B\omega \rp \rho M_{\Omega}+ (P_{\omega^\perp}B\omega) \cdot \nabla_\omega(\rho M_\Omega)\\
&= B :(\mbox{Id}-3 \omega \otimes \omega)\, \rho M_\Omega+ \kappa \rho M_\Omega \left[ (\omega \cdot B^T \Omega) - (\omega \cdot B \omega) (\omega \cdot \Omega) \right],
\end{align*}
where we used that $\nabla_\omega M_\Omega= \kappa P_{\omega^\perp}\Omega M_{\Omega}$ and that $\nabla_\omega \cdot \lp P_{\omega^\perp}B\omega \rp= B: (\mbox{Id}-3\omega\otimes \omega)$ for any matrix $B$ independent of $\omega$. The notation $B:C$ indicates the contractions of the two matrices $B=(B)_{ij}$, $C=(C)_{ij}$, i.e., $B:C= \sum_{i,j=1,2,3}B_{ij}C_{ij}= \mbox{trace}(B^T C)$ (see Ref. \cite[Ap. A.2]{frouvelle2012continuum}). 
In this way we can decompose $X_4$ into $X_4= X_{41} + X_{42} + X_{43}$
with
\begin{align*}
X_{41}&= \rho\int_{\mathbb{S}^2} \big( B :(\mbox{Id}-3 \omega \otimes \omega) \big) \,  M_\Omega\, h\, \omega \,d\omega,\\
X_{42} &= \kappa \rho \int_{\mathbb{S}^2}  (\omega \cdot B^T \Omega)\, M_\Omega\, h\, \omega \, d\omega,\\
X_{43} &= -\kappa \rho \int_{\mathbb{S}^2} (\omega \cdot B \omega) (\omega \cdot \Omega)\, M_\Omega\, h\, \omega \, d\omega.
\end{align*}

To compute the term $X_{41}$, notice that, if $C$ is an antisymmetric matrix, then $C: (\mbox{Id}- 3\omega\otimes\omega)=0$ (since the second matrix is symmetric), therefore
$$B:(\mbox{Id}- 3\omega\otimes\omega)= \lambda\, S(v):(\mbox{Id}- 3\omega\otimes\omega).$$ 
In the following computation, in the second equality we use the change of variables \eqref{eq:change of variables}; in the third equality the odd terms in $w$ vanish from the integral; in the fourth equality we use that $S: (w\otimes \Omega + \Omega \otimes w)= 2 w \cdot S\Omega$ (since $S$ is symmetric); and the last equality is consequence of Eq. \eqref{eq:tensorial_w}:
\begin{align*}
P_{\Omega^\perp}X_{41} &= \rho\int_{\mathbb{S}^2} \big( B: (\mbox{Id}- 3\omega\otimes\omega) \big)\, M_{\Omega}\,h \,P_{\Omega^\perp}(\omega)\, d\omega \\
&= \frac{\lambda}{2} \, \rho\int_{\mathcal{S}}\int^\pi_0 \big[ 
 S(v): \big(\mbox{Id}-3(\cos\theta\, \Omega + \sin\theta\, w) \otimes (\cos\theta\,\Omega + \sin \theta\, w) \big)\big]\\
& \qquad\qquad\qquad \tilde M(\theta)\, \tilde h(\theta) \sin\theta w \sin \theta \, d\theta dw \\
&=-\frac{3\lambda}{2}\,\rho\lp \int^\pi_0\sin^3\theta\cos\theta \tilde M(\theta)\, \tilde h(\theta) \,d\theta\rp \lp \int_{\mathcal{S}} \big[ S(v): (w\otimes \Omega + \Omega \otimes w) \big] w \, dw\rp\\
&= -12\lambda\, C_2\rho\lp \int_{\mathcal{S}} w\otimes w \, dw \rp  S(v)\Omega\\
&= -6 \lambda\, C_2\,\rho\,   P_{\Omega^\perp} S(v)\Omega.
\end{align*}

For the term $X_{42}$ it is immediate to obtain
$$ P_{\Omega^\perp}X_{42} =C_0 \kappa\rho P_{\Omega^\perp}B^T \Omega= C_0\kappa\rho P_{\Omega^\perp}(\lambda S(v) + A(v)) \Omega,$$
proceeding analogously as in previous computations (remember Eq. \eqref{eq:usual_integral}). The term $X_{43}$ is computed similarly as for $X_{41}$: 
\begin{align*}
P_{\Omega^\perp} X_{43}&=-\kappa \rho \, P_{\Omega^\perp}\int_{\mathbb{S}^2}(\omega\cdot B \omega) \, (\omega\cdot \Omega) \, M_\Omega\, h\, \omega \, d\omega\\
&= -\kappa \rho\int_{\mathbb{S}^2} (\omega \cdot B\omega)\, (\omega\cdot \Omega)  \, M_\Omega\, h\, P_{\Omega^\perp}(\omega) \, d\omega\\
&= -\frac{\kappa}{2} \rho\int_{\mathcal{S}}\int^\pi_0\big[ (\cos\theta\, \Omega+\sin\theta\, w)\cdot B(\cos\theta\, \Omega+\sin\theta\, w) \big]\cos\theta\, \tilde M(\theta)\, \tilde h(\theta) \sin\theta\, w\sin\theta\, d\theta dw\\
&= -\kappa\, C_3\, \rho\int_{\mathcal{S}} (w\cdot (B+B^T) \Omega) w\, dw\\
&= -\kappa\, C_3\, \rho\int_{\mathcal{S}} (w \otimes w)  \, dw \,(B+B^T) \Omega\\
&=-\kappa \lambda\, C_3\, \rho\,P_{\Omega^\perp} S(v) \Omega,
\end{align*}
where in the last equality we substituted $(B+B^T)/2= \lambda S$; and where
$$C_3:=\frac{1}{2}\int^\pi_0 \sin^3\theta\cos^2\theta \tilde M(\theta) \tilde h(\theta)\, d\theta.$$

Grouping terms we conclude:
$$P_{\Omega^\perp} X_4 = \rho \left[ \kappa C_0 P_{\Omega^\perp}A(v)\Omega + \lambda\lp\kappa C_0 -6 C_2-\kappa C_3  \rp \right] P_{\Omega^\perp} S(v)\Omega $$

\medskip

Finally, putting all the terms together, we obtain
\begin{eqnarray*}
0=P_{\Omega^\perp}X &=& P_{\Omega^\perp}(X_1+X_2+X_3+X_4)\\
&=& \rho(C_0\kappa\,\partial_t \Omega + a\kappa C_2 (\Omega \cdot \nabla_x) \Omega) + aC_0 P_{\Omega^\perp} \nabla_x\rho\\
&&-\nu k_0 \lp C_2\kappa+2C_0  \rp P_{\Omega^\perp} \Delta_x (\rho \Omega)\\
&& +\kappa\, \rho \, C_0\, (v\cdot\nabla_x)\Omega\\
&& +\rho \left[ \kappa C_0 P_{\Omega^\perp}A(v)\Omega + \lambda\lp\kappa C_0 -6 C_2-\kappa C_3  \rp \right] P_{\Omega^\perp} S(v)\Omega.
\end{eqnarray*}
Dividing the previous expression by $\kappa C_0$ we obtain Eq. \eqref{eq:hydro2} for $\Omega$ with 
$$c_2= \frac{C_2}{C_0}, \quad c_3=\frac{C_3}{C_0} .$$

 \bigskip

To conclude the theorem, we are left with computing the limit for Stokes equation  \eqref{eq:scaledsys_v}. For this, we just need to compute the  limit of the right hand side term, which in the limit $\eps \to 0$ corresponds to 
$$
- b\nabla_x\cdot\left(\int_{\mathbb{S}^{2}}{\lp \omega\otimes\omega-\frac{1}{3}\Id \rp \rho M_{\Omega}(\omega)\,d\omega}\right) .
$$
We compute next the value of the integral:
\begin{eqnarray*}
&& \hspace{-1cm}\rho\int_{\mathbb{S}^{2}}\lp \omega\otimes\omega-\frac{1}{3}\Id\rp M_{\Omega}\,d\omega \\
&=& \frac{\rho}{2} \int_{\mathcal{S}}\int^\pi_0 \left[(\cos\theta\, \Omega +\sin\theta \, w) \otimes (\cos\theta\, \Omega +\sin\theta\, w) - \frac{1}{3}\Id\right] \tilde M(\theta) \sin\theta\, d\theta dw\\
&=&\frac{\rho}{2} \lp\int^\pi_0 \cos^2\theta \tilde M(\theta) \sin\theta\,d\theta \rp \Omega \otimes \Omega +\, \rho \lp\int^\pi_0 \sin^3\theta \tilde M(\theta)\, d\theta \rp \lp\int_{\mathcal{S}} w\otimes w\, dw \rp-\frac{1}{3}\rho\Id\\
&=& \frac{\rho}{2} \lp \int^\pi_0 \cos^2\theta \tilde M(\theta) \sin\theta \, d\theta \rp \Omega \otimes \Omega +\, \rho \lp \int^\pi_0 \sin^3\theta \tilde M(\theta) \, d\theta \rp \frac{1}{2} (\mbox{Id} -\Omega \otimes \Omega)-\frac{1}{3}\rho\Id\\
&=& \rho\left[c_4 \,\lp\Omega \otimes \Omega-\frac{1}{3}\Id\rp+ c_5\mbox{Id} \right],
\end{eqnarray*}
where in the third equality we have disregarded the odd terms in $w$; and where
$$c_4=\frac{1}{2}\int^\pi_0 \sin\theta \tilde M(\theta) \lp \cos^2\theta -\frac{1}{2}\sin^2\theta \rp\, d\theta,  \quad c_5=\frac{1}{4}\int^\pi_0 \sin^3\theta \tilde M(\theta) \, d\theta+\frac{1}{6}(c_4-1).$$
 A computation shows that $c_5=0$. This implies that
$$
- b\nabla_x\cdot\left(\int_{\mathbb{S}^{2}}{\lp \omega\otimes\omega -\frac{1}{3}\Id \rp \rho M_{\Omega}(\omega)\,d\omega}\right)  =  -b \,c_4 \nabla_x \cdot \left[ \rho \lp \Omega \otimes \Omega-\frac{1}{3}\Id\rp \right].
$$

\end{proof}

\section{Linearised stability analysis of the SOH-Stokes system}

\label{sec:stability}

In this section, we investigate the linearised stability of the SOH-Stokes system \eqref{eq:macro_SOH_Stokes}. We linearize the SOH-Stokes system about constant (space-independent) functions $\rho$, $\Omega$, $v$, $p$ and study the stability of the resulting linear system. 
The main result of this section is that the SOH-Stokes model exhibits unstable modes for both pushers $(b>0)$ and pullers $(b<0)$. Since the SOH model describes aligned states (as the particle distribution function is non-isotropic, given by a von Mises distribution with non-zero parameter $\kappa$), this corresponds to analyzing the stability of the suspension near an aligned state. A previous analysis performed in \cite{saintillan2} in the case of nematic interactions (see also \cite{hohenegger2010stability}) has shown that both pushers and pullers are unstable to perturbations of an aligned state. We show that this instability still prevails for both pushers and pullers interacting though polar alignment. However, we show that pullers can be stable if they are slender rods $(\lambda=1)$. We will also see that the unstable modes for pushers and pullers are not the same. In the case of pullers, these are transverse modes (the perturbation to $\Omega$ is normal to the wave-vector) propagating along the unperturbed orientation vector $\Omega$. For pushers, these are longitudinal modes propagating transversely to the unperturbed orientation vector $\Omega$. The former have vanishing density perturbation while the latter have non-trivial density perturbation. For both pushers and pullers, the instability only develops at small values of $|k|$ (i.e. for large wavelengths) and has maximal growth rate at $k=0$. Therefore, we can expect that the typical spatial extension of the instability patterns will be set up by the system size.

 Here we assume that $a=1$ to simplify the analysis. Let 
$$\rho=\rho_0, \, \Omega = \Omega_0, \, v= v_0, \, p=p_0,$$
 be a uniform steady state for the SOH-Stokes system with $|\Omega_0|=1$. We expand it with a small perturbation parameter $\tau$:
$$\rho=\rho_0+\tau \rho_1(x,t), \, \Omega=\Omega_0+\tau \Omega_1(x,t),\, v=v_0 +\tau v_1(x,t),\, p= p_0+\tau p_1(x,t).$$
Dropping the higher order terms $\mathcal{O}(\tau^2)$ and using $(\rho, \Omega, v, p)$ to  represent the first order perturbation (rather than $(\rho_1, \Omega_1, v_1, p_1)$) we obtain the linearised system:
\begin{subequations} \label{eq:linearised_system}
\begin{numcases}{}
			\Omega_0\cdot \Omega=0, \label{eq:linearised_1}\\
			\partial_t \rho+ \rho_0\nabla_x\cdot (c_1\Omega +v) + \big( (c_1\Omega_0+v_0) \cdot \nabla_x\big) \rho=0,\\
\rho_0 \partial_t \Omega + \rho_0\big( (c_2\Omega_0+v_0)\cdot \nabla_x\big) \Omega +\frac{1}{\kappa}P_{\Omega_0^\perp}\nabla_x \rho\nonumber\\
\qquad\qquad= \gamma \rho_0 P_{\Omega_0^\perp}\Delta_x\Omega + \rho_0 P_{\Omega_0^\perp} \big( \tilde \lambda S(v)+A(v) \big) \Omega_0, \label{eq:linearised_Omega}\\
-\Delta_x v+ \nabla_x \tilde p = -\tilde b \rho_0 \big( (\Omega_0\cdot\nabla_x) \Omega+ (\nabla_x\cdot \Omega) \Omega_0 \big)-\tilde b \lp\Omega_0 \otimes \Omega_0 \rp \nabla_x \rho, \quad \label{eq:linearised_v}\\
\nabla_x\cdot v=0, 
\end{numcases}
\end{subequations}
where $\tilde b= b c_4$ and
$$\tilde p= p -\frac{\tilde b}{3}\rho.$$
 The first equation is consequence of $|\Omega_0|=1$.
To deduce the first term in the right hand side of Eq. \eqref{eq:linearised_Omega} we used that $P_{\Omega_0^\perp} \Omega_0 \Delta_x \rho=0$. Finally, to obtain the last term in Eq. \eqref{eq:linearised_v} we used that
$$\mathcal{Q} = c_4 \lp \Omega_0\otimes \Omega_0 - \frac{1}{3}\Id \rp + \tau c_4 (\Omega_0\otimes \Omega+\Omega\otimes \Omega_0) + \mathcal{O}(\tau^2).$$

\medskip
The main result of this section is the following:
\begin{theorem}[Linear stability analysis]
\label{lem:plane_wave_solutions}
There exists a non-trivial plane wave solution for the linearised system \eqref{eq:linearised_system} of the form
\begin{equation} \label{eq:plane_wave}
(\rho, \Omega, v, \tilde p) = (\bar \rho, \bar \Omega, \bar v, \bar p) e^{i(k\cdot x-\alpha t)},
\end{equation}
(where $(\bar\rho, \bar\Omega,\bar v, \bar p)=(\bar\rho, \bar\Omega,\bar v, \bar p)(\alpha,k)$ are complex-valued functions, with $k\in \R$, $\alpha\in \mathbb{C}$), if and only if, either $k=k_0\Omega_0$ for some $k_0\in \R$, $k_0 \neq 0$ or $P_{\Omega_0^\perp}k\neq 0$, as detailed next.  Denote $ k_0, \bar k, k^\perp, U_0, V_0$ by
$$k_0:= k\cdot \Omega_0,\quad \bar k:= k\cdot \bar \Omega,\quad k^\perp:= P_{\Omega_0^\perp}k,$$
$$U_0:= c_1\Omega_0+v_0,\quad V_0:= c_2\Omega_0+v_0.$$
\begin{itemize}
\item[\mbox{}] \textbf{Case A:} $k=k_0\Omega_0$ for $k_0\in\R$, $k_0\neq 0$.\\
 In this case $\alpha$ can only have two possible values:
\begin{itemize}
 \item[(a)] either $\alpha= (c_1+v_0\cdot \Omega_0)k_0$, and then $\bar \Omega=0$, $\bar\rho$ is arbitrary, $\bar p= - \tilde b \bar\rho$, $\bar v=0$;
\item[(b)] or
$$\alpha= (c_1+v_0\cdot\Omega_0)k_0+i\lp \frac{\tilde \lambda -1}{2}\tilde b \rho_0- \gamma k_0^2\rp,$$ 
and therefore it is stable $(\mbox{Im}(\alpha)\leq 0)$ if
$$|k|^2 \geq \frac{1}{2\gamma}\rho_0 \tilde b(\tilde \lambda -1).$$
In this case $\bar \rho=0$, $\bar \Omega$ is an arbitrary unit vector orthogonal to $\Omega_0$, $\bar p=0$, 
$$\bar v=-i \tilde b \frac{\rho_0}{k_0}\bar\Omega.$$
If $\tilde b<0$ (puller case), the modes are unstable in the range
\be \label{eq:unstable_range_A}
|k|^2\in \left[0, \frac{\rho_0 \tilde b (\tilde \lambda-1)}{2\gamma} \right].
\ee
The supremum of $\mbox{Im}(\alpha)$ in this range is
\begin{equation} \label{eq:supremumA}
\frac{\rho_0 \tilde b (\tilde \lambda-1)}{2\gamma},
\end{equation}
and corresponds to the limit of $\mbox{Im}(\alpha)$ when $k\to 0$.

\end{itemize}
\item[\mbox{}] \textbf{Case B:} $k^\perp=P_{\Omega_0^\perp}k\neq 0$.\\
 In this case  $\bar \Omega$ is of the form
\begin{equation} \label{eq:form_Omega_bar}
\bar \Omega= \eta \frac{k^\perp}{|k^\perp|},
\end{equation}
with $\eta=\pm 1$ and $(\alpha, k)$ are linked by the following dispersion relation,
 $$D_\eta(\alpha, k)=0,$$
where
\beqarl
D_\eta (\alpha, k)&=&  \bar k\Bigg\{\frac{\tilde b \rho_0}{2|k|^2}\Bigg[ \lp -4\tilde \lambda\frac{k_0^2}{|k|^2}+\tilde \lambda +1 \rp
\lp-\alpha+U_0\cdot k \rp \nonumber\\
&&\qquad\quad-c_1 k_0\lp -2\tilde \lambda\frac{k_0^2}{|k|^2}+\tilde \lambda+1 \rp \Bigg]+\frac{i}{\kappa} c_1 \Bigg\}(|k|^2-k_0^2)^{1/2} \nonumber\\
&& -\eta (-\alpha+U_0\cdot k) \left[i(-\alpha+V_0\cdot k) - \frac{(\tilde \lambda -1)\tilde b \rho_0}{2}\frac{k_0^2}{|k|^2}+\gamma|k|^2\right]\!.\qquad \label{eq:dispersion_relation}
\eeqarl

\medskip

In the particular case  where $k_0=0$, the dispersion relation simplifies to 
\be \label{eq:simplified_dispersion_relation}
\tilde D(\alpha, k)= \rho_0\frac{\tilde b}{2}(\tilde \lambda+1)(-\alpha+v_0\cdot k)+\frac{i}{\kappa} c_1|k|^2-(-\alpha+v_0\cdot k)[i (-\alpha+v_0\cdot k)+\gamma |k|^2]=0.
\ee
The corresponding modes are stable $(\mbox{Im}(\alpha)\leq 0$) if
\be \label{eq:stability condition case B}
|k|^2\geq \frac{1}{2\gamma}\rho_0 \tilde b (\tilde \lambda +1),
\ee
and the perturbation is given by 
\be
\bar p = 0,\quad
\bar \rho = \eta c_1\frac{ \rho_0 |k|}{\alpha-v_0\cdot k},\quad
\bar v =-i\eta\tilde b\frac{\rho_0}{|k|}\Omega_0.
\ee
If $\tilde b >0$ (pusher case), the modes are unstable in the range
$$|k|^2 \in \left[0, \frac{\rho_0\tilde b (\tilde \lambda+1)}{2\gamma}\right].$$
The supremum of $\mbox{Im}(\alpha)$ in this range is
$$\frac{\rho_0\tilde b (\tilde \lambda+1)}{2},$$
and corresponds to the limit $k\to 0$. 

\end{itemize}
\end{theorem}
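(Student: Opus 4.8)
The plan is to insert the plane-wave ansatz \eqref{eq:plane_wave} into the linearised system \eqref{eq:linearised_system}, so that every differential operator becomes multiplication ($\partial_t\mapsto-i\alpha$, $\nabla_x\mapsto ik$, $\Delta_x\mapsto-|k|^2$) and the problem collapses to a homogeneous linear system for the constants $(\bar\rho,\bar\Omega,\bar v,\bar p)$ subject to $\bar\Omega\cdot\Omega_0=0$ and $k\cdot\bar v=0$. First I would dispose of the Stokes block: applying the Leray projector $P_k:=\Id-k\otimes k/|k|^2$ to the momentum equation \eqref{eq:linearised_v} removes $\nabla_x\tilde p$ and gives $\bar v=-\,i\tilde b\,|k|^{-2}P_k\big[\rho_0(k_0\bar\Omega+\bar k\,\Omega_0)+k_0\bar\rho\,\Omega_0\big]$, while contracting \eqref{eq:linearised_v} with $k$ gives $\bar p$; thus $\bar v$ and $\bar p$ are explicit linear functions of $(\bar\rho,\bar\Omega)$. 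Substituting $\bar v$ into the continuity equation and into the $\Omega$-equation, and using the elementary plane-wave identity $(\tilde\lambda S(v)+A(v))\Omega_0\mapsto\frac{i}{2}\big[(\tilde\lambda+1)(\bar v\cdot\Omega_0)\,k+(\tilde\lambda-1)k_0\,\bar v\big]$ to evaluate the Jeffery term, leaves a closed homogeneous linear system for the three scalar unknowns $\bar\rho$ and the two components of $\bar\Omega\in\Omega_0^\perp$; a non-trivial solution exists iff the associated determinant vanishes, which is the dispersion relation. Since for $k=0$ the system forces all fields constant, the only relevant wave-vectors are $k=k_0\Omega_0$ with $k_0\neq0$ or $k^\perp\neq0$, giving the stated dichotomy.

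To compute the determinant I would work in the orthonormal basis of $\Omega_0^\perp$ adapted to $k$: when $k^\perp\neq0$, write $\Omega_0^\perp=\mathrm{span}(k^\perp)\oplus\mathrm{span}(e)$ with $e$ a unit vector orthogonal to both $\Omega_0$ and $k$. The crucial structural point is that $e\perp k$, so the Leray projection does not couple the $e$-direction to the rest; consequently $\bar\Omega\cdot e$ decouples and obeys the scalar equation $\big[\,i(-\alpha+V_0\cdot k)+\gamma|k|^2-\tfrac{(\tilde\lambda-1)\tilde b\rho_0}{2}k_0^2/|k|^2\big](\bar\Omega\cdot e)=0$ --- exactly the second bracket in \eqref{eq:dispersion_relation} --- while the pair $\big(\bar\rho,\ (\bar\Omega\cdot k^\perp)/|k^\perp|\big)$ satisfies a $2\times2$ system built from the continuity equation and the $k^\perp$-component of the $\Omega$-equation. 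On the non-trivial branch $\bar\Omega\cdot e=0$ one has $\bar\Omega=\eta\,k^\perp/|k^\perp|$ (hence $\bar k=\eta(|k|^2-k_0^2)^{1/2}$, with $\eta=\pm1$ labelling the two orientations), and after collecting terms the vanishing of that $2\times2$ determinant is precisely $D_\eta(\alpha,k)=0$. In \textbf{Case A} ($k=k_0\Omega_0$) one has $k^\perp=0$, $\bar k=0$, $P_k=P_{\Omega_0^\perp}$, so $\bar v=-\,i\tilde b\rho_0 k_0^{-1}\bar\Omega$ with $\bar v\cdot\Omega_0=0$, and the continuity equation degenerates to $(\alpha-U_0\cdot k)\bar\rho=0$; this forces the dichotomy: sub-case (a), $\alpha=U_0\cdot k$, in which the remaining scalar equation forces $\bar\Omega=0$ (hence $\bar v=0$), $\bar\rho$ free, $\bar p=-\tilde b\bar\rho$; and sub-case (b), $\bar\rho=0$, $\bar\Omega$ free, $\alpha$ fixed by the single scalar relation from the $\Omega$-equation, $\bar p=0$. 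Finally, setting $k_0=0$ in Case B kills all $k_0$-terms and collapses $D_\eta$ to the quadratic relation $\tilde D(\alpha,k)=0$ of \eqref{eq:simplified_dispersion_relation}; the explicit $\bar\rho$, $\bar v$, $\bar p$ follow by back-substitution.

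It remains to extract the stability and growth-rate claims. In the two cases with a closed form (Case A(b), and Case B with $k_0=0$) the dispersion relation solves explicitly for $\alpha$: linear in the first case, and in the second a quadratic which, with $\beta:=-\alpha+v_0\cdot k$, reads $i\beta^2+\big(\gamma|k|^2-\tfrac{\tilde b\rho_0}{2}(\tilde\lambda+1)\big)\beta-\tfrac{i}{\kappa}c_1|k|^2=0$. From these I would read off $\mathrm{Im}(\alpha)$ and show $\mathrm{Im}(\alpha)\le0$ iff $|k|^2\ge\tfrac{\rho_0\tilde b(\tilde\lambda-1)}{2\gamma}$ in Case A and iff $|k|^2\ge\tfrac{\rho_0\tilde b(\tilde\lambda+1)}{2\gamma}$ in the $k_0=0$ subcase --- for the quadratic this is cleanest by noting a root can reach the real axis only where $\tilde D$ admits a solution with $\mathrm{Im}(\alpha)=0$ (which pins down exactly that threshold value of $|k|^2$), then using continuity in $|k|$ and the large-$|k|$ behaviour to fix the stable side --- then identifying the unstable $|k|$-range as the complementary interval and letting $k\to0$ for the supremal growth rate. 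The sign constraints $\rho_0,\gamma>0$ and $\tilde\lambda\le1$ then confine the instability to pullers ($\tilde b<0$) in Case A and to pushers ($\tilde b>0$) in the $k_0=0$ subcase, and make pullers marginal at $\tilde\lambda=1$ (slender rods). The main obstacle I expect is not conceptual but bookkeeping: pushing the Leray projection and the Jeffery term through the $\Omega$-equation is what produces the awkward coefficients $-4\tilde\lambda k_0^2/|k|^2+\tilde\lambda+1$ and $-2\tilde\lambda k_0^2/|k|^2+\tilde\lambda+1$ in \eqref{eq:dispersion_relation}, and one must keep careful track of the three quantities $|k|^2$, $k_0^2$, $|k^\perp|^2$ and of the non-conservative projector $P_{\Omega_0^\perp}$; once the factorisation isolating the $\bar\Omega\cdot e$ mode is in hand, the rest is routine algebra.
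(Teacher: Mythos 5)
Your proposal is correct and follows the same overall strategy as the paper: insert the plane-wave ansatz, eliminate $(\bar v,\bar p)$ through the Stokes block (your Leray-projector step is just a repackaging of the paper's decomposition of $\bar v$ into its $\Omega_0$-parallel and $\Omega_0$-orthogonal parts), reduce to a system for $(\bar\rho,\bar\Omega)$, split on $k^\perp=0$ versus $k^\perp\neq 0$, and analyse the resulting quadratic for $k_0=0$. The one place where your route genuinely diverges is Case B. The paper works directly from the vector identity $[\,\cdot\,]\bar\Omega = X k^\perp$ and argues that $X\neq 0$ forces $\bar\Omega\parallel k^\perp$; your orthogonal splitting $\Omega_0^\perp=\mathrm{span}(k^\perp)\oplus\mathrm{span}(e)$ instead isolates the $e$-component as a decoupled scalar equation. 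This is more than a cosmetic difference: your decomposition shows that there is an additional admissible branch, namely $\bar\rho=0$, $\bar\Omega\parallel e$, $\bar p=0$, $\bar v=-i\tilde b\rho_0 k_0 e/|k|^2$, with $\alpha$ determined by the vanishing of the second bracket of \eqref{eq:dispersion_relation} alone (one checks directly that this satisfies all of the Fourier-space equations). The paper's step ``$X=0$ and we conclude that $\bar\Omega=0$'' tacitly assumes the scalar coefficient multiplying $\bar\Omega$ is nonzero, so this transverse branch is not covered by the statement $\bar\Omega=\eta k^\perp/|k^\perp|$; your treatment is the more careful one here, though you should not label the $k^\perp$-branch ``the non-trivial branch'' as if the $e$-branch were trivial -- it is a genuine mode. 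The only other difference is minor: for the stability of the $k_0=0$ quadratic the paper substitutes $X=iY$ to obtain a real-coefficient polynomial and uses the sign of the sum and product of its roots, whereas you propose a boundary-crossing argument (roots can reach $\mathrm{Im}(\alpha)=0$ only at the threshold $|k|^2=\rho_0\tilde b(\tilde\lambda+1)/(2\gamma)$ or at $k=0$, then continuity plus large-$|k|$ asymptotics). Both work; the paper's Vieta computation additionally delivers, almost for free, the bound $\mathrm{Im}(\alpha)\le\sigma(|k|)$ used to identify the supremum of the growth rate at $k\to 0$, which your sketch would still need to supply.
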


\begin{remark}
Notice that \emph{Case (B)} when $k_0=0$  corresponds to $\bar v \perp \bar \Omega$, while \emph{Case (A) (b)} corresponds to $\bar v \parallel \bar \Omega$. This is the signature that these two cases correspond to different modes.
\end{remark}

\begin{remark}[Interpretation of the linear stability analysis, Th. \ref{lem:plane_wave_solutions}]
\label{rem:stability_analysis}
\mbox{}\\
\textbf{Case (A) (a)}:
 This case corresponds to the simple propagation of a density perturbation along $\Omega_0$ at speed 
	$$\frac{\alpha}{k_0}= c_1 + v_0\Omega_0,$$
 with no perturbation of the orientation since $\bar \Omega=0$.
 \medskip
 
\noindent\textbf{Case (A)(b)}: Notice that $\tilde \lambda -1 \in [-2,0]$, since $\tilde \lambda \in [-1,1]$. Therefore, if $\tilde b>0$ (pushers), the mode is stable and if $\tilde b<0$ (pullers) the mode  is unstable for small values of $|k|$. In this last case, the coupling with Stokes equation destabilizes the model given that in the SOH model alone all modes are stable, see Ref. \cite{degond2008continuum}. Notice that the diffusion term helps to stabilize the modes by damping them,  but since it involves a second order derivative, the damping is proportional to $|k|^2$ and is very small for small values of $|k|$ but dominates for large values of $|k|$. Therefore, for large values of $|k|$, the diffusion damping is enough to compensate the instability due to the coupling with the Stokes equation,  which is independent of $|k|$. This is why the model is stable for large values of $|k|$. However, for small values of $|k|$, the diffusion damping is not strong enough and the instability of the Stokes coupling is predominant. Consequently, small $|k|$-modes (large wavelength) are unstable.
Moreover, the supremum  of $\mbox{Im}(\alpha)$ corresponds to the limit $k\to 0$, which means that the typical spatial extension of the fastest growing unstable mode will be of the size of the system.

\medskip
\noindent\textbf{Case (B)}: We analyse the particular case $k_0=0$. We observe analogous phenomena as in Case (A)(b) but reversing the roles of pullers and pushers since $\tilde \lambda+1\geq 0$:
if $\tilde b<0$ (pullers), the constant solution is always stable but in the 
 pusher case $(\tilde b>0)$, the coupling with Stokes equation destabilizes the mode. 
The supremum value of $\mbox{Im}(\alpha)$  also corresponds to the limit $k\to 0$.

\end{remark}

\begin{remark}
Figures \ref{Fig:instability}-\ref{Fig:pusher_instability} provide a schematical explanation of the instability mechanisms. Fig. \ref{Fig:instability} depicts  the perturbation velocity field generated  by pushers and pullers. Fig. \ref{Fig:puller_instability} and \ref{Fig:pusher_instability} provide  a description of the instability mechanisms for pullers and pushers respectively.
\end{remark}

\begin{figure*}[t!]
\begin{subfigure}[b]{0.45\textwidth}
\includegraphics[width=\textwidth]{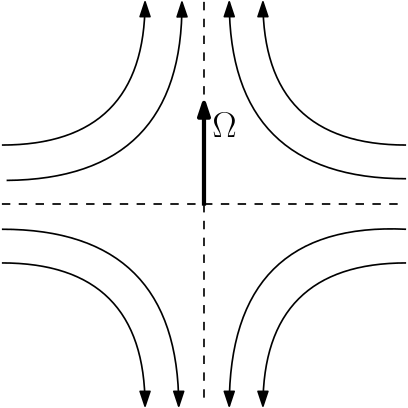}
\caption{ }
\label{Fig:inst1}
\end{subfigure}
\hfill
\begin{subfigure}[b]{0.45\textwidth}
\includegraphics[width=\textwidth]{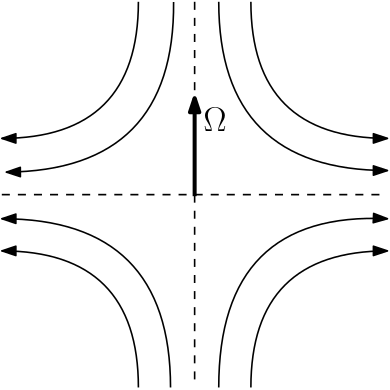}
\caption{ }
\label{Fig:inst2}
\end{subfigure}
\caption{\ref{Fig:inst1} Flow field generated by a pusher. \ref{Fig:inst2} Flow field generated by a puller. These flow fields generate perturbations to  the background velocity field and in some cases,  can provide the necessary positive feedback mechanism to trigger an instability. Instability mechanisms are different for pushers and pullers. } 
\label{Fig:instability}
\end{figure*}

\begin{figure*}[t!]
\begin{subfigure}[b]{0.14\textwidth}
\includegraphics[width=\textwidth]{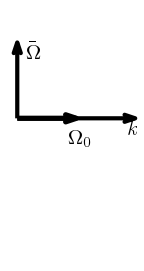}
\caption{ }
\label{Fig:puller_inst1}
\end{subfigure}
\hfill
\begin{subfigure}[b]{0.8\textwidth}
\includegraphics[width=\textwidth]{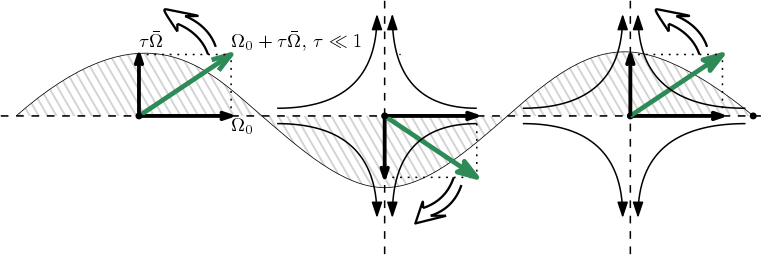}
\caption{ }
\label{Fig:puller_inst2}
\end{subfigure}
\caption{Puller unstable mode: \ref{Fig:puller_inst1} geometric configuration. The puller unstable mode is a transverse mode $(\bar \Omega \perp k)$ propagating parallel to the unperturbed mean orientation $\Omega_0$. \ref{Fig:puller_inst2} Schematics of the instability mechanism. There is no density perturbation involved. The instability is due to a reinforcement of the mis-alignment between  the swimmer mean orientation  $\Omega= \Omega_0+\tau \bar \Omega$, $(\tau \ll 1)$ and the unperturbed orientation  $\Omega_0$. This reinforcement results from  the torque applied to  a given  swimmer by  the velocity  perturbation generated  by the neighbouring swimmers ahead and behind it. This torque is materialised in the picture by the double arrows.}
\label{Fig:puller_instability}
\end{figure*}

\begin{figure*}[t!]
\begin{subfigure}[b]{0.14\textwidth}
\includegraphics[width=\textwidth]{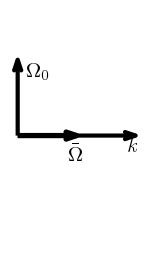}
\caption{ }
\label{Fig:pusher_inst1}
\end{subfigure}
\hfill
\begin{subfigure}[b]{0.8\textwidth}
\includegraphics[width=\textwidth]{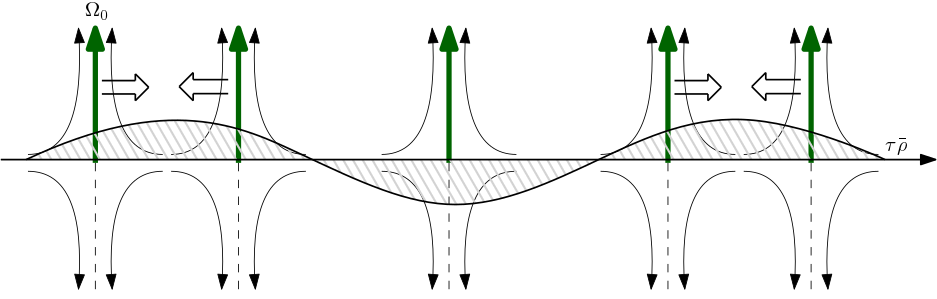}
\caption{ }
\label{Fig:pusher_inst2}
\end{subfigure}
\caption{Pusher unstable mode: \ref{Fig:pusher_inst1} geometric configuration. The pusher unstable mode is a longitudinal mode. The perturbation  $\bar \Omega$ is parallel to the propagation direction $(\bar \Omega \parallel k)$ and both are perpendicular to the unperturbed  mean orientation $\Omega_0$. \ref{Fig:pusher_inst2} Schematics of the instability mechanism. Due to the configuration of the perturbation velocity field that they generate, swimmers are attracted	 by regions of higher swimmer density,  thereby amplifying density perturbations.}
\label{Fig:pusher_instability}
\end{figure*}

\begin{proof}[Proof of Th. \ref{lem:plane_wave_solutions}]
Substituting the plane-wave solution \eqref{eq:plane_wave} in the linearised system \eqref{eq:linearised_system} we obtain
\begin{subequations} \label{eq:fourier}
\begin{numcases}{}
			\Omega_0\cdot \bar\Omega=0, \label{eq:fourier_1}\\
			-\alpha\bar\rho+ \rho_0 c_1\bar k + \bar\rho  U_0 \cdot k =0, \label{eq:fourier 2}\\
-i\alpha\rho_0 \bar\Omega + i\rho_0\big( V_0\cdot k\big) \bar \Omega +\frac{i}{\kappa}\bar\rho k^\perp \nonumber\\
\qquad \qquad=-\gamma |k|^2 \rho_0  \bar \Omega + \frac{i}{2} \rho_0 \left[ (\bar v \cdot \Omega_0) (\tilde \lambda+1) k^\perp + k_0 (\tilde \lambda -1) P_{\Omega_0^\perp}\bar v \right],\label{eq:fourier_Omega}\\
|k|^2 \bar v+ik\bar p = -i\tilde b \rho_0 \big( k_0\bar \Omega+ \bar k \Omega_0 \big)-i \tilde b\,\bar\rho k_0 \Omega_0, \quad \label{eq:fourier_v}\\
k\cdot \bar v=0,\label{eq:fourier_divergence} 
\end{numcases}
\end{subequations}
where in \eqref{eq:fourier 2} we used that $k\cdot \bar v=0$ thanks to \eqref{eq:fourier_divergence}; in \eqref{eq:fourier_Omega} we used that $P_{\Omega_0^\perp}\bar \Omega=\bar \Omega$ thanks to \eqref{eq:fourier_1}, as well as,   that $\nabla_x v = i k\otimes \bar v$ and therefore
$$S(v)= i \frac{k\otimes \bar v + \bar v \otimes k}{2}, \quad A(v)= i \frac{k\otimes \bar v - \bar v \otimes k}{2}, $$
and so
$$S(v)\Omega_0 = i\frac{(\bar v \cdot \Omega_0)k + k_0 \bar v}{2}, \quad A(v)\Omega_0 = i\frac{(\bar v \cdot \Omega_0)k - k_0 \bar v}{2}.$$

Now, we look for the existence of a non-trivial solution of system \eqref{eq:fourier}.  From the last two equations we deduce that
\begin{equation} \label{eq:for p}
 \bar p= -\tilde b \frac{k_0}{|k|^2}(2\rho_0 \bar k+\bar\rho k_0).
 \end{equation}
 We note that we can divide by $|k|^2$ since $k\neq 0$. Otherwise, if $k=0$, then in the case $\alpha \neq 0$ (which is the case of a non-trivial perturbation we are interested in), this implies that $\bar\rho=0$, $\bar \Omega=0$, i.e., the perturbation is null.
Next, we obtain an expression for $\bar v$ by decomposing it into $\bar v = P_{\Omega_0^\perp}\bar v + P_{\Omega_0}\bar v$, since this will be useful in the sequel. Doing the inner product of Eq. \eqref{eq:fourier_v}  with $\Omega_0$ and using Eq. \eqref{eq:fourier_1}, we obtain:
\begin{equation} \label{eq:for v parallel}
\bar v\cdot \Omega_0 = \frac{-i}{|k|^2} \lp  k_0 \bar p + \tilde b \rho_0 \bar k+\tilde b \bar \rho k_0\rp.
\end{equation}
Projecting now Eq. \eqref{eq:fourier_v} on the orthogonal to $\Omega_0$ we obtain
\begin{equation}\label{eq:for v orthogonal}
P_{\Omega_0^\perp} \bar v=-\frac{i}{|k|^2} \lp \bar p k^\perp + \tilde b\rho_0 k_0\bar \Omega\rp.
\end{equation}
We insert these expressions in \eqref{eq:fourier_Omega} to obtain:
\begin{eqnarray} 
&&\hspace{-1cm}\left[i\rho_0\lp -\alpha+V_0\cdot k \rp +\gamma|k|^2\rho_0 +\frac{(\tilde \lambda-1)\tilde b}{2}\frac{ \rho^2_0k_0^2}{|k|^2}\right]\bar\Omega\nonumber\\
 &=& \left[\frac{1}{2}\frac{\rho_0}{|k|^2}\lp 2\tilde \lambda k_0 \bar p+\tilde b (\tilde \lambda+1) ( \rho_0 \bar k + \bar \rho k_0)\rp-\frac{i}{\kappa}\bar\rho\right] k^\perp. 
\label{eq:for Omega bar}
\end{eqnarray}

Next, to study the solutions of this equation, we consider separately the cases $k^\perp=0$ and $k^\perp\neq 0$:

\noindent \paragraph{Case (A)} Suppose $k^\perp=0$, i.e. $k=k_0 \Omega_0$ with $k_0\neq 0$. We can distinguish two cases:
\begin{itemize}
\item[(a)] Suppose $\bar \Omega=0$, then Eq. \eqref{eq:fourier_1} and the fact that $\bar\rho\neq 0$ (otherwise the perturbation is null) give $\alpha=(c_1+v_0\cdot\Omega_0)k_0$. In this case one can check that $\bar \rho$ is arbitrary, $\bar p=-\tilde b \bar \rho$ and $\tilde v =0$.
\item[(b)] Suppose $\bar \Omega \neq 0$. Then, from Eq. \eqref{eq:for Omega bar}, it must hold (remember that $|k|^2=k^2_0)$) 
$$\alpha= (c_2 + v_0\cdot \Omega_0)k_0 + i\frac{(\tilde \lambda-1)\tilde b}{2}\rho_0 - i\gamma k_0^2.$$
The condition for stability is $\mbox{Im}(\omega) \leq 0$, i.e.,
$$|k|^2=|k_0|^2\geq \frac{1}{2\gamma}\rho_0 \tilde b (\tilde \lambda -1).$$ 
In this case one can check that $\bar\rho=0$, $\bar \Omega$ is arbitrary with $\bar \Omega, \Omega_0\neq 0$, $\bar p = 0$ and 
$$\bar v= -i\tilde b\frac{\rho_0}{k_0}\bar \Omega.$$
Moreover, for $\tilde b<0$, it is straightforward to see that the range for which $|k|$ is unstable is given by \eqref{eq:unstable_range_A} and the supremum of $\mbox{Im}(\alpha)$ is attained at \eqref{eq:supremumA} in the limit $k\to 0$.

\end{itemize}

\noindent \paragraph{Case (B)} Suppose that $k^\perp \neq 0$. 
The coefficient on the right-hand side of Eq. \eqref{eq:for Omega bar} is written (thanks to \eqref{eq:for p}) as
$$X:=\frac{\rho_0}{2|k|^2}\left[ 2\tilde \lambda k_0 \lp- \frac{\tilde b k_0}{|k|^2}(2\rho_0 \bar k+\bar \rho k_0) \rp +(\tilde \lambda+1) (\tilde b \rho_0 \bar k+\tilde b \bar\rho k_0) \right] - \frac{i}{\kappa}\bar\rho.$$
First we check that $X\neq 0$. Suppose that $\bar\rho= 0$. From Eq. \eqref{eq:fourier_1}, this implies that $\bar k=0$. So $X=0$ and we conclude that $\bar \Omega=0$. So the perturbation is null. Therefore it cannot be that $\bar\rho=0$, which implies $\mbox{Im}(X)\neq 0$, so that $X\neq 0$. Therefore, $\bar \Omega \neq 0$ and from \eqref{eq:for Omega bar}, it should be given by Eq. \eqref{eq:form_Omega_bar}. Then, from \eqref{eq:for Omega bar} again and the fact that $k^\perp= (|k|^2-k_0^2)^{1/2}$, the dispersion relation is given by:
\begin{eqnarray} 
&&\hspace{-1cm}\left\{\tilde b \frac{\rho_0}{2|k|^2}\left[ \rho_0 \bar k \lp -4\tilde \lambda \frac{k_0^2}{|k|^2}+\tilde\lambda+1\rp + \bar \rho k_0 \lp -\frac{2\tilde \lambda k_0^2}{|k|^2}+\tilde \lambda+1\rp\right]- \frac{i}{\kappa}\bar \rho \right\}(|k|^2-k_0^2)^{1/2}\qquad\nonumber\\
 && \qquad\qquad =\eta \left[i\rho_0 (-\alpha + V_0 \cdot k) - (\tilde \lambda-1) \frac{\tilde b \rho^2_0}{2}\frac{k_0^2}{|k|^2}+\gamma |k|^2 \rho_0 \right],
\label{eq:for Omega bar 2}
\end{eqnarray}
and from Eq. \eqref{eq:fourier 2} we have the relation
\be \label{eq:aux_rho_bar}
 (-\alpha + U_0\cdot k) \bar \rho + \rho_0 c_1 \bar k=0.\ee
We check that $-\alpha + U_0 \cdot k\neq 0$ by contradiction. Suppose that $-\alpha + U_0 \cdot k=0$, then, from the previous equation, we deduce that $\bar k = \bar \Omega \cdot k = \bar \Omega \cdot (k^\perp +k_0\Omega_0)=0$. From Eq. \eqref{eq:form_Omega_bar}, we get that $\bar \Omega \parallel k^\perp$, which implies that $|k^\perp|=0$. This contradicts our assumption that $k^\perp \neq 0$ and therefore, we conclude that, effectively, $-\alpha + U_0 \cdot k\neq 0$. So, multiplying Eq. \eqref{eq:for Omega bar 2} by $-\alpha + U_0 \cdot k\neq 0$ and using Eq. \eqref{eq:aux_rho_bar}, we get the dispersion relation in Eq.  \eqref{eq:dispersion_relation}.

\medskip
Now, to simplify the analysis we will restrict ourselves to the case where $k^\perp=k$, i.e. $k_0=k\cdot \Omega_0=0$. This implies, in particular, that $U_0\cdot k=V_0\cdot k= v_0\cdot k$, as well as,
$$\bar k = k\cdot \bar \Omega = k^\perp \cdot \bar \Omega=\eta |k^\perp|=\eta|k|.$$
With these considerations one can simplify the dispersion relation \eqref{eq:dispersion_relation} into
$$\tilde D(\alpha,k)=0,$$
where $\tilde D(\alpha,k)$ is given in 
 Eq. \eqref{eq:simplified_dispersion_relation}.

Using the variable
$X=\alpha - v_0\cdot k$ we can recast $\tilde D(\alpha,k)=0$ into:
$$X^2+iX\lp\gamma |k|^2-\frac{\rho_0\tilde b (\tilde \lambda+1)}{2}\rp- \frac{c_1|k|^2}{\kappa}=0,$$
after multiplying by $i$.
Now, changing variables $X=iY$, we have that $Y$ solves
\be \label{eq:Y}
P(Y):=Y^2+Y \lp\gamma|k|^2-\frac{\rho_0\tilde b (\tilde \lambda+1)}{2}\rp + \frac{c_1|k|^2}{\kappa}=0.
\ee
Stability in this case means $\mbox{Im}(X)<0$, i.e. $\mbox{Re}(Y)<0$. The polynomial $P$  has  real coefficients. There are two possibilities:
\begin{itemize}
	\item If  $P$ has positive discriminant, its two roots are real. In this case, to have stability we require them both to be negative, i.e. their product $\pi$ has to be positive and their sum $\sigma$ negative. The product is given by 
	$$\pi=\frac{c_1|k|^2}{\kappa} \geq 0,$$
	 and their sum is 
\be \label{eq:sum_roots}	
	\sigma=- \gamma|k|^2+ \frac{\rho_0\tilde b (\tilde \lambda+1)}{2} .
\ee
	So in this case the stability criteria corresponds to  $\sigma\leq 0$, which leads to Eq. \eqref{eq:stability condition case B}.
	\item If the polynomial $P$ has negative discriminant, the two roots are complex conjugate. Their real part is half their sum $\sigma$.
	 So again the stability criterion reduces to asking that $\sigma$ is negative, and we are left with the same stability criterion \eqref{eq:stability condition case B} as before. 
	
\end{itemize}

\bigskip
We suppose now that $\tilde b>0$, and we want to determine the supremum  on the instability range
$$|k|\in \left[ 0, \lp \frac{\rho\tilde b(\tilde \lambda+1)}{2\gamma}\rp^{1/2}\right],$$
and the corresponding value of $|k|$.
This supremum corresponds to $k_{max}=\mbox{argmax}\, \mbox{Re}(Y)= \mbox{argmax Im}(\alpha)$. We have seen that in the case where the  roots are real, they have the same sign. Therefore,  any root is less than the sum  $\sigma(|k|)$ given by \eqref{eq:sum_roots}.

The maximum value of $\sigma(|k|)$ is at $k=0$, i.e.,
$$\sigma(0)= \frac{\rho_0\tilde b (\tilde \lambda+1)}{2}.$$
One can easily check that $\sigma(0)$ is a root of $P$ for $|k|=0$ (the other root being 0) and therefore one of the roots is maximal at $|k|=0$.
\end{proof}

 \section{Extensions of the model}
 \label{sec:refinements}

 \subsection{Adding short-range repulsion}
 \label{sec:repulsion}
 The Vicsek-Stokes coupling \eqref{eq:IBM1}-\eqref{eq:IBM4} presented here can be extended towards different directions. Particularly, in regions where agents become highly packed, a repulsion force can be enforced between neighbouring particles to better account for volume exclusion. This can be easily done following Ref. \cite{degond2015macroscopic} where repulsion is introduced in the Vicsek model and coarse-grained into the Self-Organised Hydrodynamic model with Repulsion (SOHR). Particularly the individual based model corresponds to:
		\begin{empheq}[left=\empheqlbrace]{align}
			&dX_i = u_i dt= v(X_i,t) dt + a\omega_i dt -\mu (\nabla_x\Phi)(X_i,t) , \label{eq:IBM1_r}\\
			&d\omega_i = P_{\omega_i^{\perp}}\circ \Big[\nu \overline{\omega}_idt  -\xi (\nabla_x \Phi)(X_i,t)	 dt+ \sqrt{2D}\,dB_{t}^{i}\,\, +\big(\lambda S(v) + A(v)\big)\omega_i dt\Big], \label{eq:IBM2_r}\\
			&\bar\omega_i = \frac{J_i}{|J_i|} \text{ with } J_i = \sum_{k=1}^N  K\lp \frac{|X_i-X_{k}|}{R} \rp\omega_k, \label{eq:IBM2.1_r}\\
			&-\Delta_x v + \nabla_x p =  - \frac{b}{N}\sum_{i=1}^{N}\lp \omega_i\otimes\omega_i-\frac{1}{3}\Id\rp\nabla_x \delta_{X_i(t)},\label{eq:IBM3_r}\\ 
			&\nabla_x \cdot v = 0, \label{eq:IBM4_r}
		\end{empheq} 
with the same notations as for the system \eqref{eq:IBM1}--\eqref{eq:IBM4}, where $\mu, \xi>0$ and the repulsive potential $\Phi$ is defined as
$$\Phi(x,t)=\frac{1}{N}\sum_{k=1}^N \phi\lp \frac{|x- X_k(t)|}{r}\rp,$$
where $\phi=\phi(|x|)$ is a binary repulsion potential that only depends on the distance, and where $r>0$ is the typical repulsion range. We assume that $x\mapsto \phi(|x|)$ is smooth, as well as,
$$\phi\geq 0, \quad \int_{\R^3}\phi(|x|)\, dx<\infty,$$
which implies, in particular, that $\phi(|x|)\to 0$ as $|x|\to \infty$. 

The only differences between System \eqref{eq:IBM1_r}--\eqref{eq:IBM4_r} with the original Vicsek-Stokes system \eqref{eq:IBM1}--\eqref{eq:IBM4} are the addition of two new terms: the last term
 to the evolution of $X_i(t)$ in Eq. \eqref{eq:IBM1_r}, which expresses the repulsion force, and the second term 
in the evolution of $\omega_i(t)$ in Eq. \eqref{eq:IBM2_r}, which is a relaxation term of $\omega_i$ towards the force $\nabla_x\Phi(X_i(t),t)$. This terms models the fact that particles tend to actively align their directions of motion with the  force.

The presence of these new terms modifies the coarse-grained equations. To begin with, the mean-field equations correspond to (following Sec. \ref{sec:mean_field} and Ref. \cite{degond2015macroscopic}):
		\begin{equation}
			\begin{cases}
			&\partial_t f + \nabla_x\cdot(u_{(f,v)}f) \\
			&\qquad+ \nabla_{\omega}\cdot\Big( \big[P_{\omega^{\perp}} \left\{\nu \overline{\omega}_{f} -\xi \nabla_x \Phi_f(x,t)+\lp\lambda S(v)+ A(v)\rp\omega\right\}\big]f\Big) = D\Delta_{\omega}f, \label{eq:kinetic_eq_repulsion}\\
			&u_{(f,v)}(x,\omega,t)=v(x,t) + a\omega-\mu\nabla_x \Phi_f(x,t), \\
			&-\Delta_x v + \nabla_x p =  - b\nabla_x\cdot Q_{f},\\
			&\nabla_x \cdot v =0,
			\end{cases}
		\end{equation}
following the notations of Prop. \ref{prop:mean-field} and where
$$\Phi_f (x,t) = \int_{\R^3}\phi\left(\frac{|x-y|}{r} \right)\rho_f(y,t)\,dy.$$
The difference with respect to the mean-field system in \eqref{eq:kinetic_eq} is the extra term $\xi\nabla_x\Phi_f(x,t)$ in the equation for $f$ and the term $-\mu\nabla_x \Phi_f$ in the equation for the velocity $u_{(f,v)}$. 

To perform the macroscopic limit, we rescale the mean-field equations \eqref{eq:kinetic_eq_repulsion} analogously as  in Sec. \ref{sec:mean_field} adding the rescaling of $r=\eps \tilde r$ for $\tilde r>0$. Remember that the alignment interaction range $R$ is rescaled as $R=\sqrt{\eps} \tilde R$, therefore the alignment interaction range is larger than the repulsive range. Skipping the tildes, we obtain the rescaled system:
	\begin{equation}
			\begin{cases} \label{eq:rescaled-mf-repulsion}
			&\varepsilon\left[\partial_t f^{\varepsilon} + \nabla_x\cdot(u_{(f^{\varepsilon},v^\eps)}f^{\varepsilon})\right]\\
			& + \nabla_{\omega}\cdot\Big( \big[P_{\omega^{\perp}} \{ \nu\overline{\omega}_{f^{\varepsilon}}-\eps\xi \nabla_x \Phi_{f^\eps}(x,t)+\varepsilon\lp\lambda S(v^\eps)+A(v^\eps)\rp\omega\}\big]f^{\varepsilon}\Big) = D\Delta_{\omega}f^{\varepsilon}, \\
			&u_{(f^{\varepsilon},v^\eps)}(x,\omega,t)=v^\eps(x,t) + a\omega-\mu\nabla_x \Phi_{f^\eps}(x,t), \\
			& \bar \omega^\eps_{f}= \frac{J^\eps_{f}}{|J^\eps_{f}|}, \quad J^\eps_{f} = \int_{\mathbb{S}^2 \times \R^3} \omega K\lp\frac{|x-y|}{\sqrt{\eps}R} \rp f\, d\omega dy, \\
			& \Phi^\eps_f = \int_{\R^3}\phi \lp \frac{|x-y|}{\eps r}\rp \rho_{f}(y,t)\, dy,\\
			&-\Delta_x v^\eps + \nabla_x p^\eps =  - b\nabla_x\cdot G_{f^{\varepsilon}},\\
			&\nabla_x\cdot v^\eps=0.
			\end{cases}
		\end{equation}

From here we obtain the macroscopic equations as $\eps \to 0$:
\begin{theorem}[Macroscopic equations with volume exclusion]
\label{th:macro_volume_exclusion}
Consider the rescaled system \eqref{eq:rescaled-mf-repulsion}. When $\eps \to 0$, it holds (formally) that
$$(f^\eps, v^\eps, p^\eps)\to ( f = \rho M_\Omega, v, p),$$
where $\rho=\rho(x,t)\geq 0$ and $\Omega = \Omega (x,t)\in \mathbb{S}^{2}$ are the limits of the local density $\rho^\eps$ and 
the local mean orientation $\Omega_{f^\eps}$ in Eqs. \eqref{eq:density_local},\eqref{localaverageorient}, respectively.
Moreover, if the convergence is strong enough and $\Omega$, $\rho$, $v$ and $p$ are smooth enough, they satisfy the coupled system
\begin{empheq}[left=\empheqlbrace]{align}
		& \partial_t\rho + \nabla_x\cdot(\rho U)  = 0, \label{eq:continuity_equation_repulsion} \\
		 &\rho\partial_{t}\Omega + \rho (V\cdot\nabla_x)\Omega + P_{\Omega^{\perp}}\nabla_x p(\rho)= \gamma P_{\Omega^{\perp}}\Delta_x(\rho\Omega) +\rho P_{\Omega^{\perp}} \lp\tilde\lambda S(v) +A(v)\rp \Omega , \label{eq:hydro2_repulsion} \\
		&-\Delta_x v + \nabla_x p =  - b\nabla_x\cdot\lp \rho \mathcal{Q}(\Omega) \rp, \label{eq:hydro3_repulsion}\\ 
		&\nabla_x\cdot v=0 \label{eq:hydro4_repulsion}, 
	\end{empheq}
	where 
\begin{eqnarray*}
		&&	U = a c_1 \Omega +v -\mu \Phi_0\nabla_x \rho, \quad
		V = a c_2 \Omega + v - \mu \Phi_0\nabla_x\rho,\\ &&p(\rho)=\frac{a}{\kappa} \rho +\xi\mu\Phi_0\lp \frac{2}{\kappa} + c_2\rp\frac{\rho^2}{2}, \quad
		\mathcal{Q} = c_4\lp \Omega \otimes \Omega-\frac{1}{3}\emph{Id} \rp, \\
		&& \Phi_0=\int_{\R^3}\phi(x)\, dx,
\end{eqnarray*}
	and where the constants $c_1,\hdots, c_4$, $k_0$, $\gamma$ and $\lambda$ are given by Eqs. \eqref{eq:c1_th}--\eqref{eq:c4}, \eqref{eq:def_k0}, \eqref{eq:lambda0} and where $\kappa=\nu/D$.

\end{theorem}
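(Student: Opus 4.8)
The plan is to reproduce the proof of Theorem~\ref{th:hydro_limit} essentially verbatim, since the rescaled system \eqref{eq:rescaled-mf-repulsion} differs from \eqref{eq:scaledsys}--\eqref{eq:scaledsys_end} only through the two extra force terms $-\mu\nabla_x\Phi_{f^\eps}$ in the transport velocity and $-\eps\xi\nabla_x\Phi_{f^\eps}$ inside the $\nabla_\omega\cdot$ operator, plus the new relation for the rescaled potential. As $\eps\to0$, passing to the limit in the kinetic equation still forces $Q(f)=0$, hence $f=\rho M_\Omega$ with $\rho=\rho(x,t)\ge0$, $\Omega=\Omega(x,t)\in\mathbb{S}^2$, exactly as in Theorem~\ref{th:hydro_limit}. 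Likewise, the Stokes block \eqref{eq:hydro3_repulsion}--\eqref{eq:hydro4_repulsion} is untouched: the source $G_{f^\eps}$ is the same as before, so the computation of $\rho\int_{\mathbb{S}^2}(\omega\otimes\omega-\tfrac13\mathrm{Id})M_\Omega\,d\omega=\rho\,c_4(\Omega\otimes\Omega-\tfrac13\mathrm{Id})$ (with vanishing isotropic part) carries over unchanged.

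The one genuinely new ingredient is the expansion of the rescaled potential. Since the repulsion range $\eps r$ is \emph{shorter} than the alignment range $\sqrt{\eps}R$, the change of variables $z=(x-y)/(\eps r)$ and a Taylor expansion of $\rho_f$, together with $\int \phi(|z|)\,z\,dz=0$, give (once the normalisation absorbed in the tilde convention is taken into account)
$$
\Phi^\eps_f(x,t)=\Phi_0\,\rho_f(x,t)+\mathcal{O}(\eps),\qquad \Phi_0=\int_{\R^3}\phi(|x|)\,dx,
$$
hence $\nabla_x\Phi^\eps_f=\Phi_0\nabla_x\rho_f+\mathcal{O}(\eps)$. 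Substituting this leading order into $u_{(f^\eps,v^\eps)}$ and into the collision-invariant-level force term is what produces all the new macroscopic contributions. For the density equation one integrates the kinetic equation over $\omega$, divides by $\eps$ and lets $\eps\to0$; using $\int_{\mathbb{S}^2}\omega M_\Omega\,d\omega=c_1\Omega$ one gets $\int_{\mathbb{S}^2}u_{(\rho M_\Omega,v)}\rho M_\Omega\,d\omega=\rho\,(ac_1\Omega+v-\mu\Phi_0\nabla_x\rho)=\rho U$, which yields \eqref{eq:continuity_equation_repulsion}.

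For the orientation equation, I would multiply the kinetic equation by $\tfrac{1}{\eps}\,h(\omega\cdot\Omega_{f^\eps})(\Omega_{f^\eps}\times\omega)$ (the Generalised Collision Invariant of Proposition~\ref{lem:GCI}), integrate over $\omega$, and pass to the limit to obtain $P_{\Omega^\perp}X=0$ with $X=X_1+X_2+X_3+X_4+X_5+X_6$, where $X_1,\dots,X_4$ are \emph{exactly} the four terms of Theorem~\ref{th:hydro_limit} and
\begin{align*}
X_5&=-\mu\Phi_0\int_{\mathbb{S}^2}\nabla_x\!\cdot\!\big((\nabla_x\rho)\,\rho M_\Omega\big)\,h(\omega\cdot\Omega)\,\omega\,d\omega,\\
X_6&=-\xi\Phi_0\int_{\mathbb{S}^2}\nabla_\omega\!\cdot\!\big(P_{\omega^\perp}(\nabla_x\rho)\,\rho M_\Omega\big)\,h(\omega\cdot\Omega)\,\omega\,d\omega.
\end{align*}
Two structural observations do all the work. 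First, $-\mu\nabla_x\Phi_f$ is $\omega$-independent, so $X_5$ is handled like the fluid-advection term $X_3$; the only difference is that $\nabla_x\Phi_f$ is not divergence-free, producing an extra term $(\Delta_x\Phi_f)\rho M_\Omega$, but this drops out after applying $P_{\Omega^\perp}$ since $P_{\Omega^\perp}\int_{\mathbb{S}^2}M_\Omega h\,\omega\,d\omega=0$ by parity in $w$; using \eqref{eq:usual_integral} exactly as in the $X_3$ computation gives $P_{\Omega^\perp}X_5=-\kappa C_0\,\mu\Phi_0\,\rho\,(\nabla_x\rho\cdot\nabla_x)\Omega$, i.e. after division by $\kappa C_0$ the convective velocity becomes $V=ac_2\Omega+v-\mu\Phi_0\nabla_x\rho$. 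Second, $X_6$ has exactly the algebraic form $\nabla_\omega\!\cdot\!\big(P_{\omega^\perp}A\,\rho M_\Omega\big)$ with $A$ independent of $\omega$, which is the same form as $X_2$, so the same expansion (using $\nabla_\omega\!\cdot\!(P_{\omega^\perp}A)=-2A\cdot\omega$, $\nabla_\omega M_\Omega=\kappa P_{\omega^\perp}\Omega\,M_\Omega$, and the parity/tensor identities \eqref{eq:tensorial_w}, \eqref{eq:usual_integral}) yields $P_{\Omega^\perp}X_6=\xi\Phi_0\rho\,(2C_0+\kappa C_2)\,P_{\Omega^\perp}\nabla_x\rho$. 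Adding this to the $\tfrac{a}{\kappa}P_{\Omega^\perp}\nabla_x\rho$ term already present in $P_{\Omega^\perp}X_1$ and using $\rho\nabla_x\rho=\nabla_x(\rho^2/2)$, the whole density-gradient block assembles into $P_{\Omega^\perp}\nabla_x p(\rho)$ with the quadratic pressure $p(\rho)$ of the statement; dividing the resulting identity by $\kappa C_0$ gives \eqref{eq:hydro2_repulsion}, with $\gamma$, $\tilde\lambda$ and $c_1,\dots,c_4$ unchanged from Theorem~\ref{th:hydro_limit}.

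The main obstacle is not conceptual but one of careful bookkeeping: fixing the scaling of the rescaled potential so that $\nabla_x\Phi^\eps_f$ converges to $\Phi_0\nabla_x\rho_f$ rather than to $0$ or to a $\Delta_x$-type contribution, and then verifying that the two new integrals $X_5$ and $X_6$ genuinely reduce to the already-computed constants $C_0$ and $C_2$ via parity in $w$ and \eqref{eq:usual_integral}--\eqref{eq:tensorial_w}. The only subtle point in that reduction is the non-solenoidal character of the repulsion force, which forces one to check that the divergence term it generates is annihilated by $P_{\Omega^\perp}$, as noted above.
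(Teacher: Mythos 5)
Your proposal is correct and follows essentially the same route as the paper, which disposes of this theorem in one line by invoking the proof of Th.~\ref{th:hydro_limit} together with the computations of Ref.~\cite{degond2015macroscopic}: you carry over $X_1,\dots,X_4$ verbatim and correctly identify that the only new work is the leading-order expansion $\nabla_x\Phi^\eps_{f}\to\Phi_0\nabla_x\rho$ (modulo the normalisation of the rescaled potential, which you rightly flag) and the two extra integrals $X_5$, $X_6$, which reduce to the constants $C_0$ and $2C_0+\kappa C_2$ exactly as you describe. The only point to reconcile is that your computation of $X_6$ yields the coefficient $\xi\Phi_0\bigl(\tfrac{2}{\kappa}+c_2\bigr)$ in $p(\rho)$ rather than the $\xi\mu\Phi_0\bigl(\tfrac{2}{\kappa}+c_2\bigr)$ of the statement, which is in fact consistent with the relaxation term $-\xi\nabla_x\Phi$ in Eq.~\eqref{eq:IBM2_r} carrying no factor $\mu$.
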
 
The proof of this theorem is direct from the proof of Th. \ref{th:hydro_limit} and the results in Ref. \cite{degond2015macroscopic}.

\begin{remark}[Discussion of the result.]
\label{rem:discussion_volume_exclusion}
 The repulsive force intensity is given by the parameter $\mu\Phi_0$. Observe that when $\mu\Phi_0=0$ we recover  the SOH-Stokes system \eqref{eq:continuity_equation}--\eqref{eq:hydro4}.
Notice that the presence of the repulsion modifies the velocity $U$ of $\rho$ in \eqref{eq:continuity_equation_repulsion} and the convective velocity $V$ of $\Omega$ in \eqref{eq:hydro2_repulsion} by adding a term $-\mu\Phi_0\nabla_x\rho$. This term in \eqref{eq:continuity_equation_repulsion} gives rise to a diffusion-type term for $\rho$ of the form $-\mu\Phi_0\nabla_x\cdot(\rho \nabla_x\rho)$, which resembles a porous-medium equation and that prevents the formation of high particle concentrations. In the case of the convective velocity of $\Omega$, this term indicates the tendency of particles to change their orientation towards regions of lower concentration. The other important difference is the presence of a non-linear term in the pressure $\nabla_x p(\rho)$ for $\Omega$ in \eqref{eq:hydro2_repulsion} which  increases the pressure effects, due to the repulsion forces, when the concentrations become high.
\end{remark}
 
 \subsection{Vicsek-Navier-Stokes coupling}

\label{sec:derivationIBM}

\subsubsection{The individual based model}
In a finite Reynolds number regime, fluid dynamics is described by the Navier-Stokes equations rather than by the Stokes equation. In this section, we propose a Vicsek-Navier-Stokes coupling also assuming finite particle inertia and derive the coarse-grained equations.  We will also see how the Vicsek-Stokes coupling in Eqs. \eqref{eq:IBM1}-\eqref{eq:IBM4} is obtained from this Vicsek-Navier-Stokes coupling by assuming a low Reynolds number regime and negligible particle inertia. We consider the following coupled system:
\begin{subequations}\label{IBM_full}
\begin{numcases}{}
\frac{dX_i}{dt} = u_i(t),\label{IBM_full_a}\\
m_i \frac{d u_i}{dt} =\eta( v(X_i,t) + a\omega_i(t)- u_i), \label{IBM_full_a2}\\
 d\omega_i = P_{\omega_i^\perp} \circ \big[\nu \bar\omega_idt   +\big(\lambda S(v)+A(v)\big)\omega_i dt\big]+ \sqrt{2D}dB_t^i, \quad\label{IBM_full_b}\\
m_i \frac{du_i}{dt}= F_i(t), \label{IBM_full_c}\\
			\bar\omega_i = \frac{J_i}{|J_i|} \text{ with } J_i = \sum_{k=1}^NK(|X_i-X_{k}|)\omega_k, \label{eq:IBM_full_average}\\
\rho_0(\partial_t v + (v\cdot\nabla_x) v) + \nabla_x p = \sigma\Delta_x v - \sum_{i=1}^N F_i\delta_{X_i(t)} \nonumber\\
\qquad\qquad \qquad\qquad \qquad\qquad- \rho_0 \beta  \frac{1}{N}\sum_{i=1}^N \lp\omega_i \otimes \omega_i- \frac{1}{3}\Id \rp\nabla_x\delta_{X_i(t)}, 
\label{IBM_full_d}\\
\nabla_x \cdot v = 0. \label{IBM_full_e}
\end{numcases}
\end{subequations}
Most of the terms have previously been explained for Eqs. \eqref{eq:IBM1}-\eqref{eq:IBM4} in Sec. \ref{sec:discussion}. The term $\rho_0$ is the density of the fluid and $\sigma>0$ its viscosity; $\eta$ is a friction coefficient; $m_i$ is the mass of agent $i$; and $F_i$ is the force generating its acceleration. Notice that in the present case the individuals' velocity $u_i$ relaxes towards $v(X_i,t)+a\omega_i(t)$, while in the Vicsek-Stokes coupling we considered directly the relaxed system.
The influence of the force of particle $i$ on the fluid is given by the term $F_i \delta_{X_i(t)}$ in Eq. \eqref{IBM_full_d} (this is an application of Newton's third law of action and reaction).

A sanity check of our model consists of ensuring that the momentum and the angular momentum are conserved by the dynamics, as expected in a closed system with no dissipation at the boundaries:

\begin{proposition}
\label{prop:conservation_momentum}
Suppose that in the system \eqref{IBM_full}  the domain has no boundaries and the solution vanishes at large distances, then the total momentum and angular momentum are conserved.
\end{proposition}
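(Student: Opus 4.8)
The plan is to verify the two conservation laws directly by differentiating the total momentum $\mathbf{P}(t) = \sum_{i=1}^N m_i u_i(t) + \rho_0 \int_{\R^3} v(x,t)\,dx$ and the total angular momentum $\mathbf{L}(t) = \sum_{i=1}^N m_i X_i(t) \times u_i(t) + \rho_0 \int_{\R^3} x \times v(x,t)\,dx$ in time, using the equations of System \eqref{IBM_full}, and checking that all contributions cancel. The key point is that the coupling between particles and fluid obeys Newton's third law, so the force terms should cancel in pairs.

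First I would treat the momentum. Differentiating the particle part gives $\sum_i m_i \dot u_i = \sum_i F_i$ by Eq. \eqref{IBM_full_c}. Differentiating the fluid part and substituting Eq. \eqref{IBM_full_d}, the viscous term $\sigma \int \Delta_x v\,dx$ vanishes (it is the integral of a divergence and $v$ decays at infinity), the convective term $\rho_0 \int (v\cdot\nabla_x)v\,dx = \rho_0 \int \nabla_x\cdot(v\otimes v)\,dx$ vanishes using $\nabla_x\cdot v=0$ and decay, the pressure term $\int \nabla_x p\,dx$ vanishes by decay, the source term $-\int \sum_i F_i \delta_{X_i}\,dx = -\sum_i F_i$ exactly cancels the particle contribution, and the dipole term $\int \nabla_x \delta_{X_i}\,dx = 0$ since it is a gradient of a compactly supported distribution integrated over all space. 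Hence $\dot{\mathbf{P}} = 0$.

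Next I would treat the angular momentum, which is the more delicate computation. For the particle part, $\frac{d}{dt}\sum_i m_i X_i \times u_i = \sum_i m_i \dot X_i \times u_i + \sum_i m_i X_i \times \dot u_i = \sum_i X_i \times F_i$, since $\dot X_i = u_i$ by Eq. \eqref{IBM_full_a} makes the first sum vanish. For the fluid part, $\frac{d}{dt}\rho_0 \int x\times v\,dx = \rho_0 \int x\times \partial_t v\,dx$; substituting Eq. \eqref{IBM_full_d}, one must show that $\int x\times(v\cdot\nabla_x)v\,dx = 0$ (integration by parts, incompressibility, and the antisymmetry making the leftover symmetric-tensor contraction vanish), that $\int x\times \nabla_x p\,dx=0$ and $\int x\times \Delta_x v\,dx=0$ (integration by parts; the boundary terms vanish and the bulk terms are antisymmetrizations of symmetric expressions), that the point-force term contributes $-\sum_i X_i\times F_i$, cancelling the particle part, and — the crux — that the dipole term $-\rho_0\beta\frac{1}{N}\sum_i \int x\times\bigl[(\omega_i\otimes\omega_i-\frac13\Id)\nabla_x\delta_{X_i}\bigr]dx$ vanishes. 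For this last term I would integrate by parts to move the gradient onto $x\times(\cdot)$: writing $A_i = \omega_i\otimes\omega_i-\frac13\Id$, the $j$-th component of $\int x\times(A_i\nabla_x\delta_{X_i})\,dx$ equals $-\sum_{k,\ell,m}\epsilon_{jk\ell}(A_i)_{\ell m}\,\partial_{x_m}(x_k)\big|_{X_i} = -\sum_{k,\ell}\epsilon_{jk\ell}(A_i)_{\ell k}$, which is the contraction of the antisymmetric Levi-Civita symbol with the symmetric matrix $A_i$, hence zero. Therefore $\dot{\mathbf{L}} = 0$.

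The main obstacle is bookkeeping in the angular-momentum computation: one must be careful that the self-propulsion dipole term, which is not of the simple "force at a point" form, still contributes nothing — this works precisely because the dipole tensor $\omega_i\otimes\omega_i - \frac13\Id$ is symmetric, so its contraction against the antisymmetric cross-product structure vanishes. A secondary subtlety is justifying that all the integration-by-parts boundary terms vanish, which is where the hypotheses "no boundaries" and "the solution vanishes at large distances" are used; I would state that we assume sufficient decay of $v$, $\nabla_x v$ and $p$ at infinity so that all boundary contributions at infinity are zero, and note that the dipole and friction details are otherwise entirely routine. I would not grind through every index manipulation but would present the cancellation structure and the key symmetric/antisymmetric observation explicitly.
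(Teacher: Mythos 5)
Your proposal is correct and follows essentially the same route as the paper: a direct computation of the time derivatives, with the momentum cancellation coming from Newton's third law pairing of the $F_i$ terms and the angular-momentum cancellation hinging on the same key observation the paper makes, namely that the dipole contribution vanishes because $\omega_i\otimes\omega_i-\frac{1}{3}\Id$ is symmetric and is contracted against the antisymmetric cross-product structure. Your write-up is in fact slightly more careful than the paper's (which leaves the linear-momentum part as "a direct computation" and contains a typo in its stated formula for the particles' angular momentum, writing $X_i\times F_i$ where $m_i X_i\times u_i$ is meant), so no changes are needed.
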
 
The proof can be found in the Appendix.

\subsubsection{Dimensional analysis and simplifications}

Next we check the orders of magnitude of the coefficients in Eqs. \eqref{IBM_full} by a dimensional analysis. 
We assume that each agent has the same mass $m=m_i$. We  consider dimensionless variables $x'=x/x_0$, $t'=t/t_0$ such that $x_0/t_0=u_0$ is the typical speed of an agent. With this, we define the dimensionless quantities 
\begin{eqnarray*}
 &&v'=v/u_0,\, a'=a/u_0,\, \nu'=\nu t_0,\, D' = D t_0,\\
 && \eta'= \eta \frac{t_0}{m},\, F_i' = F_i\lp m\frac{u_0}{t_0} \rp^{-1},\,  p'= p \lp \frac{\sigma u_0}{x_0} \rp^{-1}. 
\end{eqnarray*}
Now, we assume that the range of interaction of $K$ is given by $R$, so we can write 
$$K(x) = \tilde K\lp\frac{x}{R} \rp.$$
We introduce the dimensionless variable $R'=R/x_0$ so that $K' =\tilde K(|x'-y'|/R')$. 

Changing variables and expressing the system \eqref{IBM_full} in the prime variables we obtain, after skipping the primes, the following system:
\begin{subequations}\label{eq:dimensionless}
\begin{numcases}{}
			\frac{dX_i}{dt} = u_i,\label{eq:VNS1}\\
			\frac{du_i}{dt}= \eta (v(X_i,t) + a\omega_i -u_i),\label{eq:u_relaxation}\\
			d\omega_i = P_{\omega_i^{\perp}}\circ\big[ \nu \overline{\omega_i}dt   +\big(\lambda S(v)+A(v)\big)\omega_i dt\big]+ \sqrt{2D}dB_{t}^{i}, \label{eq:VNS_omega}\\
			\bar\omega_i = \frac{J_i}{|J_i|} \text{ with } J_i = \sum_{k=1}^N 		K\lp\frac{|X_i-X_{k}|}{R}\rp\omega_k, \\
			 \frac{du_i}{dt} = F_i, \label{eq:du}\\
			Re(\partial_{t}v + (v\cdot\nabla_{x}) v)  + \nabla_{x} p = \Delta_{x} v - c\sum_{i=1}^{N}F_i\delta_{X_i(t)}\nonumber\\
\qquad\qquad \qquad\qquad\quad \qquad\qquad- b  \frac{1}{N}\sum_{i=1}^N \lp\omega_i \otimes \omega_i- \frac{1}{3}\Id \rp\nabla_x\delta_{X_i(t)}, 
		\label{eq:dimensioless_NS}		\\
			\nabla_{x} \cdot v = 0,\label{eq:VNSend}
\end{numcases}
\end{subequations}
where all the variables and parameters are now dimensionless and
\begin{eqnarray*}
	Re &=& \rho_0 \frac{u_0 x_0}{\sigma} \qquad \mbox{(Reynolds number),}\\
	c &=& \frac{m u_0}{x^2_0 \sigma},\\
	b &=&  \frac{\rho_0 \beta}{x_0^2 \sigma}.
\end{eqnarray*}
Notice that the constant $\lambda$ remains unchanged with respect to the original equation; it is already a dimensionless quantity. The parameter $c$ is a measure of the particle inertia, whereas $Re$ is a measure of the fluid inertia.

\begin{remark}[Reduction to  Vicsek-Stokes coupling]
\label{rem:derivation_VS}
The Vicsek-Stokes coupling \eqref{eq:IBM1}-\eqref{eq:IBM4} is obtained from the previous system in the regime where $Re\ll 1$ as well as $c\ll 1$, $\eta\gg 1$. This corresponds to physical systems where the size (and mass) of the agents is very small (microscopic sizes). Therefore, as a simplification we can consider directly $Re=0$, $c=0$, $1/\eta=0$ thus removing the inertial terms in the Navier-Stokes equation and the force term $- c\sum_{i=1}^{N}F_i\delta_{x_i(t)}$; as well as relaxing the velocity of the particles to $u_i=v(X_i,t)+a \omega_i$. Typically the coefficient $b$ will not be small and should not be simplified.

\end{remark}

\subsubsection{The mean-field limit}

From now on, we will consider the large friction limit regime defined as follows:
\begin{definition}[Large friction limit regime]
\label{def:large_friction}
The large friction limit regime corresponds to the friction coefficient $\eta\to \infty$ in the system \eqref{eq:dimensionless} (but leaving $c$ and $Re$ to be $\mathcal{O}(1)$). Then Eq. \eqref{eq:u_relaxation} is replaced by 
$$u_i=v(X_i,t)+a\omega_i,$$ 
and the rest of equations in \eqref{eq:VNS1}-\eqref{eq:VNSend} remain unchanged.
\end{definition}

This section is devoted to proving the following:
\begin{proposition}[Mean-field limit at finite Reynolds number and finite particle inertia]
\label{prop:mean-field-NS}
Given $N$ particles, consider the following scaling of the constant $c$ in Eq. \eqref{eq:dimensioless_NS}:
\begin{equation} \label{eq:scaling_c}
c= \frac{\bar c}{N}, \quad \bar c = \mathcal{O}(1) \mbox{ as }N\to \infty.
\end{equation}
Consider also the empirical distribution associated to the dynamics of the agents in \eqref{eq:dimensionless} in the regime of large friction coefficient (Def. \ref{def:large_friction}) with the previous scaling for $c$, i.e.:
\begin{equation} \label{eq:empirical_distribution}
			f^{N}(x,\omega,t)=\frac{1}{N}\sum_{i=1}^{N}\delta_{x_i(t)}(x)\otimes\delta_{\omega_i(t)}(\omega),
		\end{equation}
		where $\delta_{x_i(t)}(x)$ and $\delta_{\omega_i(t)}(\omega)$ denote the Dirac delta at $x_i(t)$ and $\omega_i(t)$ on $\R^3$  and $\mathbb{S}^2$, respectively. Assume that  $f^N$ converges weakly to $f=f(x,\omega,t)$ as the number of agents $N\to \infty$. Then, the limit $f$ satisfies the following system:
\begin{subequations} \label{eq:mf_VNS}
\begin{numcases}{}
		\partial_t f + \nabla_x\cdot(u_{(f,v)}f) + \nabla_{\omega}\cdot\Big( \big[P_{\omega^{\perp}} \left\{\nu \overline{\omega}_{f} +\lp\lambda S(v)+A(v)\rp\omega\right\}\big]f\Big) = D\Delta_{\omega}f, \quad
			\label{eq:mean-fieldVNS1} \\
			u_{(f,v)}(x,\omega,t)=v(x,t) + a\omega, \\
				\partial_t \big[(Re+ \bar c \rho_f) v + a\bar c  j_f \big] + \nabla_x \cdot \big[ (Re + \bar c \rho_f) v\otimes v + a \bar c (v \otimes j_f+j_f\otimes v)  \big]\nonumber\\
				\qquad\qquad+\nabla_x\cdot\big[ (a^2 \bar c + b) Q_f \big]
				=-\nabla_x \lp p+\frac{a^2\bar c}{3}\rho\rp + \Delta_x v, 
				\label{eq:mean-fieldVNS_v}\\
			\nabla_x \cdot v =0,
\end{numcases}
\end{subequations}
where  the density $\rho_f$, the flux $j_f$ and the Q-tensor $Q_f$ are given by 
\begin{equation}
\rho_f:= \int_{\mathbb{S}^2}f\, d\omega, \quad j_f:= \int_{\mathbb{S}^2}\omega f\, d\omega,\quad Q_f:= \int_{\mathbb{S}^2} \lp \omega \otimes \omega- \frac{1}{3}\emph{Id} \rp  f\, d\omega, \label{eq:density_flux_q-tensor}
\end{equation}
and 
$\bar\omega_f$ is given in Eq. \eqref{eq:bar-omega-f}.
\end{proposition}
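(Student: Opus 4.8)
The plan is to run the (formal) empirical-measure argument that already underlies Proposition~\ref{prop:mean-field}, and to supplement it with one new manipulation which disposes of the inertial force term $c\sum_i F_i\delta_{X_i(t)}$ in the Navier--Stokes equation \eqref{eq:dimensioless_NS}. The particle (position and orientation) dynamics are formally unchanged from System \eqref{eq:IBM1}--\eqref{eq:IBM4} once the large-friction closure $u_i=v(X_i,t)+a\omega_i$ of Definition~\ref{def:large_friction} is used, so the derivation of the kinetic equation \eqref{eq:mean-fieldVNS1} is identical to that of Proposition~\ref{prop:mean-field}: write the weak equation for $f^N$ by applying It\^o's formula to $\tfrac1N\sum_i\varphi(X_i(t),\omega_i(t))$ with $dX_i=(v(X_i,t)+a\omega_i)\,dt$ and the Stratonovich equation \eqref{eq:VNS_omega}, and pass to the limit $N\to\infty$ under the assumed weak convergence $f^N\rightharpoonup f$ (and $v^N\to v$ strongly enough to pass to the limit in $\bar\omega_{f^N}$ and in $S(v^N)$, $A(v^N)$). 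This also yields $u_{(f,v)}=v+a\omega$ and $\nabla_x\cdot v=0$.

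\medskip
\noindent\textbf{Rewriting the inertial force term.} The crucial new ingredient is that in the large-friction regime $u_i=v(X_i,t)+a\omega_i$, so \eqref{eq:du} reads $F_i=\tfrac{du_i}{dt}$ and, with the scaling \eqref{eq:scaling_c}, $c\sum_i F_i\delta_{X_i(t)}=\tfrac{\bar c}{N}\sum_i\tfrac{du_i}{dt}\delta_{X_i(t)}$. Since $\dot X_i=u_i$ does not depend on $x$ one has $\partial_t\delta_{X_i(t)}=-\nabla_x\cdot(u_i\,\delta_{X_i(t)})$, which gives the exact identity, for the empirical momentum density $P^N:=\tfrac1N\sum_i u_i\,\delta_{X_i(t)}$,
$$\frac1N\sum_{i=1}^N \frac{du_i}{dt}\,\delta_{X_i(t)}=\partial_t P^N+\nabla_x\cdot\Big(\frac1N\sum_{i=1}^N (u_i\otimes u_i)\,\delta_{X_i(t)}\Big).$$
(Tested against a smooth function over a time interval, $\partial_t P^N$ only produces endpoint sums $\tfrac1N\sum_i u_i\psi(X_i,t)$, so the stochastic part of $u_i$ is harmless in the limit, exactly as in the classical argument.) Letting $N\to\infty$, using $u_i=v(X_i,t)+a\omega_i$ and $\int_{\mathbb{S}^2}\omega\otimes\omega\,f\,d\omega=Q_f+\tfrac13\rho_f\Id$, one obtains
$$c\sum_i F_i\,\delta_{X_i(t)}\ \longrightarrow\ \bar c\,\partial_t(\rho_f v+a j_f)+\bar c\,\nabla_x\cdot\big[\rho_f\,v\otimes v+a(v\otimes j_f+j_f\otimes v)+a^2 Q_f\big]+\tfrac{a^2\bar c}{3}\nabla_x\rho_f.$$

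\medskip
\noindent\textbf{Assembling the momentum equation.} The dipole term passes to the limit as in Proposition~\ref{prop:mean-field}: $\tfrac bN\sum_i(\omega_i\otimes\omega_i-\tfrac13\Id)\nabla_x\delta_{X_i(t)}=\nabla_x\cdot\big[\tfrac bN\sum_i(\omega_i\otimes\omega_i-\tfrac13\Id)\delta_{X_i(t)}\big]\to b\,\nabla_x\cdot Q_f$. Substituting the two limits into \eqref{eq:dimensioless_NS}, using $\nabla_x\cdot v=0$ to rewrite $Re\,(v\cdot\nabla_x)v=Re\,\nabla_x\cdot(v\otimes v)$, collecting the $Q_f$ contributions into $(a^2\bar c+b)Q_f$ and absorbing $\tfrac{a^2\bar c}{3}\nabla_x\rho$ into the pressure ($p\mapsto p+\tfrac{a^2\bar c}{3}\rho$), one arrives at \eqref{eq:mean-fieldVNS_v}; the incompressibility constraint is inherited directly.

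\medskip
\noindent\textbf{Main obstacle.} Everything above is formal, so the delicate point is structural rather than quantitative: recognising that the overdamped limit $\eta\to\infty$ does not kill $F_i$ but converts it into $\tfrac{du_i}{dt}$ with the closure $u_i=v(X_i,t)+a\omega_i$, and that the scaling $c=\bar c/N$ is precisely what keeps $c\sum_iF_i\delta_{X_i(t)}$ of order one (mirroring the $b/N$ scaling already present for the dipole term). A rigorous treatment would additionally need propagation-of-chaos-type control of $\tfrac1N\sum_iu_i\,\delta_{X_i(t)}$ and of the coupling $v^N\to v$, which I do not pursue here.
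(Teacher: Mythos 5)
Your proof is correct, and the way you dispose of the inertial force term is genuinely different from the paper's. The paper (Lemma \ref{prop:mean-field-force}) expands $du_i = dv(X_i(t),t) + a\,d\omega_i(t)$, converts the Stratonovich equation for $\omega_i$ to It\^o form, and passes to the limit term by term --- including a variance computation showing that the Brownian contribution $\frac{1}{N}\sum_i P_{\omega_i^\perp}dB^i_t\,\delta_{X_i(t)}$ vanishes --- which yields the force limit in the non-conservative form
$-\bar c\big[\rho_f(\partial_t v+(v\cdot\nabla_x)v)+a(j_f\cdot\nabla_x)v+a\int_{\mathbb{S}^2}P_{\omega^\perp}[\nu\bar\omega_f+(\lambda S(v)+A(v))\omega]f\,d\omega-2aDj_f\big]$; it then needs the separate moment equations for $\rho_f$ and $j_f$ (Lemma \ref{lem:flux}) to recombine everything into the conservative form \eqref{eq:mean-fieldVNS_v}. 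You instead apply the transport identity $\frac{du_i}{dt}\delta_{X_i(t)}=\partial_t(u_i\delta_{X_i(t)})+\nabla_x\cdot(u_i\otimes u_i\,\delta_{X_i(t)})$ to the empirical momentum density, so the force term is already in divergence form at the particle level and passes to the limit as a pure moment of $f^N$ (indeed $P^N=\rho_{f^N}v+aj_{f^N}$, so its convergence follows directly from $f^N\rightharpoonup f$). This short-circuits the It\^o--Stratonovich conversion, the explicit appearance of the alignment and Jeffery terms inside the force, the law-of-large-numbers argument for the noise, and all of Lemma \ref{lem:flux}. What the paper's longer route buys is the moment equations \eqref{eq:for rho}--\eqref{eq:for flux2} as an explicit by-product and a term-by-term physical reading of the drag force; what yours buys is brevity and the fact that conservation of momentum is manifest from the outset. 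Both assemble into \eqref{eq:mean-fieldVNS_v} with the same $(a^2\bar c+b)Q_f$ stress and the same pressure shift $p\mapsto p+\frac{a^2\bar c}{3}\rho_f$.
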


\begin{remark}
We must assume that  $c= \mathcal{O}(1/N)$ as the number of particles $N\to \infty$. This is because in  a mean-field limit interacting terms scale like $1/N$ so that their sum acting on a given particle remains finite.
\end{remark}

Proposition \ref{prop:mean-field-NS} is consequence of the following two lemmas:
\begin{lemma}
\label{lem:flux}
Consider the large friction limit regime in Def. \ref{def:large_friction}.
The density $\rho_f$ and the flux $j_f$  given in Eq. \eqref{eq:density_flux_q-tensor}
satisfy the following equations:
\begin{eqnarray} 
&&\partial_t \rho_f + \nabla_x\cdot (\rho_f v+ a j_f)=0, \label{eq:for rho}\\
&&\partial_t j_f+\nabla_x\cdot (v\otimes j_f+ a Q_f) + \frac{a}{3}  \nabla_x \rho_f \nonumber\\
&& \qquad\qquad= \int_{\mathbb{S}^2}  P_{\omega^\perp}[\nu \bar\omega_f + (\lambda S(v)+A(v))\omega]  f\, d\omega -2D j_f. \quad \qquad\label{eq:for flux2}
\end{eqnarray}
\end{lemma}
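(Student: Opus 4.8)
The plan is to derive the two macroscopic balance laws \eqref{eq:for rho} and \eqref{eq:for flux2} directly from the mean-field kinetic equation \eqref{eq:mean-fieldVNS1} by taking velocity moments, that is, by integrating against $1$ and against $\omega$ over the sphere $\mathbb{S}^2$. First I would integrate \eqref{eq:mean-fieldVNS1} in $\omega$. The transport term $\nabla_x \cdot (u_{(f,v)} f)$ with $u_{(f,v)} = v + a\omega$ produces $\nabla_x \cdot (v \rho_f + a j_f)$ after using $\int_{\mathbb{S}^2} f\, d\omega = \rho_f$ and $\int_{\mathbb{S}^2} \omega f\, d\omega = j_f$. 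The $\nabla_\omega\cdot(\cdots)$ term integrates to zero by the divergence theorem on the closed manifold $\mathbb{S}^2$, and $\int_{\mathbb{S}^2} \Delta_\omega f\, d\omega = 0$ for the same reason. This gives \eqref{eq:for rho} immediately.

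Next I would multiply \eqref{eq:mean-fieldVNS1} by $\omega$ and integrate. The term $\partial_t \int \omega f\, d\omega = \partial_t j_f$. For the spatial transport term, $\int_{\mathbb{S}^2} \omega\, \nabla_x\cdot\big((v+a\omega) f\big)\, d\omega = \nabla_x\cdot\int_{\mathbb{S}^2} \omega \otimes (v + a\omega) f\, d\omega = \nabla_x\cdot(v\otimes j_f) + a\nabla_x\cdot\int_{\mathbb{S}^2}(\omega\otimes\omega) f\, d\omega$; writing $\omega\otimes\omega = (\omega\otimes\omega - \tfrac13 \Id) + \tfrac13\Id$ splits this as $\nabla_x\cdot(v\otimes j_f) + a\nabla_x\cdot Q_f + \tfrac{a}{3}\nabla_x\rho_f$, which accounts for the left-hand side of \eqref{eq:for flux2}. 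For the force term on the sphere, one integrates by parts: $\int_{\mathbb{S}^2}\omega\,\nabla_\omega\cdot\big(P_{\omega^\perp}[\cdots]f\big)\, d\omega = -\int_{\mathbb{S}^2}(\nabla_\omega\omega) : \big(P_{\omega^\perp}[\cdots]f\big)\, d\omega = -\int_{\mathbb{S}^2} P_{\omega^\perp}\big(P_{\omega^\perp}[\cdots]\big) f\, d\omega = -\int_{\mathbb{S}^2} P_{\omega^\perp}[\nu\bar\omega_f + (\lambda S(v)+A(v))\omega]\, f\, d\omega$, since $\nabla_\omega\omega = P_{\omega^\perp}$ as a tangent-space-valued object and $P_{\omega^\perp}$ is idempotent. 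The sign is positive on the right-hand side of \eqref{eq:for flux2} because this force term appears with a minus sign when moved across the equation. Finally, $\int_{\mathbb{S}^2}\omega\,\Delta_\omega f\, d\omega$: using that $\Delta_\omega \omega = -2\omega$ (each Cartesian component of $\omega$ is a first spherical harmonic, eigenvalue $-\ell(\ell+1) = -2$) and self-adjointness of $\Delta_\omega$, this equals $D\int_{\mathbb{S}^2}(\Delta_\omega\omega) f\, d\omega = -2D j_f$, giving the last term in \eqref{eq:for flux2}.

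The one point requiring a little care, and the main (mild) obstacle, is the integration by parts of the spherical force term and the identity $\Delta_\omega\omega = -2\omega$: one must be careful about what ``$\nabla_\omega\omega$'' and ``$\Delta_\omega\omega$'' mean for the $\mathbb{R}^3$-valued embedding map, and confirm that the boundary-free manifold $\mathbb{S}^2$ produces no boundary contributions. These identities are standard in this literature (they appear, e.g., in \cite{frouvelle2012continuum}, and are used elsewhere in this paper, such as in the computation of $P_{\Omega^\perp}X_2$ in the proof of Th.~\ref{th:hydro_limit}), so I would simply cite them. Once these moment computations are assembled, equations \eqref{eq:for rho} and \eqref{eq:for flux2} follow, which is the content of the lemma.
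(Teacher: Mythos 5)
Your proposal is correct and follows essentially the same route as the paper: integrate the kinetic equation \eqref{eq:mean-fieldVNS1} against $1$ and $\omega$, split $\omega\otimes\omega$ into its traceless part plus $\tfrac13\Id$, integrate the spherical force term by parts (the paper packages this as Prop.~\ref{prop:integral_divergence}, using $\nabla_\omega(\omega\cdot q)=P_{\omega^\perp}q$, which is the same identity you invoke), and use $\Delta_\omega(\omega\cdot u)=-2(\omega\cdot u)$ for the diffusion term. No gaps.
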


\begin{lemma}
\label{prop:mean-field-force}
Consider the large friction limit regime in Def. \ref{def:large_friction}.
Consider also the  scaling for the constant $c$ given in Eq. \eqref{eq:scaling_c}.
Then, the mean-field limit of the force term in Eq. \eqref{eq:dimensioless_NS} is
\begin{eqnarray}
-c\sum_{i=1}^N F_i \delta_{X_i(t)}&\to &-\bar c \Big[\rho_f(x,t) \left[ \partial_t v +(v\cdot \nabla_x) v\right]+ a(j_f\cdot \nabla_x) v\nonumber\\
&&\qquad + a \int_{\mathbb{S}^2} P_{\omega^\perp}\big[\nu \bar \omega_f + (\lambda S(v)+A(v)) \omega\big] f\, d\omega - 2aD\, j_f \Big],\quad \label{eq:force-first-statement}
\end{eqnarray}
as $N\to\infty$,
where $\rho_f$, $j_f$ and
$\bar\omega_f$ are given in Eqs. \eqref{eq:density_flux_q-tensor}, \eqref{eq:bar-omega-f}.  Consequently, the limit as $N\to \infty$ of Eq.  \eqref{eq:dimensioless_NS} is given by
\begin{eqnarray}
&&\partial_t \big[(Re+ \bar c \rho_f) v + a \bar c  j_f \big] + \nabla_x \cdot \big[ (Re + \bar c \rho_f) v\otimes v + a \bar c (v \otimes j_f+j_f\otimes v) + (a^2 \bar c + b) Q_f \big]\nonumber\\
&&\qquad\qquad=- \nabla_x \lp p +\frac{a^2\bar c}{3}\rho_f\rp+ \Delta_x v. \label{eq:kinetic_v}
\end{eqnarray}
\end{lemma}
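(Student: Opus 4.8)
The plan is to exploit the large–friction constraint in order to write the microscopic force as a total time–derivative along trajectories, to pass to the mean-field limit using the assumed weak convergence $f^N\rightharpoonup f$, and then to close the fluid equation with the help of Lemma \ref{lem:flux}.

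First I would establish \eqref{eq:force-first-statement}. In the large–friction regime (Def. \ref{def:large_friction}) the relaxation equation \eqref{eq:u_relaxation} is replaced by the constraint $u_i(t)=v(X_i(t),t)+a\omega_i(t)$, while \eqref{eq:du} still reads $du_i/dt=F_i$. Differentiating the constraint along the trajectory $dX_i=u_i\,dt$ (which has no martingale part, so $v(X_i(t),t)$ contributes no second–order Itô term) and using the Itô form of the orientation equation \eqref{eq:VNS_omega}, whose rotational–diffusion part has generator $D\Delta_\omega$ and hence Stratonovich–to–Itô correction $-2D\omega_i\,dt$ on $\mathbb{S}^2$, gives
\[
du_i=\Big[\partial_t v+\big((v+a\omega_i)\cdot\nabla_x\big)v+aP_{\omega_i^\perp}\big(\nu\bar\omega_i+(\lambda S(v)+A(v))\omega_i\big)-2aD\,\omega_i\Big]\,dt+a\sqrt{2D}\,P_{\omega_i^\perp}\,dB_t^i,
\]
all fields being evaluated at $(X_i(t),t)$, so that $F_i$ equals the bracketed drift plus a white-noise term. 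Multiplying by $-c\,\delta_{X_i}=-(\bar c/N)\,\delta_{X_i}$ (scaling \eqref{eq:scaling_c}), summing over $i$ and letting $N\to\infty$: the martingale contribution $\tfrac{a\sqrt{2D}}{N}\sum_i P_{\omega_i^\perp}\,dB_t^i\,\delta_{X_i}$ is a sum of $N$ independent mean-zero terms of amplitude $\mathcal{O}(1/N)$ and vanishes, while $\tfrac1N\sum_i g(X_i,\omega_i,t)\,\delta_{X_i}\rightharpoonup\int_{\mathbb{S}^2}g(x,\omega,t)f\,d\omega$ for continuous $g$ (with $\bar\omega_{f^N}\to\bar\omega_f$). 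Since $\int_{\mathbb{S}^2} f\,d\omega=\rho_f$ and $\int_{\mathbb{S}^2}\omega f\,d\omega=j_f$ by \eqref{eq:density_flux_q-tensor}, this yields exactly the right-hand side of \eqref{eq:force-first-statement}; in particular, the Itô correction $-2aD\omega_i$ is precisely what produces the $-2aD\,j_f$ term.

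For the ``consequently'' part I would pass to the limit in the dimensionless Navier–Stokes equation \eqref{eq:dimensioless_NS}: the terms $Re(\partial_t v+(v\cdot\nabla_x)v)$, $\nabla_x p$, $\Delta_x v$ involve only macroscopic fields and are unchanged; the stress source converges as $-\tfrac bN\sum_i(\omega_i\otimes\omega_i-\tfrac13\Id)\nabla_x\delta_{X_i}=-b\,\nabla_x\!\cdot\!\big(\tfrac1N\sum_i(\omega_i\otimes\omega_i-\tfrac13\Id)\delta_{X_i}\big)\rightharpoonup-b\,\nabla_x\!\cdot\!Q_f$ (pulling the $x$-independent matrix inside the divergence); and the force term converges to \eqref{eq:force-first-statement} by the previous step. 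Substituting this into \eqref{eq:dimensioless_NS} and then using Lemma \ref{lem:flux} to put the sum of the $\partial_t$ and transport terms in conservative form — namely \eqref{eq:for rho} to replace $\rho_f\partial_t v$ by $\partial_t(\rho_f v)+[\nabla_x\!\cdot\!(\rho_f v+a j_f)]v$, and \eqref{eq:for flux2} to replace $a\big(\int_{\mathbb{S}^2}P_{\omega^\perp}[\nu\bar\omega_f+(\lambda S(v)+A(v))\omega]f\,d\omega-2D j_f\big)$ by $a\partial_t j_f+a\nabla_x\!\cdot\!(v\otimes j_f+aQ_f)+\tfrac{a^2}{3}\nabla_x\rho_f$, together with $\nabla_x\!\cdot\!v=0$ and absorbing $\nabla_x\!\cdot\!(\tfrac{a^2\bar c}{3}\rho_f\Id)$ into the pressure — one arrives at \eqref{eq:kinetic_v}. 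As a cross-check one can bypass \eqref{eq:force-first-statement}: since $u_i$ is $x$-independent, $F_i\delta_{X_i}=\tfrac{d}{dt}(u_i\delta_{X_i})+\nabla_x\!\cdot\!(u_i\otimes u_i\,\delta_{X_i})$ (from $\tfrac{d}{dt}\delta_{X_i}=-\nabla_x\!\cdot\!(u_i\delta_{X_i})$), whose limit $-\bar c\big[\partial_t(\rho_f v+a j_f)+\nabla_x\!\cdot\!(\rho_f v\otimes v+a(v\otimes j_f+j_f\otimes v)+a^2Q_f+\tfrac{a^2}{3}\rho_f\Id)\big]$ feeds directly into \eqref{eq:kinetic_v}, and the agreement of the two expressions for the force limit is exactly Lemma \ref{lem:flux}.

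The main obstacle is that none of this is rigorous — hence the ``formal'' label: it rests on propagation of chaos for a Vicsek–Navier–Stokes system with Dirac and $\nabla_x\delta$ sources and an orientation dynamics that feeds back into the fluid through a distributional stress, a regime where even well-posedness is delicate, so the convergences $f^N\rightharpoonup f$, $\bar\omega_{f^N}\to\bar\omega_f$, and the vanishing of the martingale sum are all taken for granted. Within the formal computation itself the only genuinely delicate point is the Stratonovich–to–Itô conversion of \eqref{eq:VNS_omega}: getting the coefficient of the correction right (namely $-2D\omega_i$ for Brownian motion on $\mathbb{S}^2$ with diffusivity $D$) is essential, since it is precisely what makes the limiting fluid momentum balance consistent with the flux equation \eqref{eq:for flux2}.
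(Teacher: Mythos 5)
Your argument is correct and follows essentially the same route as the paper: the paper likewise writes $-c\sum_i F_i\delta_{X_i}\,dt=-(\bar c/N)\sum_i du_i\,\delta_{X_i}$ using the large-friction constraint, splits $du_i$ into the $dv(X_i(t),t)$ and $a\,d\omega_i$ contributions, converts \eqref{eq:VNS_omega} to It\^o form with the $-2D\omega_i\,dt$ correction, shows the empirical average of the projected Brownian increments has vanishing mean and variance, and then uses Eqs. \eqref{eq:for rho} and \eqref{eq:for flux2} to put the fluid equation in the conservative form \eqref{eq:kinetic_v}. Your closing cross-check is a genuine (and slightly slicker) variant not in the paper: writing $F_i\delta_{X_i}=\frac{d}{dt}(u_i\delta_{X_i})+\nabla_x\cdot(u_i\otimes u_i\,\delta_{X_i})$ and passing to the limit yields \eqref{eq:kinetic_v} directly in conservative form, so the second assertion no longer needs Lemma \ref{lem:flux} as an intermediary --- that lemma is then only required to reconcile the conservative expression with the non-conservative one in \eqref{eq:force-first-statement}.
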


The proof of these two Lemmas is given at the end of this section. We prove first Prop. \ref{prop:mean-field-NS}:
\begin{proof}[Proof of Prop. \ref{prop:mean-field-NS}]
The mean-field limit equation for the density $f$ is computed analogously as in  Prop. \ref{prop:mean-field}. We just need to compute the mean-field limit equation for the fluid velocity $v$ in Eq. \eqref{eq:dimensioless_NS} and this is done in Lem. \ref{prop:mean-field-force}.
\end{proof}

\begin{proof}[Proof of Lem. \ref{lem:flux}]
As in Prop. \ref{prop:mean-field},  we have that the density $f$ satisfies Eq. \eqref{eq:mean-fieldVNS1}. 
To obtain Eq. \eqref{eq:for rho} for $\rho_f$ we integrate this equation with respect to $\omega$.

To obtain Eq. \eqref{eq:for flux2} for the flux $j_f$ we multiply the kinetic equation \eqref{eq:mean-fieldVNS1} by $\omega$ and integrate over $\omega$:
\begin{eqnarray}
&&\partial_t j_f + \nabla_x \cdot (v\otimes j_f +a Q_f)+ \frac{a}{3}\nabla_x \rho_f\nonumber\\
&&\qquad\qquad+\int_{\mathbb{S}^2} \omega \nabla_\omega \cdot \Big[ P_{\omega^\perp}[\nu \bar\omega_f + (\lambda S(v)+A(v))\omega] \Big] f\, d\omega = D \int_{\mathbb{S}^2}\omega \Delta_\omega f \, d\omega,\qquad \label{eq: aux_j}
\end{eqnarray}
where we used that $(v\cdot \nabla_x) j_f= \nabla_x \cdot (v\otimes j_f)-(\nabla_x \cdot v) j_f$ and $\nabla_x\cdot v=0$.
Next, we recast the last two terms of this equation. Firstly, it holds that
\begin{equation} \label{eq:aux_laplacian}
D\int_{\mathbb{S}^2}\omega\Delta_\omega f\,d\omega= -2D\int_{\mathbb{S}^2}\omega f\, d\omega = -2Dj_f,
\end{equation}
using integration by parts and the fact that the laplacian in the sphere satisfies $\Delta_\omega (\omega \cdot u) = -2 (\omega\cdot u)$ for any vector $u\in\R^3$ (this is the spherical harmonic of degree 1 in $\mathbb{S}^2$, see \cite{frouvelle2012continuum}). Secondly, it holds that
$$
\int_{\mathbb{S}^2} \omega \nabla_\omega \cdot \Big[ P_{\omega^\perp}[\nu \bar\omega_f + (\lambda S(v)+A(v))\omega] \Big] f\, d\omega=- \int_{\mathbb{S}^2} P_{\omega^\perp}[\nu \bar\omega_f + (\lambda S(v)+A(v))\omega] f\, d\omega.
$$
A proof of the last equality can be found in Prop. \ref{prop:integral_divergence}. Substituting this last expression and Eq. \eqref{eq:aux_laplacian} into Eq. \eqref{eq: aux_j} we conclude Eq. \eqref{eq:for flux2} for $j_f$.
\end{proof}

\begin{proof}[Proof of Lemma \ref{prop:mean-field-force}]
We consider the following decomposition:
 $$-c\sum_{i=1}^N F_i \delta_{X_i(t)} dt= -\frac{\bar c}{N}\sum_{i=1}^N 
du_i \delta_{X_i(t)}= T_1^N+T_2^N,$$
where
\begin{eqnarray*}
T_1^N&=&-\frac{\bar c}{N}\sum^N_{i=1} dv(X_i(t),t)\,  \delta_{X_i(t)},\\
T_2^N &=& -\frac{\bar c}{N}\sum^N_{i=1}a\, d\omega_i(t) \delta_{X_i(t)}.
\end{eqnarray*}

For  the limit of  $T_1^N$ as $N\to\infty$, we have,  using \eqref{eq:dimensioless_NS} and ignoring the Dirac deltas (in Newtonian mechanics, self-forces are ignored to keep  the expressions finite) that
\begin{eqnarray*}
T_1^N &=& -\frac{\bar c}{N}dt\sum^N_{i=1}\left[ \partial_t v +(v\cdot \nabla_x) v+a (\omega_i(t)\cdot\nabla_x)v\right] (X_i(t),t) \,  \delta_{X_i(t)}\\
&=&  -\frac{\bar c}{N}dt\sum^N_{i=1}\int_{\mathbb{S}^2}\left[ \partial_t v +(v\cdot \nabla_x) v+a (\omega\cdot\nabla_x)v\right] (x,t) \,  \delta_{X_i(t)}\delta_{\omega_i(t)}\, d\omega\\
&=&-\bar c\,dt\int_{\mathbb{S}^2} \left[ \partial_t v +(v\cdot \nabla_x) v\right](x,t)\,  \lp \frac{1}{N}\sum^N_{i=1}\delta_{X_i(t)} \delta_{\omega_i(t)}\rp\, d\omega\\
&& -a\bar c\, dt\int_{\mathbb{S}^2}(\omega\cdot\nabla_x)v\lp \frac{1}{N}\sum^N_{i=1}\delta_{X_i(t)} \delta_{\omega_i(t)}\rp\, d\omega \\
&=&-\bar c\, dt\left[ \partial_t v +(v\cdot \nabla_x) v\right](x,t)\int_{\mathbb{S}^2} f^N(x,\omega,t)\, d\omega\\
&& -a \bar c\, dt\left[\lp\int_{\mathbb{S}^2}\omega f^N\, d\omega \rp \cdot \nabla_x \right] v\\
&\to &-\bar c\, dt \left[\rho_f(x,t) \left[ \partial_t v +(v\cdot \nabla_x) v\right]+a (j_f\cdot \nabla_x) v \right], \quad \mbox{as }N\to \infty.
\end{eqnarray*}

To compute the limit of $T^N_2$ we first recast the stochastic differential equation \eqref{eq:VNS_omega} for $\omega_i$, which is expressed in Stratonovich sense, in its equivalent It\^o's form (see \cite[Th. (30.14) p. 185]{rogers2000diffusions}, also  \cite{bolley2012mean}):
$$d\omega_i = P_{\omega_i^{\perp}} (\nu \overline{\omega_i}dt   +\big(\lambda S(v)+A(v)\big)\omega_i dt)+ \sqrt{2D}dB_{t}^{i}- 2D\omega_i\, dt.$$
With this, we consider the decomposition of $T^N_2$ into
\begin{eqnarray*}
T_2^N &=& -\frac{\bar c}{N}dt\sum^N_{i=1}a\, P_{\omega_i^\perp}\lp \nu \bar \omega_i + (\lambda S(v)+A(v)) \omega_i \rp \delta_{X_i(t)} \\
&&-\frac{\bar c}{N}\sum^N_{i=1}a\, P_{\omega_i^\perp}\lp \sqrt{2D} dB^i_t\rp\delta_{X_i(t)}\\
&&  + \frac{a\bar c}{N}2 D\omega_i\delta_{X_i(t)}\, dt\\
&=:& T_{21}^N+T_{22}^N+T_{23}^N.
\end{eqnarray*}

To compute the limit of $T_{21}^N$ we define
$$g^N(x,\omega,v) = \bar c P_{\omega^\perp}\lp \nu \bar \omega^N + (\lambda S(v)+A(v)) \omega \rp\,dt, $$
where 
$$\bar \omega^N (x)= \frac{J}{|J|},\mbox{ with } J(x)= \sum_{k=1}^N K\lp\frac{|x-X_k|}{R} \rp\omega_k.$$
With these notations we rewrite:
\begin{eqnarray*}
T_{21}^N &=& -\frac{1}{N}\sum_{i=1}^N a g^N(x,\omega_i(t),v(t)) \delta_{X_i(t)}\\
&=&-\frac{1}{N}\sum_{i=1}^N\int_{\mathbb{S}^2} a g^N(x,\omega,v(t)) \delta_{X_i(t)}\delta_{\omega_i(t)} \,d\omega\\
&=& -\int_{\mathbb{S}^2} a g^N(x,\omega,v(t)) f^N(x,\omega,t) d\omega\\
&\to &-\int_{\mathbb{S}^2} a g(x,\omega,v(t)) f(x,\omega,t) d\omega, \quad \mbox{ as }N\to \infty,
\end{eqnarray*}
where
$$ g(x,\omega,v) = \bar cP_{\omega^\perp}\lp \nu \bar \omega_f + (\lambda S(v)+A(v)) \omega \rp\, dt, $$
and where $\bar \omega_f$ is given in Eq. \eqref{eq:bar-omega-f}. This leads to
$$T_{21}^N\to -a\bar c \int_{\mathbb{S}^2}P_{\omega^\perp}\lp \nu \bar \omega_f + (\lambda S(v)+A(v)) \omega \rp\, f\, d\omega\,  dt. $$

For the term $T_{22}^N$ we have that
\begin{eqnarray*}
T_{22}^N&=&-\frac{\bar c}{N}\sum^N_{i=1}a\, P_{\omega_i^\perp}\lp \sqrt{2D} dB^i_t\rp\delta_{X_i(t)}\\
&=&-a \bar c\sqrt{2D}\int_{\mathbb{S}^2}\frac{1}{N}\sum_{i=1}^N \lp P_{\omega_i^\perp} dB^i_t\, \delta_{X_i(t)}\delta_{\omega_i(t)} \rp\, d\omega.
\end{eqnarray*}
For any test function $\varphi=\varphi(x,\omega)$ we have that
$$\langle P_{\omega_i^\perp} dB^i_t\, \delta_{X_i(t)}\delta_{\omega_i(t)} , \varphi \rangle = \varphi(X_i(t), \omega_i(t)) P_{\omega_i^\perp} dB^i_t,$$
where $\langle\cdot,\cdot \rangle$ denotes the duality brackets.
Now it holds that
\begin{equation} \label{eq:BM_projection}
\frac{1}{N}\sum_{i=1}^N\varphi(X_i(t), \omega_i(t)) P_{\omega_i^\perp} dB^i_t = \frac{1}{N}\sum_{i=1}^N\varphi(X_i(t), \omega_i(t) )\lp  dB^i_t- (\omega_i(t) \cdot dB^i_t ) \omega_i(t) \rp.
\end{equation}
The term  $dB_t=B_{t+dt}-B_t$ denotes Brownian motion increments, by the properties of Brownian motion, we have that $dB_t$ is normally distributed with mean 0 and variance $dt$, i.e., $dB_t \sim \mathcal{N}(0, dt)$. 
For fixed $t$, the following term is a gaussian random variable
$$\frac{1}{N}\sum_{i=1}^N\varphi(X_i(t), \omega_i(t) ) dB^i_t,$$
since it is the sum of independent gaussian random variables (notice that for fixed $t$, $\omega_i(t)$ takes a particular fixed value and it is not random). Particularly, 
its expectation $\mathbb{E}$ is zero:
$$\mathbb{E}\lp\frac{1}{N}\sum_{i=1}^{N}\varphi(X_i(t), \omega_i(t) )  dB^i_t \rp =\frac{1}{N} \mathbb{E}\lp dB^1_t \rp\sum_{i=1}^{N}\varphi(X_i(t), \omega_i(t) )=0,$$
(since $\mathbb{E}(dB^1_t)=0$) and, moreover, since the Brownian motions are independent (and hence $\mathbb{E}(dB^i_tdB^j_t)=0$ if $i\neq j$), it holds that the variance is zero too in the limit $N\to \infty$:
\begin{eqnarray*}
\mbox{Var}\lp\frac{1}{N}\sum_{i=1}^{N} \varphi(X_i(t), \omega_i(t) ) dB^i_t \rp &=& \mathbb{E}\left[\lp\frac{1}{N}\sum_{i=1}^{N} \varphi(X_i(t), \omega_i(t) )  dB^i_t \rp^2\right] \\
&&- \mathbb{E}^2 \lp\frac{1}{N}\sum_{i=1}^{N} \varphi(X_i(t), \omega_i(t) )  dB^i_t \rp\\
&=& \frac{1}{N^2}\mathbb{E}(dB^1_t)^2 \sum_{i=1}^{N}[\varphi(X_i(t), \omega_i(t) ) ]^2\\
&=& \frac{dt}{N^2}\sum_{i=1}^{N}[\varphi(X_i(t), \omega_i(t) ) ]^2\to 0 \mbox{ as }N\to\infty,
\end{eqnarray*}
where we used that $\mathbb{E}(dB^1_t)^2=dt$ and the fact that
$$\frac{1}{N}\sum_{i=1}^N [\varphi(X_i(t),\omega_i(t))]^2= \langle f^N, \varphi^2\rangle \to \langle f, \varphi^2\rangle <\infty, \quad \mbox{as }N\to\infty.$$
One can show analogously that the term $(\omega_i\cdot dB_t^i)\omega_i$ in Eq. \eqref{eq:BM_projection} satisfies the same properties since each component of $B^i_t$ is also a Brownian motion (in 1-dimension).
From this we conclude,  that
$$T_{22}^N(x,\omega,t)\to 0, \quad \mbox{ as } N\to \infty.$$

Finally, one can see following  analogous computations to the previous ones that
$$T_{23}^N \to 2a \bar cD\int_{\mathbb{S}^2} \omega f\, d\omega\, dt= 2a \bar cD j_f\, dt.$$
Putting all the terms together we conclude the proof of statement \eqref{eq:force-first-statement}.

\medskip
We prove next Eq. \eqref{eq:kinetic_v}. Using Eq. \eqref{eq:force-first-statement}, the mean-field limit for  the fluid velocity $v$ \eqref{eq:dimensioless_NS} corresponds to:
\begin{eqnarray*}
Re (\partial_t v + (v\cdot \nabla_x) v) &=& - \nabla_x p + \Delta_x v\\
&& -\bar c \rho_f [\partial_t v + (v\cdot \nabla_x) v] - a\bar c (j_f \cdot \nabla_x) v\\
&&  -a\bar c \int_{\mathbb{S}^2} P_{\omega^\perp}\big[ \nu \bar \omega_f + \big( \lambda S(v)+ A(v)\big) \omega\big] f \, d\omega + 2a\bar c D \, j_f\\
&& -b \nabla_x \cdot Q_f.
\end{eqnarray*}
Now, using that $\nabla_x\cdot v=0$, as well as $\nabla_x \cdot (v\otimes v)= (v \cdot \nabla_x)v$ and the equation for the density $\rho_f$ in Eq. \eqref{eq:for rho}, the previous expression is recast into
\begin{eqnarray*}
&&\hspace*{-2cm}\partial_t \big[(Re +\bar c \rho_f) v\big] +\nabla_x \cdot \big[ (Re + \bar c \rho_f) v\otimes v\big] \\
&=& - \nabla_x p + \Delta_x v\\
&&-a\bar c (\nabla_x\cdot j_f)v -  a\bar c (j_f \cdot \nabla_x) v\\
&&-a \bar c \int_{\mathbb{S}^2} P_{\omega^\perp}\big[ \nu \bar \omega_f + \big( \lambda S(v)+ A(v)\big) \omega\big] f \, d\omega + 2a\bar c D \, j_f\\
&& -b \nabla_x \cdot Q_f.
\end{eqnarray*}
Finally, from this expression we obtain Eq. \eqref{eq:kinetic_v} using Eq. \eqref{eq:for flux2} for the flux $j_f$  and the fact that $(\nabla_x \cdot j_f) v+ (j_f \cdot \nabla_x )v = \nabla_x \cdot (j_f\otimes v)$.

\end{proof}

\begin{remark}
Notice that Eq. \eqref{eq:mean-fieldVNS_v} for  the velocity of the fluid $v$  is in conservative form. From it, assuming that the domain has no boundaries and the solution vanishes at large distances, we conclude that 
$$\partial_t \int_{\R^3} (Re\, v + \bar c \rho_f v+ a \bar c j_f)\, dx=0,$$
and therefore, the total momentum of the system is conserved, as expected, given the conservation of the total momentum in the individual based model.
\end{remark}

\subsubsection{Macroscopic equations}

To obtain the macroscopic equations, we scale the mean-field limit system from Prop. \ref{prop:mean-field-NS} analogously as done in Sec. \ref{sec:mean_field}:
\begin{equation}
			\begin{cases}
			&\eps\big[\partial_t f^\eps + \nabla_x\cdot(u_{(f^\eps,v^\eps)}f^\eps) \big]+ \nabla_{\omega}\cdot\Big( \big[P_{\omega^{\perp}} \left\{\nu \overline{\omega}_{f^\eps} +\eps \lp\lambda S(v^\eps)+A(v^\eps)\rp\omega\right\}\big]f^\eps\Big) = D\Delta_{\omega}f^\eps, \\
			&u_{(f^\eps,v^\eps)}(x,\omega,t)=v^\eps(x,t) + a\omega, \label{eq:VNS-rescaled}\\
			& \bar \omega^\eps_{f}= \frac{J^\eps_{f}}{|J^\eps_{f}|}, \quad J^\eps_{f} = \int_{\mathbb{S}^2 \times \R^3} \omega K\lp\frac{|x-y|}{\sqrt{\eps}R} \rp f\, d\omega dy, \\
			&	\partial_t \big[(Re+ \bar c \rho_{f^\eps}) v^\eps + \bar c a j_{f^\eps} \big] + \nabla_x \cdot \big[ (Re + \bar c \rho_{f^\eps}) v^\eps\otimes v^\eps + a \bar c (v^\eps \otimes j_{f^\eps}+j_{f^\eps}\otimes v^\eps)  \big]\\
				&\qquad\qquad+\nabla_x\cdot\big[ (a^2 \bar c + b) Q_{f^\eps} \big]
				=-\nabla_x \lp p^\eps +\frac{a^2\bar c}{3}\rho_{f^\eps}\rp+ \Delta_x v^\eps,  \\
			&\nabla_x \cdot v^\eps =0.
			\end{cases} 
		\end{equation}
		
Finally, we conclude the

\begin{theorem}[Macroscopic equations at high Reynolds number]
\label{cor:macro-VNS}
Consider the scaled system \eqref{eq:VNS-rescaled}. When $\eps \to 0$, it holds (formally) that
$$(f^\eps, v^\eps, p^\eps)\to ( f = \rho M_\Omega, v, p),$$
where $\rho=\rho(x,t)\geq 0$ and $\Omega = \Omega (x,t)\in \mathbb{S}^{2}$ are the limits of the local density $\rho^\eps$ and 
the local mean orientation $\Omega_{f^\eps}$ in Eqs. \eqref{eq:density_local},\eqref{localaverageorient}, respectively.
Moreover, if the convergence is strong enough and $\Omega$, $\rho$, $v$ and $p$ are smooth enough, they satisfy the following coupled system:
\begin{subequations}\label{eq:macro_VNS}
\begin{numcases}{}
		 \partial_t\rho + \nabla_x\cdot(\rho U)  = 0, \label{eq:continuity_equationVNS} \\
		 \rho\partial_{t}\Omega + \rho (V\cdot\nabla_x)\Omega +\frac{a}{\kappa} P_{\Omega^{\perp}}\nabla_x\rho= \gamma P_{\Omega^{\perp}}\Delta_x(\rho\Omega) +\rho P_{\Omega^{\perp}} \lp\tilde\lambda S(v) +A(v)\rp \Omega , \label{eq:hydro2VNS} \\
			\partial_t \big[(Re+ \bar c \rho) v + c_1a\bar c  \rho \Omega \big] + \nabla_x \cdot \big[ (Re + \bar c \rho) v\otimes v + c_1a \bar c \rho (v \otimes \Omega+\Omega\otimes v)  \big]\nonumber\\
				\qquad\qquad+\nabla_x\cdot\big[ (a^2 \bar c + b) \mathcal{Q} \big]
				=-\nabla_x\tilde p + \Delta_x v, \label{eq:hydro3VNS}\\ 
		\nabla_x\cdot v=0 \label{eq:hydro4VNS}, 
	\end{numcases}
\end{subequations}
	where 
	\begin{eqnarray*}
		&& U = a c_1 \Omega +v, \quad
		V = a c_2 \Omega + v, \quad
		\mathcal{Q} = c_4\lp \Omega \otimes \Omega-\frac{1}{3}\emph{Id} \rp, \quad
		\tilde p = p+\frac{a^2\bar c}{3}\rho,
	\end{eqnarray*}
	and where the constants $c_1,\hdots, c_4$, $k_0$, $\tilde \lambda$ and $\gamma$ are given by Eqs. \eqref{eq:c1_th}--\eqref{eq:c4}, \eqref{eq:def_k0}, \eqref{eq:lambda0}, respectively; and $\kappa=\nu/D$.

\end{theorem}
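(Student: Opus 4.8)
The plan is to recycle the proof of Theorem~\ref{th:hydro_limit} for the agent unknowns $\rho$ and $\Omega$, and then to pass to the limit directly in the fluid equation, which now carries the inertial Navier--Stokes terms. First I would note that the kinetic part of the scaled system~\eqref{eq:VNS-rescaled} — the equation for $f^\eps$ together with $u_{(f^\eps,v^\eps)} = v^\eps + a\omega$ and the rescaled current $J^\eps_{f}$ — coincides formally with the one appearing in~\eqref{eq:syst1}; the only difference with respect to \eqref{eq:scaledsys}--\eqref{eq:scaledsys_end} lies in the equation for $v^\eps$. Hence Lemma~\ref{lem:expansion} applies unchanged, the kinetic equation can be put in the form~\eqref{eq:scaledsys} with the same operators $Q$ and $\mathcal{F}_{(f,v)}$, and letting $\eps\to 0$ forces $Q(f)=0$, so $f = \rho M_\Omega$ by~\eqref{eq:kernelQ}, with $\rho=\rho(x,t)\ge 0$ and $\Omega=\Omega(x,t)\in\mathbb{S}^2$ the limits of $\rho^\eps$ and $\Omega_{f^\eps}$.

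Next, I would obtain the continuity equation~\eqref{eq:continuity_equationVNS} exactly as in Theorem~\ref{th:hydro_limit}: integrate the kinetic equation over $\omega$, divide by $\eps$, use that $\psi=1$ is a collision invariant so that the right-hand side integrates to zero, and invoke the consistency relation~\eqref{eq:consistency relationship} to get $j_{\rho M_\Omega}=c_1\rho\Omega$ and hence the transport velocity $U=ac_1\Omega+v$. For the orientation equation I would multiply the kinetic equation by $\frac{1}{\eps}\psi_{f^\eps}$, with $\psi_{f^\eps}$ the Generalised Collision Invariant of Proposition~\ref{lem:GCI}, integrate in $\omega$, use~\eqref{eq:keyCGI} to cancel the collision term, and take $\eps\to 0$; this yields $P_{\Omega^\perp}X=0$ with $X=X_1+X_2+X_3+X_4$ decomposed precisely as in the proof of Theorem~\ref{th:hydro_limit}. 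None of the four pieces $X_1,\dots,X_4$ involves the fluid equation, so their evaluation is unchanged and, after dividing by $\kappa C_0$, one recovers Eq.~\eqref{eq:hydro2VNS} with the same constants, $V=ac_2\Omega+v$, $\gamma=k_0\nu\big(c_2+\tfrac{2}{\kappa}\big)$ and $\tilde\lambda=\lambda\lambda_0$ as in~\eqref{eq:lambda0}.

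For the fluid equation I would pass to the limit directly in the Navier--Stokes equation of the scaled system~\eqref{eq:VNS-rescaled}, which carries no power of $\eps$. Assuming $v^\eps\to v$ and $p^\eps\to p$ in topologies strong enough to pass the quadratic products $v^\eps\otimes v^\eps$, $\rho_{f^\eps}v^\eps$, $v^\eps\otimes j_{f^\eps}$, $j_{f^\eps}\otimes v^\eps$ to the limit, it remains only to identify the limiting moments of $f^\eps=\rho M_\Omega+o(1)$: $\rho_{f^\eps}\to\rho$, $j_{f^\eps}\to j_{\rho M_\Omega}=c_1\rho\Omega$ by~\eqref{eq:consistency relationship}, and $Q_{f^\eps}\to\int_{\mathbb{S}^2}\big(\omega\otimes\omega-\tfrac{1}{3}\Id\big)\rho M_\Omega\,d\omega=c_4\rho\big(\Omega\otimes\Omega-\tfrac{1}{3}\Id\big)$, the last identity being exactly the integral computed at the end of the proof of Theorem~\ref{th:hydro_limit}, where the spurious isotropic part $c_5\Id$ was shown to vanish. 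Substituting these three limits into the mean-field fluid equation~\eqref{eq:mean-fieldVNS_v} gives Eq.~\eqref{eq:hydro3VNS} with $\tilde p=p+\tfrac{a^2\bar c}{3}\rho$, and the incompressibility constraint $\nabla_x\cdot v=0$ passes trivially.

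I do not expect any obstacle in the algebra, since all of it is already contained either in the proof of Theorem~\ref{th:hydro_limit} (the $\rho$ and $\Omega$ moments) or in Proposition~\ref{prop:mean-field-NS} (the conservative fluid equation). The one genuine difficulty — shared with Theorems~\ref{th:hydro_limit} and~\ref{th:macro_volume_exclusion} — is purely at the level of rigour: there are no \emph{a priori} estimates guaranteeing that $f^\eps\to\rho M_\Omega$, $v^\eps\to v$, $p^\eps\to p$ strongly enough for all the nonlinear terms — in particular the fluid inertia $v^\eps\otimes v^\eps$ and $\rho_{f^\eps}v^\eps$, which are absent in the Stokes case — to converge to the products of the limits. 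This is why the statement is only formal, assuming the convergence strong and the limit smooth; under that assumption the argument above is complete.
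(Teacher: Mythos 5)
Your proposal is correct and follows essentially the same route as the paper, which disposes of this theorem in two lines by noting that all the moment computations are already contained in the proof of Theorem~\ref{th:hydro_limit} and that the only new ingredient is $j_{f^\eps}\to c_1\rho\Omega$ (your identification of the limits of $\rho_{f^\eps}$ and $Q_{f^\eps}$ is the same computation). Your version merely spells out in more detail what the paper leaves implicit, including the correct observation that the evaluation of $X_1,\dots,X_4$ depends on $v$ only through incompressibility and not through which fluid equation $v$ satisfies.
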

The proof of this result is direct from the one of Th. \ref{th:hydro_limit} since most of the terms are computed there. For the extra terms that depend on $j_{f^\eps}$ one just needs to remember that $j_{f^\eps} \to c_1 \rho \Omega$ as $\eps \to 0$.

\begin{remark}[Discussion of the results in Th. \ref{cor:macro-VNS}.] 
\label{rem:discussion_SOH-NV}
 Notice firstly that when $\bar c=0$ and $Re=0$, we recover the SOH-Stokes system \eqref{eq:macro_SOH_Stokes} as expected, since in that case the individual based model corresponds to the Vicsek-Stokes coupling \eqref{eq:Vicsek_Stokes}), see Rem \ref{rem:derivation_VS}. The interpretations of the equations for $\rho$ and $\Omega$ are the same  as for the SOH-Stokes, since the equations are the same. The difference with respect to the SOH-Stokes system is Eq. \eqref{eq:hydro3VNS}. This equation gives the evolution over time of the total momentum of the fluid and the particles corresponding to:
$$(Re + \bar c \rho) v + c_1 a \bar c \rho \Omega.$$
The second term in  \eqref{eq:hydro3VNS} corresponds to the momentum flux and it is divided in two contributions. Firstly,
$$ (Re+ \rho) v\otimes v$$
corresponds to the momentum flux generated by the fluid and by the passive transport of the particles by the fluid. Secondly, the term corresponding to
$$c_1 a \bar c  \rho (v \otimes \Omega + \Omega\otimes v)$$
gives the momentum flux through the exchange between fluid velocity $v$ and particles velocity $c_1 a \Omega$. Notice that the momentum flux is given by a symmetric matrix.  The term $(a^2 \bar c ~+~b )\mathcal{Q}$ gives an extra-stress tensor coming from the active nature of the particles and splits into a contribution coming from the dipolar force exerted by the particles (corresponding to the contribution given by the constant $b$), on the one hand, and from their net force (corresponding to the contribution given by the product $a^2\bar c$), on the other hand.
\end{remark}

\section{Conclusions}
 \label{sec:conclusion} 
  
In this paper we have presented the macroscopic derivation of a coupled Vicsek-Stokes system. This coupling describes collective motion in a fluid in a low Reynolds number regime. The fluid is described by Stokes system and the collective motion by the Vicsek model, which represents phenomenologically the interactions between neighbouring agents mediated by the fluid.  The coupling is obtained by taking into account the interactions between the agents and the fluid. This involves, particularly, Jeffery's equation that expresses the influence of a viscous fluid on spheroidal particles, on the one hand, and the force exerted by the agents on the fluid due to the dipolar force created by their self-propulsion motion, on the other hand.  

The coarse-grained model corresponds to a  Self-Organised Hydrodynamics and Stokes coupling. Interestingly, we have shown that Jeffery's equation is coarse-grained into Jeffery's equation but with a different value for the shape parameter. The  linear stability analysis shows that both pullers and pushers have unstable modes, but the instability of pullers disappears in the case of rod-like particles.

At the end, we have extended the Vicsek-Stokes coupling  into two directions: firstly, we take into account volume exclusion to avoid concentration effects in the dynamics; secondly,  we consider a finite Reynolds number and finite particle inertia regime to model systems where the particles' mass and size is large such as fish.

Finally, these results open many exciting paths to be explored, for example, one could consider the coupling of the Vicsek model with other types of fluid dynamics (given e.g. by Darcy's law, Brinkmann law, non-Newtonian fluids). Also, it would be interesting to  perform numerical simulations of the dynamics to confirm the stability analysis and to apply these models to the investigation of real-life systems like sperm and bacterial suspensions.

\bigskip

{\bf Acknowledgements:}
PD  acknowledges  support  by  the  Engineering  and  Physical  Sciences  Research  Council
(EPSRC) under grants no.  EP/M006883/1 and EP/P013651/1, by the Royal Society and the Wolfson Foundation through a Royal Society Wolfson Research Merit Award no.  WM130048 and by the National Science Foundation (NSF) under grant no.  RNMS11-07444 (KI-Net).  PD is
on leave from CNRS, Institut de Math\'ematiques de Toulouse, France.\\
S.M.A. was supported by the British Engineering and Physical Research Council under grant ref: EP/M006883/1.\\
HY acknowledges the support by Division of Mathematical Sciences [KI-Net NSF RNMS grant number 1107444]; DFG Cluster of Excellence
Production technologies for high-wage countries [grant number DFG STE2063/1-1], [grant
number HE5386/13,14,15-1]. HY and FV gratefully acknowledges the hospitality of the
Department of Mathematics, Imperial College London, where part of this research was
conducted.

\section*{Data statement}

No new data was generated in the course of this research.

\appendix

\section{Some proofs and properties}

\begin{proof}[Proof of Prop. \ref{prop:conservation_momentum}]
The total momentum is given by $\int\rho_0 v(x,t) dx + \sum^N_{i=1} m_i u_i$. It is a direct computation to check that its derivative is zero. The total angular momentum for the system is given by:
\begin{equation}
\int{x \times (\rho_{0}\,v(x,t))\, dx} + \sum_{i=1}^{N}(X_i\times F_i).
\end{equation}

We have that
\begin{align*}
\frac{d}{dt}\lp\int{\rho_{0}x\times v\,dx}\rp &= \int{\rho_{0}x\times \partial_{t}v\,dx}\\
&= - \int{\rho_{0}x\times\nabla_x\cdot(v\otimes v)\,dx} -  \int{x\times \nabla_x p\,dx} + \int{\sigma x\times \Delta v\,dx}\\
&~~~ - \sum_{i=1}^{N}\int{x\times F_i\delta_{X_i}\,dx} - \sum_{i=1}^{N}\int{x\times \lp\omega_i\otimes \omega_i -\frac{1}{3}\Id \rp\nabla_x\delta_{X_i}\,dx}\\
&=: I_1+I_2+I_3+I_4+I_5.
\end{align*}
One can check directly with the help of the L\'evy-Civita symbol to compute the vector products (and integration by parts in some cases) that
\begin{eqnarray*}
&&I_1=I_2=I_3=I_5=0,\\
&& I_4= -\sum^N_{i=1} X_i \times F_i.
\end{eqnarray*}
 Notice, that $I_5=0$ thanks to  $(\omega_i\otimes\omega_i-\Id/3)$ being a symmetric matrix. Therefore, the only term that does not vanish is $I_4$ and it is compensated by the angular momentum of the agents.
\end{proof}

\begin{proposition} \label{prop:integral_divergence}
For any vector $u\in R^3$, it holds
$$\int_{\mathbb{S}^2}\omega \nabla_\omega \cdot(P_{\omega^\perp}u)\, d\omega = -\int_{\mathbb{S}^2}P_{\omega^\perp}u \, d\omega.$$
\end{proposition}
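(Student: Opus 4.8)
The plan is to compute both sides of the identity explicitly, using the elementary spherical-calculus formulas already invoked in the proof of Theorem~\ref{th:hydro_limit}, and to verify that they agree. First I would rewrite the integrand on the left: for any vector $A$ independent of $\omega$ one has $\nabla_\omega\cdot(P_{\omega^\perp}A)=-2\,A\cdot\omega$, the identity quoted from Ref.~\cite{frouvelle2012continuum} and used repeatedly above. Taking $A=u$ gives $\nabla_\omega\cdot(P_{\omega^\perp}u)=-2\,(u\cdot\omega)$, so that
$$
\int_{\mathbb{S}^2}\omega\,\nabla_\omega\cdot(P_{\omega^\perp}u)\,d\omega
=-2\int_{\mathbb{S}^2}(u\cdot\omega)\,\omega\,d\omega
=-2\lp\int_{\mathbb{S}^2}\omega\otimes\omega\,d\omega\rp u .
$$

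Next I would treat the right-hand side. Expanding $P_{\omega^\perp}u=u-(\omega\cdot u)\,\omega$ and using the paper's normalization $\int_{\mathbb{S}^2}d\omega=1$ together with $\int_{\mathbb{S}^2}\omega\otimes\omega\,d\omega=\frac13\Id$ — the latter because this integral is a symmetric matrix invariant under every rotation, hence proportional to $\Id$, with the constant fixed by taking the trace and using $|\omega|^2=1$ — one gets
$$
-\int_{\mathbb{S}^2}P_{\omega^\perp}u\,d\omega=-u+\lp\int_{\mathbb{S}^2}\omega\otimes\omega\,d\omega\rp u=-u+\frac13 u=-\frac23 u,
$$
which matches the left-hand side $-2\cdot\frac13 u=-\frac23 u$ computed above, proving the claim.

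A slightly slicker route, which does not even require knowing $\int_{\mathbb{S}^2}\omega\otimes\omega\,d\omega$, is to argue componentwise via the divergence theorem on the closed manifold $\mathbb{S}^2$: for each coordinate function $\omega_j=e_j\cdot\omega$ one has $\nabla_\omega\cdot(\omega_j\,P_{\omega^\perp}u)=\omega_j\,\nabla_\omega\cdot(P_{\omega^\perp}u)+(\nabla_\omega\omega_j)\cdot(P_{\omega^\perp}u)$, and since $\nabla_\omega\omega_j=P_{\omega^\perp}e_j$ while $P_{\omega^\perp}u$ is tangent to $\mathbb{S}^2$, the last term equals $(P_{\omega^\perp}e_j)\cdot(P_{\omega^\perp}u)=e_j\cdot P_{\omega^\perp}u=(P_{\omega^\perp}u)_j$; integrating over $\mathbb{S}^2$ makes the left side vanish and leaves $\int_{\mathbb{S}^2}\omega_j\,\nabla_\omega\cdot(P_{\omega^\perp}u)\,d\omega=-\int_{\mathbb{S}^2}(P_{\omega^\perp}u)_j\,d\omega$, which is exactly the desired identity written out in components. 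There is no real obstacle here — the statement is a two-line computation once the quoted spherical identities are in hand; the only point deserving a moment's care is the consistent use of the normalization $\int_{\mathbb{S}^2}d\omega=1$ adopted in the paper.
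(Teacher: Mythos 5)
Your proposal is correct, and in fact contains two valid proofs. Your second, ``slicker'' route is essentially the paper's own argument: the paper dots the left-hand side with an arbitrary vector $q\in\R^3$, integrates by parts on the closed manifold $\mathbb{S}^2$, and uses $\nabla_\omega(q\cdot\omega)=P_{\omega^\perp}q$ together with $P_{\omega^\perp}q\cdot P_{\omega^\perp}u=q\cdot P_{\omega^\perp}u$; taking $q=e_j$ as you do is the same computation written in components. Your first route is genuinely different: rather than integrating by parts, you evaluate both sides explicitly via $\nabla_\omega\cdot(P_{\omega^\perp}u)=-2\,(u\cdot\omega)$ and $\int_{\mathbb{S}^2}\omega\otimes\omega\,d\omega=\tfrac13\mathrm{Id}$ (with the paper's normalization $\int_{\mathbb{S}^2}d\omega=1$), obtaining $-\tfrac23 u$ on each side. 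This buys a concrete sanity check of the constants but relies on knowing the second moment of the uniform measure and on the specific normalization, whereas the integration-by-parts argument is normalization-free and generalizes immediately to weighted integrals (which is closer in spirit to how such identities are used elsewhere in the paper). Both computations are correct as written.
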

\begin{proof}
This can be proven as follows: for any vector $q\in\R^3$
\begin{eqnarray*}
q\cdot \int_{\mathbb{S}^2}\omega \nabla_\omega \cdot(P_{\omega^\perp}u)\, d\omega &=& \int_{\mathbb{S}^2}(q\cdot\omega) \nabla_\omega \cdot(P_{\omega^\perp}u)\, d\omega\\
&=& -\int_{\mathbb{S}^2}\nabla_\omega(q\cdot \omega) \cdot (P_{\omega^\perp}u)\, d\omega \\
&=&-\int_{\mathbb{S}^2}P_{\omega^\perp}q \cdot (P_{\omega^\perp}u)\, d\omega \\
&=& -q \cdot \int_{\mathbb{S}^2} (P_{\omega^\perp}u)\, d\omega
\end{eqnarray*}
given that $\nabla_\omega (\omega\cdot q)= P_{\omega^\perp}q$ and $P_{\omega^\perp}q \cdot P_{\omega^\perp} u= q \cdot P_{\omega^\perp}u$ for any pair of vectors $q,u\in \R^3$. From this we conclude the result.
\end{proof}

\bibliographystyle{abbrv}
\bibliography{biblio}

\begin{thebibliography}{10}

\bibitem{baskaran2009statistical}
A.~Baskaran and M.~C. Marchetti.
\newblock Statistical mechanics and hydrodynamics of bacterial suspensions.
\newblock {\em Proc. Nat. Acad. Sci. USA}, 106(37):15567--15572, 2009.

\bibitem{bellomo2016multiscale}
N.~Bellomo, A.~Bellouquid, and N.~Chouhad.
\newblock From a multiscale derivation of nonlinear cross-diffusion models to
  {K}eller--{S}egel models in a {N}avier--{S}tokes fluid.
\newblock {\em Math. Models Methods Appl. Sci.}, 26(11):2041--2069, 2016.

\bibitem{bolley2012mean}
F.~Bolley, J.~A. Ca{\~n}izo, and J.~A. Carrillo.
\newblock Mean-field limit for the stochastic {V}icsek model.
\newblock {\em Appl. Math. Lett.}, 25(3):339--343, 2012.

\bibitem{carrillo2016analysis}
J.~A. Carrillo, Y.-P. Choi, and T.~K. Karper.
\newblock On the analysis of a coupled kinetic-fluid model with local alignment
  forces.
\newblock 33(2):273--307, 2016.

\bibitem{cercignani2013mathematical}
C.~Cercignani, R.~Illner, and M.~Pulvirenti.
\newblock {\em The mathematical theory of dilute gases}, volume 106.
\newblock Springer Science \& Business Media, 2013.

\bibitem{chen2013global}
X.~Chen and J.-G. Liu.
\newblock Global weak entropy solution to {D}oi--{S}aintillan--{S}helley model
  for active and passive rod-like and ellipsoidal particle suspensions.
\newblock {\em J. Differential Equations}, 254(7):2764--2802, 2013.

\bibitem{creppy2016symmetry}
A.~Creppy, F.~Plourabou{\'e}, O.~Praud, X.~Druart, S.~Cazin, H.~Yu, and
  P.~Degond.
\newblock Symmetry-breaking phase transitions in highly concentrated semen.
\newblock {\em ‎J. R. Soc. Interface}, 13(123):20160575, 2016.

\bibitem{czirok1996formation}
A.~Czir{\'o}k, E.~Ben-Jacob, I.~Cohen, and T.~Vicsek.
\newblock Formation of complex bacterial colonies via self-generated vortices.
\newblock {\em Phys. Rev. E}, 54(2):1791, 1996.

\bibitem{degond2004macroscopic}
P.~Degond.
\newblock Macroscopic limits of the {B}oltzmann equation: a review.
\newblock In {\em Modeling and Computational Methods for Kinetic Equations},
  pages 3--57. Springer, 2004.

\bibitem{degond2015macroscopic}
P.~Degond, G.~Dimarco, T.~B.~N. Mac, and N.~Wang.
\newblock Macroscopic models of collective motion with repulsion.
\newblock {\em Commun. Math. Sci.}, 13(6), 2015.

\bibitem{degond2015phase}
P.~Degond, A.~Frouvelle, and J.~Liu.
\newblock Phase transitions, hysteresis, and hyperbolicity for self-organized
  alignment dynamics.
\newblock {\em Arch. Ration. Mech. Anal.}, 216(1):63--115, 2015.

\bibitem{degond2016new}
P.~Degond, A.~Frouvelle, and S.~Merino-Aceituno.
\newblock A new flocking model through body attitude coordination.
\newblock {\em To appear, Math. Models Methods Appl. Sci.}, 2016.

\bibitem{degond2017quaternions}
P.~Degond, A.~Frouvelle, S.~Merino-Aceituno, and A.~Trescases.
\newblock Quaternions in collective dynamics.
\newblock {\em arXiv preprint arXiv:1701.01166}, 2017.

\bibitem{degond2015continuum}
P.~Degond, A.~Manhart, and H.~Yu.
\newblock A continuum model for nematic alignment of self-propelled particles.
\newblock {\em To appear in Disc. Cont. Dyn. Syst. B}, 2015.

\bibitem{degond2008continuum}
P.~Degond and S.~Motsch.
\newblock Continuum limit of self-driven particles with orientation
  interaction.
\newblock {\em Math. Models Methods Appl. Sci.}, 18(supp01):1193--1215, 2008.

\bibitem{degond2015multi}
P.~Degond and L.~Navoret.
\newblock A multi-layer model for self-propelled disks interacting through
  alignment and volume exclusion.
\newblock {\em Math. Models Methods Appl. Sci.}, 25(13):2439--2475, 2015.

\bibitem{doi}
M.~Doi.
\newblock Molecular dynamics and rheological properties of concentrated
  solutions of rodlike polymers in isotropic and liquid crystalline phases.
\newblock {\em J. Polym. Sci., Polym. Phys. Ed.}, 19:243, 1981.

\bibitem{doibook}
M.~Doi and S.~F. Edwards.
\newblock {\em The theory of polymer dynamics}, volume~73.
\newblock Oxford University Press, 1988.

\bibitem{elgeti2015physics}
J.~Elgeti, R.~G. Winkler, and G.~Gompper.
\newblock Physics of microswimmers—single particle motion and collective
  behavior: a review.
\newblock {\em Rep. Progr. Phys.}, 78(5):056601, 2015.

\bibitem{ezhilan2013instabilities}
B.~Ezhilan, M.~J. Shelley, and D.~Saintillan.
\newblock Instabilities and nonlinear dynamics of concentrated active
  suspensions.
\newblock {\em Phys. Fluids}, 25(7):070607, 2013.

\bibitem{frouvelle2012continuum}
A.~Frouvelle.
\newblock A continuum model for alignment of self-propelled particles with
  anisotropy and density-dependent parameters.
\newblock {\em Math. Models Methods Appl. Sci.}, 22(07):1250011, 2012.

\bibitem{han2014microscopic}
J.~Han, Y.~Luo, W.~Wang, P.~Zhang, and Z.~Zhang.
\newblock From microscopic theory to macroscopic theory: a systematic study on
  modeling for liquid crystals.
\newblock {\em Arch. Ration. Mech. Anal.}, 2014.

\bibitem{happel2012low}
J.~Happel and H.~Brenner.
\newblock {\em Low Reynolds number hydrodynamics: with special applications to
  particulate media}, volume~1.
\newblock Springer Science \& Business Media, 2012.

\bibitem{hohenegger2010stability}
C.~Hohenegger and M.~J. Shelley.
\newblock Stability of active suspensions.
\newblock {\em Phys. Rev. E}, 81(4):046311, 2010.

\bibitem{hsu2002stochastic}
E.~P. Hsu.
\newblock {\em Stochastic analysis on manifolds}, volume~38.
\newblock American Mathematical Soc., 2002.

\bibitem{jeffery}
G.~B. Jeffery.
\newblock The motion of ellipsoidal particles immersed in a viscous fluid.
\newblock In {\em Proc. R. Soc. A}, volume 102, pages 161--179. The Royal
  Society, 1922.

\bibitem{Hydro_limit}
N.~{Jiang}, L.~{Xiong}, and T.~{Zhang}.
\newblock {Hydrodynamic limits of the kinetic self-organized models}.
\newblock {\em ArXiv e-prints}, Aug. 2015.

\bibitem{koch2011collective}
D.~L. Koch and G.~Subramanian.
\newblock Collective hydrodynamics of swimming microorganisms: living fluids.
\newblock {\em Annu. Rev. Fluid Mech.}, 43:637--659, 2011.

\bibitem{liu2011coupled}
J.-G. Liu and A.~Lorz.
\newblock A coupled chemotaxis-fluid model: global existence.
\newblock 28(5):643--652, 2011.

\bibitem{marchetti2013hydrodynamics}
M.~C. Marchetti, J.~Joanny, S.~Ramaswamy, T.~Liverpool, J.~Prost, M.~Rao, and
  R.~A. Simha.
\newblock Hydrodynamics of soft active matter.
\newblock {\em Rev. Mod. Phys.}, 85(3):1143, 2013.

\bibitem{ramaswamy2010mechanics}
S.~Ramaswamy.
\newblock The mechanics and statistics of active matter.
\newblock {\em Annu. Rev. Condens. Matter Phys.}, 1(1):323--345, 2010.

\bibitem{rogers2000diffusions}
L.~C.~G. Rogers and D.~Williams.
\newblock {\em Diffusions, Markov processes and martingales: Volume 2, It{\^o}
  calculus}, volume~2.
\newblock Cambridge {U}niversity {P}ress, 2000.

\bibitem{saintillan2}
D.~Saintillan and M.~J. Shelley.
\newblock Instabilities and pattern formation in active particle suspensions:
  kinetic theory and continuum simulations.
\newblock {\em Phys. Rev. Lett.}, 100(17):178103, 2008.

\bibitem{saintillan1}
D.~Saintillan and M.~J. Shelley.
\newblock Instabilities, pattern formation, and mixing in active suspensions.
\newblock {\em Phys. Fluids}, 20(12):123304, 2008.

\bibitem{sone2012kinetic}
Y.~Sone.
\newblock {\em Kinetic theory and fluid dynamics}.
\newblock Springer Science \& Business Media, 2012.

\bibitem{vicsek2012collective}
T.~Vicsek and A.~Zafeiris.
\newblock Collective motion.
\newblock {\em Phys. Rep.}, 517(3):71--140, 2012.

\bibitem{wang2015small}
W.~Wang, P.~Zhang, and Z.~Zhang.
\newblock The small {D}eborah number limit of the {D}oi-{O}nsager equation to
  the {E}ricksen-{L}eslie equation.
\newblock {\em Comm. Pure Appl. Math.}, 68(8):1326--1398, 2015.

\bibitem{weinan2006molecular}
E.~Weinan and P.~Zhang.
\newblock A molecular kinetic theory of inhomogeneous liquid crystal flow and
  the small {D}eborah number limit.
\newblock {\em Methods Appl. Anal.}, 13(2):181--198, 2006.

\bibitem{zhang2010collective}
H.~Zhang, A.~Be’er, E.-L. Florin, and H.~L. Swinney.
\newblock Collective motion and density fluctuations in bacterial colonies.
\newblock {\em Proc. Natl. Acad. Sci. USA}, 107(31):13626--13630, 2010.

\bibitem{zhang2016inviscid}
Q.~Zhang.
\newblock On the inviscid limit of the three dimensional incompressible
  chemotaxis-{N}avier-{S}tokes equations.
\newblock {\em Nonlinear Anal. Real World Appl.}, 27:70--79, 2016.

\end{thebibliography}
\end{document}